\DeclareFontFamily{U}{mathx}{}
\DeclareFontShape{U}{mathx}{m}{n}{<-> mathx10}{}
\DeclareSymbolFont{mathx}{U}{mathx}{m}{n}
\DeclareMathAccent{\widehat}{0}{mathx}{"70}
\DeclareMathAccent{\widecheck}{0}{mathx}{"71}
\title{
An LiGME Regularizer of Designated Isolated Minimizers \\
- An Application to Discrete-Valued Signal Estimation
}
\theoremstyle{plain}
\newtheorem{theorem}{Theorem}
\newtheorem{lemma}{Lemma}
\newtheorem{proposition}{Proposition}
\newtheorem{fact}{Fact}
\theoremstyle{definition}
\newtheorem{problem}{Problem}
\newtheorem{scheme}{Scheme}
\newtheorem{remark}{Remark}
\newtheorem{claim}{Claim}
\newtheorem{definition}{Definition}
\newtheorem{example}{Example}
\theoremstyle{remark}
\def\d{\mathbf{d}}
\def\x{\mathbf{x}}
\def\xstar{\mathbf{x}^\star}
\def\y{\mathbf{y}}
\def\uu{\mathbf{u}}
\def\v{\mathbf{v}}
\def\w{\mathbf{w}}
\def\p{\mathbf{p}}
\def\dfrak{\bm{\mathfrak{d}}}
\def\A{\mathbf{A}}
\def\B{\mathbf{B}}
\def\I{\mathbf{I}}
\def\O{\mathbf{O}}
\def\M{\mathbf{M}}
\def\eps{\bm{\varepsilon}}
\def\ones{\mathbf{1}}
\def\SOAV{\text{SOAV}}
\def\LiGME{\text{LiGME}}
\def\cLiGME{\text{cLiGME}}
\def\qam{\text{qam}}
\def\psk{\text{psk}}
\newcommand{\myskip}[1]{
  \thickmuskip=#1\thickmuskip
  \medmuskip=#1\medmuskip
  \thinmuskip=#1\thinmuskip
  \arraycolsep=#1\arraycolsep
}
\newcommand{\abs}[1]{\lvert{#1}\rvert}
\newcommand{\babs}[1]{\bigl\lvert{#1}\bigr\rvert}
\newcommand{\norm}[1]{\lVert{#1}\rVert}
\newcommand{\bnorm}[1]{{\bigl\lVert{#1}\bigr\rVert}}
\newcommand{\opnorm}[1]{\lVert{#1}\rVert_{\mathrm{op}}}
\newcommand{\ev}[1]{\left\langle{#1}\right\rangle}
\newcommand{\argmin}{\operatornamewithlimits{argmin}}
\newcommand{\Prox}{\operatorname{Prox}}
\newcommand{\Id}{\mathrm{Id}}
\newcommand{\Fix}{\operatorname{Fix}}
\newcommand{\sgn}{\operatorname{sgn}}
\newcommand{\conv}{\operatorname{conv}}
\newcommand{\ri}{\operatorname{ri}}
\newcommand{\dom}{\operatorname{dom}}
\newcommand{\ran}{\operatorname{ran}}
\newcommand{\Span}{\operatorname{span}}
\newcommand{\aff}{\operatorname{aff}}
\newcommand{\Int}{\operatorname{int}}
\newcommand{\gra}{\operatorname{gra}}
\newcommand{\mybigtimes}{{\textstyle\bigtimes}}
\newcommand{\Hilbert}[2]{\left({#1},\ev{\cdot,\cdot}_{#2},\norm{\cdot}_{#2}\right)}
\begin{document}
\maketitle

\setcounter{footnote}{0}
\renewcommand{\thefootnote}{\arabic{footnote}}
\begin{summary}
  For a regularized least squares estimation of discrete-valued signals, we propose a Linearly involved Generalized Moreau Enhanced (LiGME) regularizer, as a nonconvex regularizer, of designated isolated minimizers.
  The proposed regularizer is designed as a Generalized Moreau Enhancement (GME) of the so-called sum-of-absolute-values (SOAV) convex regularizer.
  Every candidate vector in the discrete-valued set is aimed to be assigned to an isolated local minimizer of the proposed regularizer while the overall convexity of the regularized least squares model is maintained.
  Moreover, a global minimizer of the proposed model can be approximated iteratively by using a variant of the constrained LiGME (cLiGME) algorithm.
  To enhance the accuracy of the proposed estimation, we also propose a pair of simple modifications, called respectively an iterative reweighting and a generalized superiorization.
  Numerical experiments demonstrate the effectiveness of the proposed model and algorithms in a scenario of multiple-input multiple-output (MIMO) signal detection.
\end{summary}
\begin{keywords}
  LiGME regularizer,
  Designated isolated minimizer,
  Discrete-valued signal estimation,
  Overall convexity,
  MIMO signal detection
\end{keywords}

\section{Introduction}
\label{sec:introduction}

In this paper, we consider an inverse problem (see Problem~\ref{prob:discrete} below) for a discrete-valued signal estimation from a noisy linear observation of the target signal whose entries belong to a finite \emph{alphabet} $\mathfrak{A}$.
Such a problem arises widely in the field of signal processing, including generalized spatial modulation~\cite{liu_denoising_2013}, multiuser detection~\cite{sasahara_multiuser_2017,zhu_exploiting_2010}, and cognitive spectrum sensing~\cite{axell_spectrum_2012}, as well as discrete-valued image reconstruction~\cite{nikolova_estimation_1998,duarte_single-pixel_2008,bioglio_sparse_2014,tuysuzoglu_graph-cut_2015,sarangi_measurement_2022}.
\begin{problem}[A discrete-valued signal estimation problem]\label{prob:discrete}
\begin{equation}\label{eq:linear_regression}
  \text{Find~}
  \xstar\in\mathfrak{A}^{N}\subset\mathbb{R}^{N}
  \text{~such~that~}
  \y=\A\xstar+\eps,
\end{equation}
where
$\mathfrak{A}\coloneqq\left\{a^{\ev{1}},a^{\ev{2}},\ldots,a^{\ev{L}}\right\}\subset\mathbb{R}$ is a finite alphabet,
$\y\in\mathbb{R}^M$ is an observed vector,
$\A\in\mathbb{R}^{M\times N}$ is a known matrix,
and $\eps\in\mathbb{R}^M$ is noise
(Note: the complex version of this problem can also be reformulated as Problem~\ref{prob:discrete} essentially via simple $\mathbb{C}\rightleftarrows\mathbb{R}^{2}$ translation; see Section~\ref{sec:application}).
\end{problem}

The maximization of the likelihood function over $\mathfrak{A}^{N}$ is an ideal strategy for Problem~\ref{prob:discrete} if the statistical information of the target signal and noise is available.
However, such a naive maximization problem is NP-hard~\cite{verdu_computational_1989}, and its computational complexity increases exponentially as the signal dimension $N$ grows.
To circumvent this issue, in the last decades, regularized least squares models~\cite{zhu_exploiting_2010,wu_high-throughput_2016,kudeshia_total_2019,chen_manifold_2017,elgabli_two-stage_2017,hayakawa_convex_2017,iimori_robust_2021} have gained attention, where the regularizer is designed strategically based on a prior knowledge regarding the target $\xstar$.
We summarize 
the main idea of such approaches below:
\begin{scheme}[A scheme for~\eqref{eq:linear_regression} via regularized least squares]
  \label{scheme}
  \,
  \textbf{Step 1 (Solve a regularized least squares model):}
  \begin{equation}\label{eq:regularized_least_squares}
    \text{Find~}\x^{\lozenge}\in
    \argmin_{\x\in\mathcal{C}}J_{\Theta}(\x)\coloneqq\frac12\norm{\y-\A\x}_2^2+\mu\Theta(\x),
  \end{equation}
  where the constraint set $\mathcal{C}\subset\mathbb{R}^{N}$ is chosen usually as a connected superset (e.g., the convex hull ) of $\mathfrak{A}^{N}$, $\Theta:\mathbb{R}^N\to\mathbb{R}$ is a regularizer, and $\mu\in\mathbb{R}_{++}$ is a regularization parameter.

  \noindent\textbf{Step 2 (Find a nearest discrete-valued signal):}
  With the tentative estimate $\x^{\lozenge}\in\mathcal{C}\subset\mathbb{R}^{N}$, compute the final estimate $\x^{\blacklozenge}\in\mathfrak{A}^{N}$ of $\xstar$ in~\eqref{eq:linear_regression} as
  \begin{equation}\label{eq:projection}
    \x^{\blacklozenge}= P_{\mathfrak{A}^{N}}\bigl(\x^{\lozenge}\bigr)\in\argmin_{\dfrak\in\mathfrak{A}^{N}}\bnorm{\dfrak-\x^{\lozenge}}_2,
  \end{equation}
  where $P_{\mathfrak{A}^{N}}:\mathbb{R}^N\to\mathfrak{A}^{N}:\x\mapsto P_{\mathfrak{A}^{N}}(\x)$ is defined to choose one of nearest vectors in $\mathfrak{A}^{N}$ from $\x$.

\end{scheme}

For Problem~\ref{prob:discrete}, the regularizer $\Theta$ in~\eqref{eq:regularized_least_squares} is desired to enjoy a certain attractivity into the set $\mathfrak{A}^{N}$.
More precisely, $\Theta$ is expected to have isolated local minimizers (see Definition~\ref{dfn:isolated}) at points in $\mathfrak{A}^{N}$.
For design of such a contrastive regularizer $\Theta$, we must consider the computational tractability of finding a global minimizer of $J_{\Theta}$ in~\eqref{eq:regularized_least_squares} over $\mathcal{C}$ as well.

Toward such a regularizer, we first review existing regularizers for utilization in~\eqref{eq:regularized_least_squares}.
The so-called \emph{sum-of-absolute-values (SOAV)} regularizer~\cite{nagahara_discrete_2015,hayakawa_convex_2017}
{
  \myskip{0.3}
  \begin{align}
     & \hspace{-2em}({\x}\coloneqq({x}_1,{x}_2,\ldots,{x}_{N})^{\top}\in\mathbb{R}^{N}) \\
     & \label{eq:theta_SOAV}
    \hspace{-2em}\Theta_{\SOAV}(\x)
    \coloneqq\sum_{l=1}^{L} \bnorm{\x-a^{\ev{l}}\ones_{N}}_{\bm{\omega}^{\ev{l}};1}
    \coloneqq \sum_{l=1}^{L}\sum_{n=1}^{N} \omega^{\ev{l}}_{n}{\babs{x_{n}-a^{\ev{l}}}}
  \end{align}
}%
has been proposed with weighting vectors
$\bm{\omega}^{\ev{l}} \coloneqq \left(\omega^{\ev{l}}_{1},\omega^{\ev{l}}_{2},\ldots,\omega^{\ev{l}}_{N}\right)^{\top} \in \mathbb{R}_{++}^{N}\ (l=1,2,\ldots,L)$
satisfying
$\sum_{l=1}^{L}\omega^{\ev{l}}_{n}=1$ $(n=1,2,\ldots,N)$,
where, for $\uu\coloneqq(u_{1},u_{2},\ldots,u_{N})^{\top}\in\mathbb{R}^{N}$, 
$\norm{\uu}_{\bm{\omega}^{\ev{l}};1} \coloneqq \sum_{n=1}^{N} \omega^{\ev{l}}_{n} \abs{u_{n}}$
stands for its weighted $\ell_{1}$-norm associated with
$\bm{\omega}^{\ev{l}}$.
Although $J_{\Theta_{\SOAV}}$ is convex from the convexities of $\frac{1}{2}\norm{y-A\cdot}_{2}^{2}$ and $\Theta_{\SOAV}$ in~\eqref{eq:theta_SOAV},
Fig.~\ref{fig:theta_LiGME_simple}~\subref{fig:theta_SOAV} suggests that penalization by convex $\Theta_{\SOAV}$ is not contrastive enough to express the alphabet as its local minimizers because any point $\dfrak\in\mathfrak{A}^{N}$ is never positioned at isolated local minimizer of $\Theta_{\SOAV}$.
Indeed, convex regularizers cannot have multiple isolated local minimizers because any
convex combination of local minimizers is a global minimizer.
For the discrete-valued signal estimation, nonconvex regularizers that are more contrastive than $\Theta_{\SOAV}$ have been desired for use in~\eqref{eq:regularized_least_squares} of~Step~1 (see, e.g., Fig.~\ref{fig:theta_LiGME_simple}~\subref{fig:theta_LiGME_id} and Fig.~\ref{fig:theta_LiGME_practical}).
A nonconvex variant $\Theta_{\SOAV}^{\ev{p}}(\x) \coloneqq \sum_{l=1}^{L} \bnorm{\x - a^{\ev{l}}\ones_{N}}_{\bm{\omega}^{\ev{l}};p}^{p}$ $(0 < p < 1)$
of $\Theta_{\SOAV}$ has also been used in~\cite{hayakawa_discrete-valued_2019}, where $\norm{\uu}_{\bm{\omega}^{\ev{l}};p}\coloneqq\left(\sum_{n=1}^{N}\omega^{\ev{l}}_{n}\abs{u_{n}}^{p}\right)^{\frac{1}{p}}$
and $\Theta_{\SOAV}^{\ev{1}}$ reproduces $\Theta_{\SOAV}$.
However, any algorithm producing a convergent sequence to a global minimizer of $J_{\Theta_{\SOAV}^{\ev{p}}}$ has not yet been established mainly because of the severe nonconvexities of $\Theta_{\SOAV}^{\ev{p}}$ and $J_{\Theta_{\SOAV}^{\ev{p}}}$.
An exceptional example, which enjoys both contrastiveness of $\Theta$ and computational tractability of minimization of $J_{\Theta}$, is found in~\cite{nikolova_estimation_1998} for a special case of Problem~\ref{prob:discrete} with $L=2$ and $\A$ of full column rank.
\cite{nikolova_estimation_1998} can be seen as a pioneering work in the study of convexity-preserving nonconvex regularizers~\cite{blake_visual_1987,zhang_nearly_2010,selesnick_sparse_2017,abe_linearly_2020,yata_constrained_2022}.

\begin{figure}[t]
  \def\contourheightA{3.5cm}
  \centering
  \begin{minipage}[t]{0.49\linewidth}
    \hspace*{-2em}
    \centering
    \includegraphics[keepaspectratio, height=\contourheightA]{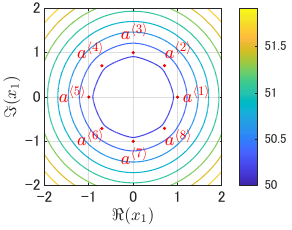}
    \subcaption{$\Theta_{\SOAV}$}
    \label{fig:theta_SOAV}
  \end{minipage}\hfill
  \begin{minipage}[t]{0.49\linewidth}
    \hspace*{-1em}
    \centering
    \includegraphics[keepaspectratio, height=\contourheightA]{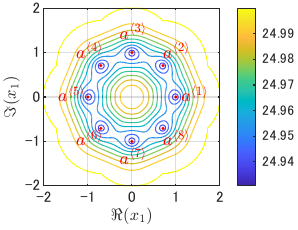}
    \subcaption{$\Theta_{\LiGME}$ with
      $\myskip{0.2}\B^{\ev{l}}=(1/\sqrt{L})\I_{50}$}
    \label{fig:theta_LiGME_id}
  \end{minipage}
  \caption{
    Illustrations of the values of $\Theta$ in the case where $\mathfrak{A}\coloneqq\left\{a^{\ev{l}}\coloneqq\exp[j(l-1)\pi/4]\mid l=1,2,\ldots,8\eqqcolon L\right\}\subset\mathbb{C}\equiv\mathbb{R}^{2}$ and $\omega^{\ev{l}}_{n}=1/8$ $(l=1,2,\ldots,8;\,n=1,2,\ldots 50\eqqcolon N)$.
    (a)~$\Theta_{\SOAV}(\x)$,
    (b)~a proposed nonconvex enhancement $\Theta_{\LiGME}$ of $\Theta_{\SOAV}$ by GME~\eqref{eq:GME} with $\B^{\ev{l}}=(1/\sqrt{L})\I_{50}$ $(l=1,2,\ldots,8)$.
    For visualization, we set $x_{n}=0$ $(n=2,3,\ldots,50)$.
  }
  \label{fig:theta_LiGME_simple}
\end{figure}

\begin{figure}[t]
  \def\contourheightB{3.45cm}
  \centering
  \begin{minipage}[t]{0.49\linewidth}
    \hspace*{-2em}
    \centering
    \includegraphics[keepaspectratio, height=\contourheightB]{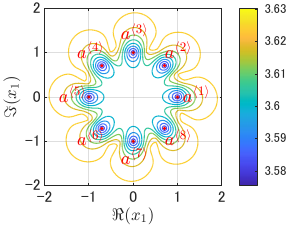}
    \subcaption{Overall view}
    \label{fig:theta_LiGME_practical_zentai}
  \end{minipage}
  \hfill
  \begin{minipage}[t]{0.49\linewidth}
    \hspace*{-1em}
    \centering
    \includegraphics[keepaspectratio, height=\contourheightB]{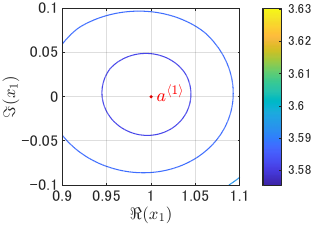}
    \subcaption{Enlarged view around $a_1$}
    \label{fig:theta_LiGME_practical_kakudai}
  \end{minipage}
  \caption{Illustrations of the values of $\Theta_{\LiGME}(\x)$ in~Eq.~\eqref{eq:theta_LiGME} with $\mu=0.01$ and $\bigl(\B^{\ev{l}}\bigr)_{l=1}^{L}$ in~\eqref{eq:Bl} under the setting of Section~\ref{sec:numerical_experiment}.
  $\mathfrak{A}\subset\mathbb{C}$ and $\bm{\omega}^{\ev{l}}\in\mathbb{R}_{++}^{50}$ $(l=1,2,\ldots,L)$ are the same as those in Fig.~\ref{fig:theta_LiGME_simple}.
  For visualization, we set $x_{n}=0$ $(n=2,3,\ldots,50)$.
  }
  \label{fig:theta_LiGME_practical}
\end{figure}

In this paper, along such a modern view~\cite{selesnick_sparse_2017,abe_linearly_2020,yata_constrained_2022}, we propose a contrastive nonconvex regularizer
\begin{equation} \label{eq:theta_LiGME}
  \Theta_{\LiGME}(\x)\coloneqq
  \sum_{l=1}^{L}\bigl(\norm{\cdot}_{\bm{\omega}^{\ev{l}};1}\bigr)_{\B^{\ev{l}}}\bigl(\x-a^{\ev{l}}\ones_{N}\bigr),
\end{equation}
where, for each $l=1,2,\ldots,L$, $\bigl(\norm{\cdot}_{\bm{\omega}^{\ev{l}};1}\bigr)_{\B^{\ev{l}}}$ is a nonconvex enhancement (known as the \emph{Generalized Moreau Enhancement} (GME)) of $\norm{\cdot}_{\bm{\omega}^{\ev{l}};1}$
with a tunable matrix
$\B^{\ev{l}}$ (see~\eqref{eq:GME} in Section~\ref{sec:cLiGME}). 
$\Theta_{\LiGME}$ reproduces $\Theta_{\SOAV}$ in~\eqref{eq:theta_SOAV} with $\B^{\ev{l}}=\O$ (zero matrix).
As will be shown in Theorem~\ref{thm:isolated} in Section~\ref{sec:proposed_regularizer}, $\Theta_{\LiGME}$ with a simple design of tuning matrices $\bigl(\B^{\ev{l}}\bigr)_{l=1}^{L}$ has designated isolated local minimizers at every vector in $\mathfrak{A}^{N}$, which tells us that $\Theta_{\LiGME}$ can work, in~\eqref{eq:regularized_least_squares}, as a contrastive regularizer desired for Problem~\ref{prob:discrete} (see Fig.~\ref{fig:theta_LiGME_simple} for visual comparison).
Moreover, as seen in Lemma~\ref{lem:LiGME_regularizer} in Section~\ref{sec:proposed_regularizer}, the proposed regularizer can be seen as a special instance of the \emph{LiGME regularizer}~\cite{abe_linearly_2020}.
Hence, by imposing the \emph{overall convexity condition} for the LiGME model~\cite[Prop. 1]{abe_linearly_2020}, $J_{\Theta_{\LiGME}}$ can be made convex with a strategic choice of $\bigl(\B^{\ev{l}}\bigr)_{l=1}^{L}$
(see~\eqref{eq:occ_special} and~\eqref{eq:choice_Bl} in Section~\ref{sec:proposed_algorithm}).
Notably, Fig.~\ref{fig:theta_LiGME_practical} suggests that each point in $\mathfrak{A}$ is located almost exactly at an isolated local minimizer of $\Theta_{\LiGME}$ even under the overall convexity condition of $J_{\Theta_{\LiGME}}$.

Regarding the tractability of the minimization of $J_{\Theta_{\LiGME}}$ under the overall convexity of $J_{\Theta_{\LiGME}}$, we propose a variant of cLiGME algorithm~\cite{kitahara_multi-contrast_2021,yata_constrained_2022,yata_imposing_2024} (see Algorithm~\ref{alg}) that produces a convergent sequence to a global minimizer of $J_{\Theta_{\LiGME}}$ over a simple closed convex set $\mathcal{C}$.
We note that the proposed convergence analysis (see Proposition~\ref{prop:cLiGME_algorithm} and Theorem~\ref{thm:proposed_algorithm}) is shown under a weaker condition than the condition imposed in~\cite{abe_linearly_2020,kitahara_multi-contrast_2021,yata_constrained_2022,yata_imposing_2024}.
Numerical experiments in a scenario of MIMO signal detection demonstrate that the proposed cLiGME model outperforms the SOAV model~\cite{hayakawa_convex_2017}, i.e., the model~\eqref{eq:regularized_least_squares} with $\Theta\coloneqq\Theta_{\SOAV}$ in~\eqref{eq:theta_SOAV}.

We also propose two heuristic modifications (see \ref{appendix:modification}) in the process of the proposed algorithm (Algorithm~\ref{alg}): \emph{iterative reweighting} (inspired by~\cite{hayakawa_reconstruction_2018}) and \emph{generalized superiorization} (inspired by~\cite{censor_perturbation_2010,fink_superiorized_2023}).
The iterative reweighting adaptively adjusts the weights $\omega^{\ev{l}}_{n}$, used in $\Theta_{\LiGME}$, according to the previous estimate, and the generalized superiorization introduces strategic perturbations to move the latest estimate closer to one of nearest vectors in $\mathfrak{A}^{N}$ at each iteration.
The effectiveness of these modifications is confirmed via numerical experiments.

The preliminary short version is presented%
\footnote{
  Compared to~\cite{shoji_discrete-valued_2024}, the following three discussions are newly included in this paper:
  (i) The proposed regularizer $\Theta_{\LiGME}$ has designated isolated local minimizers under certain conditions (see~Definition~\ref{dfn:isolated} and~Theorem~\ref{thm:isolated});
  (ii) $\Theta_{\LiGME}$ is a special instance of the LiGME regularizers~\cite{abe_linearly_2020} (see~Lemma~\ref{lem:LiGME_regularizer});
  (iii) A convergence analysis of the proposed variant of cLiGME algorithm is given in Proposition~\ref{prop:cLiGME_algorithm}.

}
in~\cite{shoji_discrete-valued_2024}.

\noindent\textbf{Notation}.\,
$\mathbb{N}, \mathbb{R}, \mathbb{R}_{+},\mathbb{R}_{++}$ and $\mathbb{C}$ denote respectively the set of all positive integers, all real numbers, all nonnegative real numbers, all positive real numbers and all complex numbers.
$j\in\mathbb{C}$ stands for the imaginary unit, and $\Re(\cdot)$ and  $\Im(\cdot)$ stand respectively for the real and imaginary parts of complex number, vector, and matrix.

Let%
\footnote{
  In the rest of this paper, we use basic terminology from functional analysis (for example, $\mathcal{B}(\mathcal{H},\mathcal{K})$ denotes the set of all linear operator from $\mathcal{H}$ to $\mathcal{K}$) to maintain a balance between generality and concreteness while ensuring clarity.
  Readers who are not familiar with these concepts may refer to, e.g.,~\cite{luenberger_optimization_1997,yamada_kougaku_2009}.
}
$\Hilbert{\mathcal{H}}{\mathcal{H}}$ and $\Hilbert{\mathcal{K}}{\mathcal{K}}$ be finite dimensional real Hilbert spaces.
The set of all bounded linear operators from $\mathcal{H}$ to $\mathcal{K}$ is denoted by $\mathcal{B}(\mathcal{H},\mathcal{K})$.
$\Id\in\mathcal{B}(\mathcal{H},\mathcal{H})$ and $O_{\mathcal{H}}\in\mathcal{B}(\mathcal{H},\mathcal{H})$ respectively stand for the identity operator and the zero operator.
For $\mathcal{L}\in\mathcal{B}(\mathcal{H},\mathcal{K})$, $\mathcal{L}^{\ast}\in\mathcal{B}(\mathcal{K},\mathcal{H})$ denotes the adjoint operator of $\mathcal{L}$ (i.e., $(\forall (x,y)\in\mathcal{H}\times\mathcal{K})$ $\ev{\mathcal{L}x,y}_{\mathcal{K}}=\ev{x,\mathcal{L}^{\ast} y}_{\mathcal{H}}$).
For a self-adjoint operator $\mathcal{L}\in\mathcal{B}(\mathcal{H},\mathcal{H})$ (i.e., $\mathcal{L}^{\ast}=\mathcal{L}$),
we express the positive definiteness and the positive semidefiniteness of $\mathcal{L}$ respectively by
$\mathcal{L}\succ O_{\mathcal{H}}$ and $\mathcal{L}\succeq O_{\mathcal{H}}$.
Any $\mathcal{L}\succ O_{\mathcal{H}}$ defines a new Hilbert space
$\Hilbert{\mathcal{H}}{\mathcal{L}}$
equipped with an inner product
$\ev{\cdot,\cdot}_{\mathcal{L}}:\mathcal{H}\times\mathcal{H}\to\mathbb{R}:(x,y)\mapsto\ev{x,\mathcal{L}y}_{\mathcal{H}}$ and its induced norm
$\norm{\cdot}_{\mathcal{L}}:\mathcal{H}\to\mathbb{R}:x\mapsto\sqrt{\ev{x,x}_{\mathcal{L}}}$.
The range of $\mathcal{L}\in\mathcal{B}(\mathcal{H},\mathcal{K})$ is $\ran(\mathcal{L})=\{\mathcal{L}x\in\mathcal{K}\mid x\in\mathcal{H}\}$.
The operator norm of $\mathcal{L}\in\mathcal{B}(\mathcal{H},\mathcal{K})$ is denoted by $\opnorm{\mathcal{L}}\coloneqq\sup_{x\in\mathcal{H},\norm{x}_{\mathcal{H}}\leq1}\norm{\mathcal{L}x}_{\mathcal{K}}$.

A set $S\subset\mathcal{H}$ is said to be convex if $(1-\theta)x+\theta y\in S$ for all $(x,y,\theta)\in S\times S\times [0,1]$.
For a set $S\subset\mathcal{H}$,
(i)~the linear span of $S$ is
$\Span(S)\coloneqq \bigl\{\sum_{i=1}^{n}\alpha_{i} x_{i}\in\mathcal{H}\mid n\in\mathbb{N}, \alpha_{i}\in\mathbb{R},x_{i}\in S\bigr\}$,
(ii)~the affine hull of $S$ is
$\aff(S)\coloneqq x+\Span(S-x)$, where $x\in S$, $S-x\coloneqq \{w-x\in\mathcal{H}\mid w\in S\}$, and $x+\Span(S-x)=y+\Span(S-y)$ holds true for all $x,y\in S$,
(iii)~the convex hull $\conv(S)$ of $S$ is the smallest convex set containing $S$, and
(iv)~the interior of $S$ is
$\Int(S)\coloneqq \{x\in S\mid (\exists\rho>0)\,B(x;\rho)\subset S\}$,
where $B(x;\rho)\coloneqq\{y\in\mathcal{H}\mid\norm{y-x}_{\mathcal{H}}<\rho\}$ is an open ball of radius $\rho>0$ and center $x\in\mathcal{H}$.
The relative interior of a convex set $S$ is $\ri(S)\coloneqq\{x\in S\mid (\exists\rho>0)\,B(x;\rho)\cap \aff(S)\subset S\}$.
Clearly, $\Int(S)\subset\ri(S)$ holds true for every convex set $S\subset\mathcal{H}$.
The power set of $\mathcal{H}$, denoted by $2^{\mathcal{H}}$, is the collection of all subsets of $\mathcal{H}$, i.e., $2^{\mathcal{H}}\coloneqq\{S\mid S\subset\mathcal{H}\}$.

Multiple Hilbert spaces $\Hilbert{\mathcal{H}^{\ev{l}}}{\mathcal{H}^{\ev{l}}}$ $(l=1,2,\ldots,L)$
can be used to build a new Hilbert space
$\bm{\mathcal{H}}\coloneqq\bigl(\mybigtimes_{l=1}^{L}\mathcal{H}^{\ev{l}}\bigr)\coloneqq
  \mathcal{H}^{\ev{1}}\times\mathcal{H}^{\ev{2}}\times\cdots\times\mathcal{H}^{\ev{L}}=
  \bigl\{\bigl(x^{\ev{l}}\bigr)_{l=1}^{L}\coloneqq \bigl(x^{\ev{1}},x^{\ev{2}},\ldots,x^{\ev{L}}\bigr)\,\bigl|\, x^{\ev{l}}\in\mathcal{H}^{\ev{l}}\,(l=1,2,\ldots,L)\bigr\}$
equipped with (i)~the addition
$\bm{\mathcal{H}}\times\bm{\mathcal{H}}\to\bm{\mathcal{H}}:
  \left(\bigl(x^{\ev{l}}\bigr)_{l=1}^{L},\bigl(y^{\ev{l}}\bigr)_{l=1}^{L}\right)
  \mapsto
  \bigl(x^{\ev{l}}+y^{\ev{l}}\bigr)_{l=1}^{L}$,
(ii)~the scalar multiplication
$\mathbb{R}\times\bm{\mathcal{H}}\to\bm{\mathcal{H}}:
  \left(\alpha,\bigl(x^{\ev{l}}\bigr)_{l=1}^{L}\right)
  \mapsto
  \bigl(\alpha x^{\ev{l}}\bigr)_{l=1}^{L}$ and
(iii)~the inner product
$\bm{\mathcal{H}}\times\bm{\mathcal{H}}\to\mathbb{R}:
  \left(\bigl(x^{\ev{l}}\bigr)_{l=1}^{L},\bigl(y^{\ev{l}}\bigr)_{l=1}^{L}\right)
  \mapsto
  \ev{\bigl(x^{\ev{l}}\bigr)_{l=1}^{L},\bigl(y^{\ev{l}}\bigr)_{l=1}^{L}}_{\bm{\mathcal{H}}}\coloneqq \sum_{l=1}^{L}\ev{x^{\ev{l}},y^{\ev{l}}}_{\mathcal{H}^{\ev{l}}}$ 
  and its induced norm
$\bm{\mathcal{H}}\to\mathbb{R}:\x\mapsto\norm{\x}_{\bm{\mathcal{H}}}\coloneqq\sqrt{\ev{\x,\x}_{\bm{\mathcal{H}}}}$.

For discussion in Euclidean space, we use boldface letters to express vectors and normal font letters to express scalars.
$\mathbb{R}^{n}$ should be understood as a Euclidean space with the standard inner product $\ev{\cdot,\cdot}$ and the Euclidean norm $\norm{\cdot}_{2}$.
For a matrix expression $\M\in\mathbb{R}^{m\times n}$ of a linear operator $M\in\mathcal{B}(\mathbb{R}^{n},\mathbb{R}^{m})$, $\M^{\top}\in\mathbb{R}^{n\times m}$ denotes the transpose of $\M$ and corresponds to the adjoint $M^{\ast}\in\mathcal{B}(\mathbb{R}^{m},\mathbb{R}^{n})$ of $M$.
The symbols $\O_{m\times n}\in\mathbb{R}^{m\times n}$, $\I_{n}\in\mathbb{R}^{n\times n}$ and $\ones_{n}\in\mathbb{R}^{n}$ respectively stand for the zero matrix, the identity matrix and the all-one vector.
$\sgn:\mathbb{R}\to\{-1,0,1\}$ stands for the sign function
(Note: $\sgn(0)$ is understood as $0$).

\section{Preliminary}
\label{sec:preliminary}

\subsection{Selected elements of convex analysis and characterization of local minimizers}
\label{sec:convex_analysis}

A function $f:\mathcal{H}\to (-\infty,\infty]$ is said to be
(i)~\emph{proper} if $\dom f\coloneqq\{x\in\mathcal{H}\mid f(x)<\infty\}\neq \emptyset$,
(ii)~\emph{lower semicontinuous}
if $ \{ x\in \mathcal{H} \mid f(x)\leq \alpha\}$
is closed for every $\alpha\in \mathbb{R}$, and
(iii)~\emph{convex} if
$f(\theta x + (1-\theta) y)\leq \theta f(x)+(1-\theta) f(y) $
for all $(x,y,\theta)\in\mathcal{H}\times\mathcal{H}\times[0,1]$.
The set of all proper lower semicontinuous convex functions is denoted by $\Gamma_{0}(\mathcal{H})$.
For example, the following \emph{indicator function} $\iota_{C}$ of a nonempty closed convex set $C\subset\mathcal{H}$ belongs to $\Gamma_{0}(\mathcal{H})$:
\begin{equation}
  \label{eq:indicator}
  \iota_{C}:\mathcal{H}\to(-\infty,\infty]:
  x\mapsto
  \begin{cases}
    0,       & \text{if }x\in C,    \\
    +\infty, & \text{if }x\notin C.
  \end{cases}
\end{equation}

$f\in\Gamma_{0}(\mathcal{H})$ is said to be \emph{coercive} if $f(x)\to\infty$ as $\norm{x}_{\mathcal{H}}\to\infty$.
It is well-known that~\cite[Prop. 11.15]{bauschke_convex_2017} $f\in\Gamma_{0}(\mathcal{H})$ has a (global) minimizer (see Definition~\ref{dfn:isolated} (a)) over a closed convex subset $C\subset\mathcal{H}$, i.e., $f+\iota_{C}$ has a minimizer in $\mathcal{H}$, if $f$ is coercive and $(C\cap \dom f)\neq\emptyset$.

\begin{definition}[Characterizations of minimizers{~\cite[Sec. 2.1]{nocedal_numerical_2006}}]
  \label{dfn:isolated}
  Let
  $f:\mathcal{H} \to (-\infty,\infty]$ be
  a proper function.
  \begin{enumerate}[label=(\alph*), itemsep=1ex, leftmargin=*]
    \item
          A point $x^{\heartsuit}\in \dom f\subset\mathcal{H}$ is said to be a \emph{(global) minimizer} of $f$ if
          $(\forall x\in \mathcal{H})\,f(x^{\heartsuit})\leq f(x)$.
    \item
          A point
          $x^{\heartsuit} \in f$
          is said to be a \emph{local minimizer} of
          $f$
          if there exists an open neighborhood
          $\mathcal{N}_{x^{\heartsuit}} \subset \mathcal{H}$
          of
          $x^{\heartsuit}$
          such that
          $(\forall x \in \mathcal{N}_{x^{\heartsuit}})\,f(x^{\heartsuit})\leq f(x)$.
    \item
          A point
          $x^{\heartsuit} \in f$
          is said to be an \emph{isolated local minimizer} of
          $f$
          if there exists an open neighborhood
          $\mathcal{N}_{x^{\heartsuit}} \subset \mathcal{H}$
          of
          $x^{\heartsuit}$
          such that
          $x^{\heartsuit}$
          is the only local minimizer of
          $f$
          in
          $\mathcal{N}_{x^{\heartsuit}}$.
  \end{enumerate}
\end{definition}

\begin{lemma}[Equivalent conditions of isolated local minimizers]\label{lemma:separable}
  Let
  $f:\mathcal{H}\coloneqq \mathbb{R}^{N} \to (-\infty,\infty]$ be
  a proper function.
  Assume that
  $f$
  is the sum of univariate proper functions
  $f_{n}:\mathbb{R} \to (-\infty,\infty]\ (n=1,2,\ldots,N)$,
  i.e.,
  \begin{equation}
    (\x=(x_{1},x_{2},\ldots,x_{N})^{\top} \in \mathbb{R}^{N})
    \quad
    f(\x) = \sum_{n=1}^{N} f_{n}(x_{n}).
  \end{equation}
  Then, the following hold:
  \begin{enumerate}[label=(\alph*)]
    \item \label{item:separable:local}
          $\x^{\heartsuit}\coloneqq(x^{\heartsuit}_{1},x^{\heartsuit}_{2},\ldots,x^{\heartsuit}_{n}) \in f$
          is a local minimizer of
          $f$
          if and only if every
          $x^{\heartsuit}_{n}\ (n=1,2,\ldots,N)$
          is a local minimizer of
          $f_{n}$.
    \item \label{item:separable:isolated}
          $\x^{\heartsuit}\coloneqq(x^{\heartsuit}_{1},x^{\heartsuit}_{2},\ldots,x^{\heartsuit}_{n}) \in f$
          is an isolated local minimizer of
          $f$
          if and only if every
          $x^{\heartsuit}_{n}\ (n=1,2,\ldots,N)$
          is an isolated local minimizer of
          $f_{n}$.
  \end{enumerate}
  \begin{proof}
    See~\ref{appendix:proof_separable}.
  \end{proof}
\end{lemma}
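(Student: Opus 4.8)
The plan is to establish part~\ref{item:separable:local} first and then deduce part~\ref{item:separable:isolated} from it, since the separable structure $f(\x)=\sum_{n=1}^{N}f_{n}(x_{n})$ makes local minimality factor coordinatewise, and this factorization is exactly the engine both parts need. Throughout, I note that $\x^{\heartsuit}\in\dom f$ forces $f_{n}(x^{\heartsuit}_{n})<\infty$ for every $n$, so each $x^{\heartsuit}_{n}\in\dom f_{n}$ and the claims are meaningful.

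For part~\ref{item:separable:local}, the ``if'' direction is a gluing argument: choosing radii $\rho_{n}>0$ with $f_{n}(x^{\heartsuit}_{n})\leq f_{n}(x_{n})$ on each interval $(x^{\heartsuit}_{n}-\rho_{n},x^{\heartsuit}_{n}+\rho_{n})$ and setting $\rho\coloneqq\min_{n}\rho_{n}$, on the open ball $B(\x^{\heartsuit};\rho)$ (whose points have each coordinate within $\rho$ of $x^{\heartsuit}_{n}$) one sums the $N$ coordinatewise inequalities to obtain $f(\x)=\sum_{n}f_{n}(x_{n})\geq\sum_{n}f_{n}(x^{\heartsuit}_{n})=f(\x^{\heartsuit})$. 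For the ``only if'' direction I would use single-coordinate perturbations: fix an index $m$, and for $x_{m}$ near $x^{\heartsuit}_{m}$ consider the point that agrees with $\x^{\heartsuit}$ except in the $m$-th entry; this point lies in the given neighborhood of $\x^{\heartsuit}$, and since all other summands cancel, the defining inequality for $\x^{\heartsuit}$ collapses to $f_{m}(x^{\heartsuit}_{m})\leq f_{m}(x_{m})$, proving $x^{\heartsuit}_{m}$ is a local minimizer of $f_{m}$.

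For part~\ref{item:separable:isolated} I would run both directions through part~\ref{item:separable:local}. For ``if'', pick radii $\rho_{n}>0$ such that $x^{\heartsuit}_{n}$ is the unique local minimizer of $f_{n}$ in $(x^{\heartsuit}_{n}-\rho_{n},x^{\heartsuit}_{n}+\rho_{n})$; on the box neighborhood $\prod_{n}(x^{\heartsuit}_{n}-\rho_{n},x^{\heartsuit}_{n}+\rho_{n})$, any local minimizer $\y$ of $f$ has, by part~\ref{item:separable:local}, each $y_{n}$ a local minimizer of $f_{n}$ lying in the respective interval, forcing $y_{n}=x^{\heartsuit}_{n}$ and hence $\y=\x^{\heartsuit}$. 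For ``only if'' I would argue by contraposition: if some $x^{\heartsuit}_{m}$ is not an isolated local minimizer of $f_{m}$, take a sequence of distinct local minimizers $y^{(k)}_{m}\to x^{\heartsuit}_{m}$ of $f_{m}$, and lift each to $\y^{(k)}\in\mathbb{R}^{N}$ by keeping every other coordinate at $x^{\heartsuit}_{n}$. Since $\x^{\heartsuit}$ is a local minimizer of $f$, part~\ref{item:separable:local} makes each $x^{\heartsuit}_{n}$ ($n\neq m$) a local minimizer of $f_{n}$, so the same lemma makes every $\y^{(k)}$ a local minimizer of $f$; these accumulate at $\x^{\heartsuit}$ with $\y^{(k)}\neq\x^{\heartsuit}$, contradicting isolation.

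The main obstacle is the ``only if'' direction of part~\ref{item:separable:isolated}: the delicate point is producing genuine local minimizers of the full $f$ from a single non-isolated coordinate. This is exactly where part~\ref{item:separable:local} is indispensable, as it certifies that the one-coordinate lift $\y^{(k)}$ is a local minimizer of $f$ and not merely a point near $\x^{\heartsuit}$; the remaining care is only to confirm that the untouched coordinates $x^{\heartsuit}_{n}$ stay local minimizers of $f_{n}$, which again follows from applying part~\ref{item:separable:local} to the hypothesis that $\x^{\heartsuit}$ is a local minimizer.
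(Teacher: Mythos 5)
Your proposal is correct and takes essentially the same route as the paper's proof: part~(a) via coordinatewise gluing with the minimum radius and a single-coordinate perturbation, and part~(b) by lifting coordinatewise local minimizers through part~(a) into a box neighborhood and invoking the uniqueness in the isolating neighborhood. The only differences are cosmetic---you phrase the forward direction of~(a) directly rather than by contradiction, and the ``only if'' of~(b) via a convergent sequence of lifted local minimizers instead of the paper's arbitrary choice inside the fixed isolating box---and your closing observation that $\x^{\heartsuit}$ being a local minimizer of $f$ forces, by part~(a), $x^{\heartsuit}_{m}$ to be a (then non-isolated) local minimizer of $f_{m}$ correctly secures the existence of that sequence.
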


\begin{definition}[Tools of convex analysis]
  \label{dfn:tools}
  \,
  \begin{enumerate}[label=(\alph*), leftmargin=*, itemsep=1ex]
    \item (Legendre-Fenchel transform~\cite[Def. 13.1]{bauschke_convex_2017}).
          For $f\in\Gamma_{0}(\mathcal{H})$, the function definded by
          \begin{equation}\label{eq:conjugate}
            f^{\ast}:\mathcal{H}\to(-\infty,+\infty]:
            v\mapsto\sup_{u\in\mathcal{H}}[\ev{u,v}_{\mathcal{H}}-f(u)]
          \end{equation}
          satisfies $f^{\ast}\in\Gamma_{0}(\mathcal{H})$.
          The function $f^{\ast}$ is called \emph{conjugate} (also named \emph{Legendre-Fenchel transform}) of $f$.
    \item \label{item:prox}
          (Proximity operator~{\cite[Def. 12.23]{bauschke_convex_2017}}).
          The \emph{proximity operator} of a function $f\in\Gamma_{0}(\mathcal{H})$ is defined by
          \begin{equation}
            \Prox_{f}:\mathcal{H}\to\mathcal{H}:x\mapsto\argmin_{y\in\mathcal{H}}
            \left[f(y)+\frac{1}{2}\norm{x-y}_{\mathcal{H}}^{2}\right].
          \end{equation}
          $f\in\Gamma_{0}(\mathcal{H})$ is said to be \emph{prox-friendly} if $\Prox_{\gamma f}$ is available as a computable operator for every $\gamma\in\mathbb{R}_{++}$.
    \item (Nonexpansive operator~\cite[Def. 4.1]{bauschke_convex_2017}).
          An operator $T:\mathcal{H}\to\mathcal{H}$ is said to be
          \emph{nonexpansive} if $(\forall x,y\in\mathcal{H})\ \norm{T(x)-T(y)}_{\mathcal{H}}\leq \norm{x-y}_{\mathcal{H}}$, in particular,
          ($\alpha$-) \emph{averaged nonexpansive} if there exists $\alpha\in(0,1)$ and a nonexpansive operator $R:\mathcal{H}\to\mathcal{H}$ such that $T=(1-\alpha)\Id + \alpha R$.
          For $f\in\Gamma_{0}(\mathcal{H})$, $\Prox_{f}$ in~\ref{item:prox} is known to be $\frac{1}{2}$-averaged nonexpansive~\cite[Prop. 12.28]{bauschke_convex_2017}.
  \end{enumerate}
\end{definition}

\begin{example}[Expression of proximity operators]
  \label{ex:proximity_operator}
  \,
  \begin{enumerate}[label=(\alph*), leftmargin=*, itemsep=1ex]
    \item \label{item:indicator}
          (Indicator function~\cite[Exm. 12.25]{bauschke_convex_2017}).
          For a nonempty closed convex subset $C\subset\mathcal{H}$, $\Prox_{\gamma\iota_{C}}$ with any $\gamma\in\mathbb{R}_{++}$ is given by the metric projection
          \begin{equation}
            P_{C}:\mathcal{H}\to\mathcal{H}:x\mapsto\argmin_{y\in\mathcal{C}}\norm{x-y}_{\mathcal{H}}.
          \end{equation}
          A closed convex set $C$ is said to be \emph{simple} if $\iota_{C}$ is prox-friendly.
    \item \label{item:conjugate}
          (Conjugate of $f$~\cite[Thm. 14.3 (ii)]{bauschke_convex_2017}).
          For $f\in\Gamma_{0}(\mathcal{H})$, $\Prox_{f^{\ast}}$ is given by
          \begin{equation}
            \Prox_{f^{\ast}}=\Id-\Prox_{f}.
          \end{equation}
    \item \label{item:shifted}
          (Shifted function~\cite[Prop. 24.8 (ii)]{bauschke_convex_2017}).
          For $f\in\Gamma_{0}(\mathcal{H})$ and $z\in\mathcal{H}$, the proximity operator of the shifted function $g:\mathcal{H}\to(-\infty,\infty]:x\mapsto f(x-z)$ is given by
          \begin{equation}
            \Prox_{g}(\cdot)=z+\Prox_{f}(\cdot-z).
          \end{equation}
  \end{enumerate}
\end{example}

\begin{fact}[Krasnosel'ski\u{\i}-Mann iteration in finite dimensional real Hilbert space~{\cite[Sec. 5.2]{bauschke_convex_2017}}]
  \label{fct:picard}
  For a nonexpansive operator $T:\mathcal{H}\to\mathcal{H}$ with $\Fix(T)\coloneqq \{u\in\mathcal{H}\mid T(u)=u\}\neq \emptyset$ and any initial point $x_{0}\in \mathcal{H}$, the sequence $(x_{k})_{k=0}^{\infty}\subset\mathcal{H}$ generated by
  \begin{equation}
    (k=0,1,2,\ldots)\quad x_{k+1}=\left[(1-\alpha_{k})\Id+\alpha_{k}T\right](x_{k})
  \end{equation}
  converges to a point in $\Fix(T)$ if $(\alpha_{k})_{k=0}^{\infty}\subset [0,1]$ satisfies $\sum_{k=0}^{\infty}\alpha_{k}(1-\alpha_{k})=\infty$. In particular, if $T$ is $\alpha$-averaged for some $\alpha\in(0,1)$, a simple Picard-type iteration
  \begin{equation}
    (k=0,1,2,\ldots)\quad x_{k+1}=T(x_{k})
  \end{equation}
  converges to a point in $\Fix(T)$.
\end{fact}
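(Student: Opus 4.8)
The plan is to run the classical Fej\'er-monotonicity argument, exploiting that in a finite dimensional space weak and strong convergence coincide. Fix an arbitrary $z\in\Fix(T)$ and abbreviate $T_{k}\coloneqq(1-\alpha_{k})\Id+\alpha_{k}T$, so that $x_{k+1}=T_{k}(x_{k})$ and $T_{k}(z)=z$ (because $T(z)=z$). The first key step is the convex-combination identity $\norm{(1-\alpha)u+\alpha v}_{\mathcal{H}}^{2}=(1-\alpha)\norm{u}_{\mathcal{H}}^{2}+\alpha\norm{v}_{\mathcal{H}}^{2}-\alpha(1-\alpha)\norm{u-v}_{\mathcal{H}}^{2}$, applied with $u\coloneqq x_{k}-z$, $v\coloneqq T(x_{k})-T(z)$, and $\alpha\coloneqq\alpha_{k}$. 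Cancelling the first two terms via the nonexpansiveness bound $\norm{T(x_{k})-T(z)}_{\mathcal{H}}\leq\norm{x_{k}-z}_{\mathcal{H}}$ yields
\[
  \norm{x_{k+1}-z}_{\mathcal{H}}^{2}\leq\norm{x_{k}-z}_{\mathcal{H}}^{2}-\alpha_{k}(1-\alpha_{k})\norm{x_{k}-T(x_{k})}_{\mathcal{H}}^{2}.
\]

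In particular $(\norm{x_{k}-z}_{\mathcal{H}})_{k=0}^{\infty}$ is nonincreasing (i.e., the sequence is Fej\'er monotone with respect to $\Fix(T)$); hence it converges and $(x_{k})_{k=0}^{\infty}$ is bounded. Telescoping the same inequality over $k=0,1,\ldots,K$ and letting $K\to\infty$ gives $\sum_{k=0}^{\infty}\alpha_{k}(1-\alpha_{k})\norm{x_{k}-T(x_{k})}_{\mathcal{H}}^{2}\leq\norm{x_{0}-z}_{\mathcal{H}}^{2}<\infty$. Since the hypothesis $\sum_{k=0}^{\infty}\alpha_{k}(1-\alpha_{k})=\infty$ forbids the residuals from staying bounded away from zero, I conclude $\liminf_{k\to\infty}\norm{x_{k}-T(x_{k})}_{\mathcal{H}}=0$.

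Next I would extract a subsequence $(x_{k_{i}})_{i}$ along which $\norm{x_{k_{i}}-T(x_{k_{i}})}_{\mathcal{H}}\to0$; by boundedness and the Bolzano--Weierstrass theorem in the finite dimensional $\mathcal{H}$, a further subsequence (not relabelled) converges to some $\bar{x}\in\mathcal{H}$. Because $T$ is nonexpansive, it is continuous, so $\norm{\bar{x}-T(\bar{x})}_{\mathcal{H}}=\lim_{i}\norm{x_{k_{i}}-T(x_{k_{i}})}_{\mathcal{H}}=0$, i.e., $\bar{x}\in\Fix(T)$. To upgrade this cluster point to the limit of the whole sequence, I reapply the displayed inequality with the specific fixed point $z\coloneqq\bar{x}$: then $(\norm{x_{k}-\bar{x}}_{\mathcal{H}})_{k}$ is nonincreasing and therefore convergent, while the chosen subsequence forces its limit to be $0$; hence $x_{k}\to\bar{x}\in\Fix(T)$. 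For the ``in particular'' claim, write the $\alpha$-averaged $T=(1-\alpha)\Id+\alpha R$ with $R$ nonexpansive; the Picard iteration $x_{k+1}=T(x_{k})$ is exactly the Krasnosel'ski\u{\i}--Mann recursion for $R$ with the constant parameters $\alpha_{k}\equiv\alpha\in(0,1)$, for which $\sum_{k}\alpha(1-\alpha)=\infty$ and $\Fix(R)=\Fix(T)$, so the first part applies verbatim.

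I expect the main obstacle to be the passage from ``a cluster point lies in $\Fix(T)$'' to ``the whole sequence converges.'' In a general (infinite dimensional) Hilbert space this step requires the demiclosedness principle for $\Id-T$ together with weak-topology compactness arguments; the present finite dimensional setting is precisely what lets me replace those by Bolzano--Weierstrass compactness and the plain continuity of $T$, after which the Fej\'er-monotone convergence principle (nonincreasing distances plus a cluster point in the target set) closes the argument cleanly.
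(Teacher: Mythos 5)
Your proof is correct in every step: the convex-combination identity, the resulting Fej\'er-monotonicity inequality $\norm{x_{k+1}-z}_{\mathcal{H}}^{2}\leq\norm{x_{k}-z}_{\mathcal{H}}^{2}-\alpha_{k}(1-\alpha_{k})\norm{x_{k}-T(x_{k})}_{\mathcal{H}}^{2}$, the telescoping/divergence argument yielding $\liminf_{k\to\infty}\norm{x_{k}-T(x_{k})}_{\mathcal{H}}=0$, the extraction of a cluster point in $\Fix(T)$ via Bolzano--Weierstrass and continuity of $T$, the upgrade to convergence of the whole sequence by reapplying the inequality with $z\coloneqq\bar{x}$, and the reduction of the $\alpha$-averaged Picard case to the constant choice $\alpha_{k}\equiv\alpha$ with $\Fix(R)=\Fix(T)$. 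The paper states this as a Fact and supplies no proof of its own, only the citation to~\cite[Sec. 5.2]{bauschke_convex_2017}; your argument is essentially the standard Krasnosel'ski\u{\i}--Mann (Groetsch-type) proof found there, with the infinite-dimensional demiclosedness step correctly replaced by plain continuity plus norm compactness, exactly as finite dimensionality permits.
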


\subsection{Brief Introduction to cLiGME model}
\label{sec:cLiGME}

\begin{problem}[cLiGME model {\cite{yata_constrained_2022,yata_imposing_2024}}]\label{prob:cLiGME_model}
Let
$\mathcal{X}$,
$\mathcal{Y}$,
$\mathcal{Z}$
and
$\widetilde{\mathcal{Z}}$
be finite dimensional real Hilbert spaces.
Suppose that
(a)~$y\in\mathcal{Y}$ and $A\in \mathcal{B}(\mathcal{X}, \mathcal{Y})$;
(b)~$B\in \mathcal{B}(\mathcal{Z}, \widetilde{\mathcal{Z}})$, $\mathfrak{L}\in \mathcal{B}(\mathcal{X}, \mathcal{Z})$ and $\mu\in\mathbb{R}_{++}$;
(c)~$C(\subset\mathcal{X})$ is a nonempty simple closed convex set;
(d)~$\Psi\in \Gamma_0(\mathcal{Z})$ is
coercive, $\dom \Psi = \mathcal{Z}$ and prox-friendly.
Then
\begin{enumerate}[label=\roman*), leftmargin=0em, align=left, itemsep=1ex]
  \item
        with a tunable matrix $B\in \mathcal{B}(\mathcal{Z}, \widetilde{\mathcal{Z}})$, the \emph{Generalized Moreau Enhancement (GME)} of $\Psi$ is defined by
        \begin{equation}\label{eq:GME}
          \Psi_{B}(\cdot):= \Psi(\cdot) - \min_{v\in\mathcal{Z}}\left[
          \Psi(v) +\frac{1}{2}\bnorm{B(\cdot-v)}^2_{\widetilde{\mathcal{Z}}}
          \right];
        \end{equation}
  \item
        a \emph{constrained LiGME (cLiGME)} model is given as%
        \footnote{
          The original version of the cLiGME model~\cite{yata_constrained_2022} employs the constraint $\mathfrak{C}x\in C$, where $\mathfrak{C}$ is a linear operator.
          For simplicity, we consider the case $\mathfrak{C}=\Id$ in this paper.
        }
        \begin{equation}\label{eq:cLiGME_model}
          \text{Find~}x^{\lozenge}\in
          \mathcal{S}_{C}\coloneqq
          \argmin_{x\in C}
          J_{\Psi_{B}\circ\mathfrak{L}}(x),
        \end{equation}
        where 
        \begin{equation}\label{eq:LiGME_objective_function}
          J_{\Psi_{B}\circ\mathfrak{L}}(x)\coloneqq
          \frac{1}{2}\norm{y-Ax}_{\mathcal{Y}}^2 +
          \mu\Psi_{B}\circ \mathfrak{L}(x),
        \end{equation}
        and 
        $\Psi_{B}\circ\mathfrak{L}:\mathcal{X}\to\mathbb{R}$ is called the \emph{LiGME regularizer}~\cite{abe_linearly_2020}.
\end{enumerate}
\end{problem}

The regularizer $\Psi_{B}$ has been proposed originally in~\cite{abe_linearly_2020} as an extension of the so-called \emph{GMC penalty} in~\cite{selesnick_sparse_2017} mainly for applications to the sparsity-aware estimation.
Although $\Psi_{B}$ with $B\neq O$ is nonconvex, the convexity of the cost function $J_{\Psi_{B}\circ\mathfrak{L}}$ in~\eqref{eq:LiGME_objective_function} is achieved by a strategic tuning of \emph{GME matrix} $B$ (see, e.g.,~\cite{chen_unified_2023}) as follows:
\begin{fact}[Overall convexity condition {\cite[Prop. 1 (b)]{abe_linearly_2020}} for~\eqref{eq:LiGME_objective_function}]
  \label{fct:occ}
  Consider Problem~\ref{prob:cLiGME_model}.
  Then $J_{\Psi_{B}\circ\mathfrak{L}}$ in~\eqref{eq:LiGME_objective_function} is convex if
  \begin{equation}\label{eq:occ_general}
    A^{\ast}A-\mu\mathfrak{L}^{\ast}B^{\ast}B\mathfrak{L}\succeq O_{\mathcal{X}}.
  \end{equation}
\end{fact}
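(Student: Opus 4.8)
The plan is to express $J_{\Psi_{B}\circ\mathfrak{L}}$ as a pointwise supremum of convex functions indexed by $v\in\mathcal{Z}$, and then invoke the elementary fact that such a supremum is convex. First I would insert the definition~\eqref{eq:GME} of the GME into~\eqref{eq:LiGME_objective_function} and convert the subtracted minimum into a maximum via $-\min_{v}[\,\cdot\,]=\max_{v}[-(\,\cdot\,)]$. Pulling the two $v$-independent (but $x$-dependent) terms $\frac{1}{2}\norm{y-Ax}_{\mathcal{Y}}^{2}$ and $\mu\Psi(\mathfrak{L}x)$ inside the maximization then gives $J_{\Psi_{B}\circ\mathfrak{L}}(x)=\sup_{v\in\mathcal{Z}}\phi_{v}(x)$, where
\[
  \phi_{v}(x)\coloneqq \frac{1}{2}\norm{y-Ax}_{\mathcal{Y}}^{2}-\frac{\mu}{2}\bnorm{B(\mathfrak{L}x-v)}_{\widetilde{\mathcal{Z}}}^{2}+\mu\Psi(\mathfrak{L}x)-\mu\Psi(v).
\]

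Next I would verify that, under~\eqref{eq:occ_general}, each $\phi_{v}$ is convex in $x$. Expanding the two quadratic terms, their combined second-order part in $x$ is $\frac{1}{2}\ev{x,(A^{\ast}A-\mu\mathfrak{L}^{\ast}B^{\ast}B\mathfrak{L})x}_{\mathcal{X}}$, while the remainder of these two terms is affine in $x$; hence the map $x\mapsto\frac{1}{2}\norm{y-Ax}_{\mathcal{Y}}^{2}-\frac{\mu}{2}\bnorm{B(\mathfrak{L}x-v)}_{\widetilde{\mathcal{Z}}}^{2}$ is convex exactly when $A^{\ast}A-\mu\mathfrak{L}^{\ast}B^{\ast}B\mathfrak{L}\succeq O_{\mathcal{X}}$. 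Since $\mu\Psi(\mathfrak{L}x)$ is convex as the composition of the convex $\Psi\in\Gamma_{0}(\mathcal{Z})$ with the linear map $\mathfrak{L}$, and $-\mu\Psi(v)$ is constant in $x$, adding these preserves convexity, so each $\phi_{v}$ is convex under~\eqref{eq:occ_general}.

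Finally, since a pointwise supremum of convex functions is convex, $J_{\Psi_{B}\circ\mathfrak{L}}=\sup_{v\in\mathcal{Z}}\phi_{v}$ is convex, which completes the proof. I expect the main (though elementary) obstacle to be the algebraic bookkeeping of the quadratic cross terms: one must confirm that the indefinite quadratic $-\frac{\mu}{2}\bnorm{B(\mathfrak{L}x-v)}_{\widetilde{\mathcal{Z}}}^{2}$ contributes precisely $-\mu\mathfrak{L}^{\ast}B^{\ast}B\mathfrak{L}$ to the second-order part and that its linear-in-$x$ remainder, together with the $v$-dependent constant, does not spoil convexity. It is worth noting that the supremum-of-convex-functions argument requires no attainment of the maximizer, so the coercivity and properness of $\Psi$ assumed in Problem~\ref{prob:cLiGME_model} (which guarantee that the minimum in~\eqref{eq:GME} is attained) are not actually needed for this convexity claim.
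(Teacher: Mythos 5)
Your proof is correct; note, however, that the paper itself offers no proof of this statement at all --- it is stated as a \emph{Fact} and imported wholesale from \cite[Prop.~1(b)]{abe_linearly_2020}, so there is no in-paper argument to compare against. Your supremum representation is, in substance, the standard proof of the cited result in slightly different packaging: the source instead eliminates the inner minimum via the Fenchel conjugate, writing $\Psi_{B}(z)=\Psi(z)-\frac{1}{2}\bnorm{Bz}_{\widetilde{\mathcal{Z}}}^{2}+\bigl(\Psi+\frac{1}{2}\bnorm{B\cdot}_{\widetilde{\mathcal{Z}}}^{2}\bigr)^{\ast}(B^{\ast}Bz)$, so that $J_{\Psi_{B}\circ\mathfrak{L}}$ splits into the quadratic with Hessian $A^{\ast}A-\mu\mathfrak{L}^{\ast}B^{\ast}B\mathfrak{L}$ (convex precisely under~\eqref{eq:occ_general}), an affine part, and convex functions composed with linear operators; this same conjugate-based decomposition is what resurfaces in the present paper's derivation of the subdifferential expression~\eqref{eq:partial_J} in the proof of Proposition~\ref{prop:cLiGME_algorithm}. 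Since a conjugate is itself a pointwise supremum of affine functions, your $\sup_{v}\phi_{v}$ argument and the conjugate decomposition are two renderings of the same idea; yours is the more elementary and self-contained, while the conjugate form has the advantage of feeding directly into the subdifferential calculus needed later for the algorithmic analysis. Your algebra checks out: the cross term of $-\frac{\mu}{2}\bnorm{B(\mathfrak{L}x-v)}_{\widetilde{\mathcal{Z}}}^{2}$ is linear in $x$ and the $v$-dependent remainder is constant in $x$, so neither affects convexity of $\phi_{v}$. Your closing remark is also accurate, with one small caveat: coercivity of $\Psi$ (with $\dom\Psi=\mathcal{Z}$) is what guarantees the minimum in~\eqref{eq:GME} is attained and hence that $\Psi_{B}$, and so $J_{\Psi_{B}\circ\mathfrak{L}}$, is real-valued; dropping it, one must read the min as an infimum (possibly $-\infty$, making $\Psi_{B}$ take the value $+\infty$), and then the supremum argument still yields convexity, but only in the extended-real-valued sense, with properness of $J_{\Psi_{B}\circ\mathfrak{L}}$ no longer automatic.
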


Under the overall convexity condition~\eqref{eq:occ_general} and even symmetric condition of $\Psi\in\Gamma_{0}(\mathcal{Z})$ (i.e., $\Psi\circ(-\Id)=\Psi$), the vector sequence generated by the LiGME algorithm is guaranteed to converge to a global minimizer of $J_{\Psi_{B}\circ\mathfrak{L}}$ over $x\in \mathcal{X}$~\cite[Thm. 1]{abe_linearly_2020}.
We should remark here that the even symmetric condition can be removed by relaxing~{\cite[Lemma 1]{abe_linearly_2020}} to new Lemma~\ref{lem:relative_interior} below (Note:~\cite[Lemma 1]{abe_linearly_2020} qualifies the chain rule (see Fact~\ref{fct:subdifferential}~\ref{fct:chain_rule} in \ref{appendix:known_facts}) of the subdifferential in~\cite[(D.9)]{abe_linearly_2020}).
This relaxation is necessary for Proposition~\ref{prop:cLiGME_algorithm} and Theorem~\ref{thm:proposed_algorithm} which will be keys to solve the proposed model~\eqref{eq:proposed_model} in Section~\ref{sec:proposed_regularizer} for Problem~\ref{prob:discrete}.

\begin{lemma}[A relaxation of {\cite[Lemma 1]{abe_linearly_2020}} without even symmetric condition of $\Psi$]\label{lem:relative_interior}
  If $\Psi\in\Gamma_{0}(\mathcal{Z})$ is coercive and $\dom\Psi=\mathcal{Z}$, we have for $B\in\mathcal{B}(\mathcal{Z,\widetilde{\mathcal{Z}}})$,
  \begin{equation} \label{eq:ri}
    0_{\mathcal{Z}}\in\ri\left(\dom\left(\left(\Psi+\frac{1}{2}\bnorm{B\cdot}_{\widetilde{\mathcal{Z}}}^{2}\right)^{\ast}\right)-\ran\left({B}^{\ast}\right)\right).
  \end{equation}
  \begin{proof}
    See~\ref{appendix:proof_lemma}.
  \end{proof}
\end{lemma}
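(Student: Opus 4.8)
The plan is to reduce the claim to the classical duality between coercivity of a function in $\Gamma_{0}(\mathcal{Z})$ and the location of the origin inside the domain of its conjugate. First I would introduce the auxiliary function $g\coloneqq\Psi+\frac{1}{2}\bnorm{B\cdot}_{\widetilde{\mathcal{Z}}}^{2}$ and record its basic properties. Since $\Psi\in\Gamma_{0}(\mathcal{Z})$ with $\dom\Psi=\mathcal{Z}$ and $\frac{1}{2}\bnorm{B\cdot}_{\widetilde{\mathcal{Z}}}^{2}$ is a finite-valued continuous convex function, their sum satisfies $g\in\Gamma_{0}(\mathcal{Z})$ with $\dom g=\mathcal{Z}$. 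Moreover, because $\frac{1}{2}\bnorm{Bx}_{\widetilde{\mathcal{Z}}}^{2}\geq 0$ for every $x\in\mathcal{Z}$, we have the pointwise bound $g\geq\Psi$, so the coercivity hypothesis on $\Psi$ forces $g(x)\to\infty$ as $\norm{x}_{\mathcal{Z}}\to\infty$; that is, $g$ is coercive.

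The key step is then to invoke the standard equivalence, in finite dimensions, between coercivity of $g\in\Gamma_{0}(\mathcal{Z})$ and the membership $0_{\mathcal{Z}}\in\Int(\dom g^{\ast})$ (a coercivity/conjugate duality; see, e.g.,~\cite{bauschke_convex_2017}). Applying this to the coercive $g$ above yields $0_{\mathcal{Z}}\in\Int\bigl(\dom\bigl((\Psi+\frac{1}{2}\bnorm{B\cdot}_{\widetilde{\mathcal{Z}}}^{2})^{\ast}\bigr)\bigr)$, i.e., the domain of the conjugate appearing in~\eqref{eq:ri} already contains an open ball centered at the origin.

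Finally I would pass from $\dom g^{\ast}$ to the set difference $\dom g^{\ast}-\ran(B^{\ast})$ appearing in~\eqref{eq:ri}. Since $B^{\ast}0_{\widetilde{\mathcal{Z}}}=0_{\mathcal{Z}}\in\ran(B^{\ast})$, we have $\dom g^{\ast}=\dom g^{\ast}-\{0_{\mathcal{Z}}\}\subseteq\dom g^{\ast}-\ran(B^{\ast})$, and monotonicity of the interior gives $\Int(\dom g^{\ast})\subseteq\Int\bigl(\dom g^{\ast}-\ran(B^{\ast})\bigr)$, so $0_{\mathcal{Z}}$ lies in the latter interior. Because $\dom g^{\ast}$ is convex and $\ran(B^{\ast})$ is a subspace, the difference $\dom g^{\ast}-\ran(B^{\ast})$ is convex, and the inclusion $\Int(S)\subseteq\ri(S)$ recorded in the Notation section upgrades this to $0_{\mathcal{Z}}\in\ri\bigl(\dom g^{\ast}-\ran(B^{\ast})\bigr)$, which is exactly~\eqref{eq:ri}.

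I expect the only genuine subtlety to be the second step: pinning down the precise coercivity-to-conjugate statement needed. I want the version asserting $0_{\mathcal{Z}}\in\Int(\dom g^{\ast})$ from plain coercivity, rather than the stronger conclusion $\dom g^{\ast}=\mathcal{Z}$, which would require supercoercivity and is unavailable here since $B$ may be singular; I must also ensure that finite-dimensionality is used so that continuity of $g^{\ast}$ at $0_{\mathcal{Z}}$ and the membership $0_{\mathcal{Z}}\in\Int(\dom g^{\ast})$ coincide. Everything after that is elementary monotonicity of interiors together with the $\Int\subseteq\ri$ inclusion, and relies only on the trivial membership $0_{\mathcal{Z}}\in\ran(B^{\ast})$; notably, no even-symmetry of $\Psi$ enters at any point, which is precisely the relaxation being claimed.
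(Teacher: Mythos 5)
Your proposal is correct and follows essentially the same route as the paper's proof in \ref{appendix:proof_lemma}: establish coercivity of $g\coloneqq\Psi+\frac{1}{2}\bnorm{B\cdot}_{\widetilde{\mathcal{Z}}}^{2}$ (you via the pointwise bound $g\geq\Psi$, the paper via \cite[Cor.~11.16~(ii)]{bauschke_convex_2017}), then invoke the coercivity--conjugate duality \cite[Prop.~14.16]{bauschke_convex_2017} to get $0_{\mathcal{Z}}\in\Int\left(\dom g^{\ast}\right)$, exactly as the paper does. Your only deviation is the final step, where the paper applies the relative-interior calculus \cite[Prop.~6.19~(viii)]{bauschke_convex_2017} to $\ri\left(\dom g^{\ast}\right)$ and $\ri\left(\ran\left(B^{\ast}\right)\right)=\ran\left(B^{\ast}\right)$, whereas you use the elementary inclusion $\dom g^{\ast}\subseteq\dom g^{\ast}-\ran\left(B^{\ast}\right)$ (valid since $0_{\mathcal{Z}}\in\ran\left(B^{\ast}\right)$), monotonicity of $\Int$, and $\Int\subseteq\ri$ on the convex difference set --- an equally valid and slightly more elementary finish.
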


The vector sequence generation~\eqref{eq:algorithm} in Proposition~\ref{prop:cLiGME_algorithm} below presents an iterative algorithm for finding a global minimizer of the cLiGME model~\eqref{eq:cLiGME_model} without assuming the even symmetry of $\Psi$.
The algorithm in Proposition~\ref{prop:cLiGME_algorithm} is designed by introducing $P_{C}$ into the original LiGME algorithm~\cite{abe_linearly_2020}.
Although such a projection involved type cLiGME algorithm is found in~\cite{kitahara_multi-contrast_2021,yata_imposing_2024} under the even symmetric condition of $\Psi$, to the best of the authors' knowledge, its explicit proof has not yet been reported.
For completeness, we present below a convergence analysis of an LiGME-type algorithm incorporating $P_{C}$ without assuming the even symmetric condition $\Psi\circ(-\Id)=\Psi$.

\begin{proposition}[A cLiGME algorithm without even symmetric condition of $\Psi$ in~\cite{abe_linearly_2020,yata_constrained_2022,yata_imposing_2024,kitahara_multi-contrast_2021} for~\eqref{eq:cLiGME_model} in Problem~\ref{prob:cLiGME_model}]\label{prop:cLiGME_algorithm}
  Consider Problem~\ref{prob:cLiGME_model} under the overall convexity condition~\eqref{eq:occ_general}.
  Assume%
  \footnote{
    $\mathcal{S}_{C}\neq\emptyset$ is guaranteed in many cases, e.g., if $C$ is compact (not limited to this case).
  }
  $\mathcal{S}_{C}\coloneqq\argmin_{x\in C}J_{\Psi_{B}\circ\mathfrak{L}}(x)\neq\emptyset$.
  Define the operator
  $T_{\cLiGME}:\mathcal{H}\coloneqq\mathcal{X}\times\mathcal{Z}\times\mathcal{Z}\to\mathcal{H}:(x, v, w)\mapsto(\xi,\zeta,\eta)$ with
  $(\sigma,\tau)\in\mathbb{R}_{++}\times\mathbb{R}_{++}$ by
  \begin{equation}\label{eq:TcLiGME}
    \hspace{-1.5em}
    \begin{cases}
      \xi=P_{C}\bigl[
      (\Id-\frac{1}{\sigma}(A^{\ast}A-\mu\mathfrak{L}^{\ast}B^{\ast}B\mathfrak{L}))x
      \\ \hfill
      -\frac{\mu}{\sigma}\mathfrak{L}^{\ast}B^{\ast}Bv
      -\frac{\mu}{\sigma}\mathfrak{L}^{\ast}w
      +\frac{1}{\sigma}A^{\ast}y
      \bigr],
      \\
      \zeta=\Prox_{\frac{\mu}{\tau}\Psi}\bigl[
      \frac{\mu}{\tau}B^{\ast}B\mathfrak{L}(2\xi-x)
      +(\Id-\frac{\mu}{\tau}B^{\ast}B)v
      \bigr],
      \\
      \eta=\Prox_{\Psi^{\ast}}\bigl[
        \mathfrak{L}(2\xi-x)+w
      \bigr]. \\
    \end{cases}
  \end{equation}
  Then the following hold:
  \begin{enumerate}[label=(\alph*), itemsep=1ex, leftmargin=1em]
    \item \label{item:fixed_point}
          The solution set $\mathcal{S}_{C}$ of Problem~\ref{prob:cLiGME_model} can be expressed as
          \begin{align}
            \label{eq:S_C}
            \mathcal{S}_{C}
            =
             & \Xi\left(\Fix(T_{\cLiGME})\right) \\
            \coloneqq
             & \left\{
            \Xi(\widebar{x},\widebar{v},\widebar{w})\in\mathcal{X}
            \mid
            (\widebar{x},\widebar{v},\widebar{w})\in\Fix(T_{\cLiGME})
            \right\},
          \end{align}
          with $\Xi:\mathcal{H}\to\mathcal{X}:(x,v,w)\mapsto x$.
    \item \label{item:nonexpansive}
          Choose $(\sigma,\tau,\kappa)\in\mathbb{R}_{++}\times\mathbb{R}_{++}\times(1,\infty)$ satisfying%
          \footnote{
            For example, \eqref{eq:stepsizes} is satisfied by any $\kappa>1$ and
            \begin{equation}
              \begin{cases}
                \sigma=\opnorm{\frac{\kappa}{2}A^{\ast}A+\mu\mathfrak{L}^{\ast}\mathfrak{L}}
                +(\kappa-1),
                \\
                \tau=(\frac{\kappa}{2}+\frac{2}{\kappa})\mu\opnorm{B}^{2}
                +(\kappa-1).
              \end{cases}
            \end{equation}
          }
          \begin{equation}\label{eq:stepsizes}
            \begin{cases}
              \sigma\Id-\frac{\kappa}{2}A^{\ast}A-\mu\mathfrak{L}^{\ast}\mathfrak{L}\succeq O_{\mathcal{X}},
              \\
              \tau\geq(\frac{\kappa}{2}+\frac{2}{\kappa})\mu\opnorm{B}^{2}.
            \end{cases}
          \end{equation}
          Then
          \begin{equation}
            \hspace{-1em}
            \mathfrak{P}\coloneqq
            \begin{pmatrix}
              \sigma\Id                  & -\mu\mathfrak{L}^{\ast}B^{\ast}B & -\mu\mathfrak{L}^{\ast}
              \\
              -\mu B^{\ast}B\mathfrak{L} & \tau\Id                          & O_{\mathcal{Z}}
              \\
              -\mu\mathfrak{L}           & O_{\mathcal{Z}}                  & \mu\Id
            \end{pmatrix}
            \succ O_{\mathcal{H}}
          \end{equation}
          and $T_{\cLiGME}$ is $\frac{\kappa}{2\kappa-1}$-averaged nonexpansive in the Hilbert space $\Hilbert{\mathcal{H}}{\mathfrak{P}}$.
    \item \label{item:convergence}
          Assume $(\sigma,\tau,\kappa)\in\mathbb{R}_{++}\times\mathbb{R}_{++}\times(1,\infty)$ satisfies~\eqref{eq:stepsizes}.
          Then for any initial point $\mathfrak{u}_{0}\coloneqq(x_{0},v_{0},w_{0})\in\mathcal{H}$, the sequence $(\mathfrak{u}_{k})_{k=0}^{\infty}\subset\mathcal{H}$ with $\mathfrak{u}_{k}\coloneqq (x_{k}, v_{k}, w_{k})$ generated by
          \begin{equation} \label{eq:algorithm}
            (k=0,1,2,\ldots)\quad\mathfrak{u}_{k+1}=T_{\cLiGME}(\mathfrak{u}_{k})
          \end{equation}
          converges to a point $(\widebar{x},\widebar{v},\widebar{w})\in\Fix(T_{\cLiGME})$ and
          \begin{equation}
            \lim_{k\to\infty}x_{k}=\widebar{x}\in\mathcal{S}_{C}.
          \end{equation}
  \end{enumerate}
  \begin{proof}
    See~\ref{appendix:proof_theorem}.
  \end{proof}
\end{proposition}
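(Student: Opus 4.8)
The plan is to read \eqref{eq:algorithm} as a fixed-point iteration of $T_{\cLiGME}$ and to drive convergence through the Krasnosel'ski\u{\i}--Mann/Picard machinery of Fact~\ref{fct:picard}, carried out not in the native geometry of $\mathcal{H}$ but in the renormed Hilbert space $\Hilbert{\mathcal{H}}{\mathfrak{P}}$. Parts~\ref{item:fixed_point}--\ref{item:convergence} are then logically nested: part~\ref{item:fixed_point} identifies $\mathcal{S}_C$ with the $x$-image of $\Fix(T_{\cLiGME})$; part~\ref{item:nonexpansive} shows $\mathfrak{P}\succ O_{\mathcal{H}}$ and that $T_{\cLiGME}$ is $\frac{\kappa}{2\kappa-1}$-averaged in $\Hilbert{\mathcal{H}}{\mathfrak{P}}$; and part~\ref{item:convergence} follows almost formally by combining the two.

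For part~\ref{item:fixed_point} I would start from the observation that, under the overall convexity condition~\eqref{eq:occ_general}, $J_{\Psi_B\circ\mathfrak{L}}+\iota_C\in\Gamma_0(\mathcal{X})$, so $x^{\lozenge}\in\mathcal{S}_C$ is equivalent to $0_{\mathcal{X}}\in\partial(J_{\Psi_B\circ\mathfrak{L}}+\iota_C)(x^{\lozenge})$. The task is to expand this inclusion into a primal--dual system in $(x,v,w)$: writing $\Psi_B\circ\mathfrak{L}$ through its defining inner minimization~\eqref{eq:GME} exposes the auxiliary variable $v$ attaining the envelope, while the composition with $\mathfrak{L}$ introduces the dual variable $w$. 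Crucially, the relative-interior condition~\eqref{eq:ri} of the relaxed Lemma~\ref{lem:relative_interior} supplies exactly the constraint qualification that legitimizes the subdifferential chain rule invoked in this expansion, now \emph{without} the even-symmetry hypothesis on $\Psi$. Finally I would rewrite each component using the resolvent identity $p=\Prox_f(q)\Leftrightarrow q-p\in\partial f(p)$, Example~\ref{ex:proximity_operator}~\ref{item:conjugate} for $\Psi^{\ast}$, and the normal-cone characterization $\xi=P_C(q)\Leftrightarrow q-\xi\in N_C(\xi)$, checking that the resulting equalities coincide termwise with the three lines of~\eqref{eq:TcLiGME} at a fixed point; this yields~\eqref{eq:S_C}.

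Part~\ref{item:nonexpansive} is the analytic heart. First I would establish $\mathfrak{P}\succ O_{\mathcal{H}}$: the $(2,2)$ and $(3,3)$ blocks $\tau\Id$ and $\mu\Id$ are positive definite, and the bound $\tau\geq(\frac{\kappa}{2}+\frac{2}{\kappa})\mu\opnorm{B}^2$ together with $\sigma\Id-\frac{\kappa}{2}A^{\ast}A-\mu\mathfrak{L}^{\ast}\mathfrak{L}\succeq O_{\mathcal{X}}$ from~\eqref{eq:stepsizes} is tailored so that a Schur-complement/completion-of-squares argument dominates the off-diagonal couplings $-\mu\mathfrak{L}^{\ast}B^{\ast}B$ and $-\mu\mathfrak{L}^{\ast}$. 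With $\mathfrak{P}$ positive definite, $\Hilbert{\mathcal{H}}{\mathfrak{P}}$ is a genuine Hilbert space, and I would recognize $T_{\cLiGME}$ as a single forward--backward step in this metric: the off-diagonal blocks of $\mathfrak{P}$ are chosen precisely to absorb the primal--dual skew coupling, so the backward part reduces to the resolvent of the maximally monotone operator assembled from $N_C$, $\partial\Psi$ and $\partial\Psi^{\ast}$ — firmly nonexpansive in the $\mathfrak{P}$-metric and realized through $P_C$, $\Prox_{\frac{\mu}{\tau}\Psi}$ and $\Prox_{\Psi^{\ast}}$ — while the forward part is the gradient of the smooth convex piece $x\mapsto\frac12\norm{y-Ax}_{\mathcal{Y}}^2-\frac{\mu}{2}\norm{B\mathfrak{L}x}_{\widetilde{\mathcal{Z}}}^2$, whose convexity is guaranteed exactly by~\eqref{eq:occ_general} and which is therefore cocoercive by Baillon--Haddad. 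The step-size conditions~\eqref{eq:stepsizes} are calibrated so that in the $\mathfrak{P}$-metric this forward part is $\frac{1}{\kappa}$-averaged; the standard composition rule for averaged operators, applied to the pair $(\frac12,\frac1\kappa)$, then returns $\frac{1/2}{1-1/(2\kappa)}=\frac{\kappa}{2\kappa-1}$, the claimed constant.

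Part~\ref{item:convergence} follows immediately: by part~\ref{item:nonexpansive}, $T_{\cLiGME}$ is $\alpha$-averaged with $\alpha=\frac{\kappa}{2\kappa-1}\in(0,1)$ in $\Hilbert{\mathcal{H}}{\mathfrak{P}}$, and by part~\ref{item:fixed_point} the assumption $\mathcal{S}_C\neq\emptyset$ forces $\Fix(T_{\cLiGME})\neq\emptyset$, so Fact~\ref{fct:picard} guarantees that the Picard iterates~\eqref{eq:algorithm} converge in $\norm{\cdot}_{\mathfrak{P}}$ to some fixed point. Since $\mathfrak{P}\succ O_{\mathcal{H}}$ in finite dimension makes $\norm{\cdot}_{\mathfrak{P}}$ equivalent to the native norm, every coordinate converges, and part~\ref{item:fixed_point} identifies the limit of $(x_k)$ with a point of $\mathcal{S}_C$. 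I expect the main obstacle to be part~\ref{item:nonexpansive}: both the positive-definiteness of $\mathfrak{P}$ in the presence of the two off-diagonal couplings and, more delicately, the extraction of the \emph{exact} averagedness constant require the step-size conditions~\eqref{eq:stepsizes} to be used in a sharp, non-wasteful way.
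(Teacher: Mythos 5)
Your route is essentially the paper's own proof in \ref{appendix:proof_theorem}: characterize $\mathcal{S}_{C}$ by a primal--dual monotone inclusion in $(x,v,w)$ with Lemma~\ref{lem:relative_interior} serving as the constraint qualification for the chain rule in place of even symmetry (the paper's Claim~\ref{claim:inclusion}); recognize $T_{\cLiGME}=(\Id+\mathfrak{P}^{-1}\circ G_{C})^{-1}\circ(\Id-\mathfrak{P}^{-1}\circ F)$ as a preconditioned forward--backward step in $\Hilbert{\mathcal{H}}{\mathfrak{P}}$; compose the $\frac{1}{2}$-averaged resolvent with the $\frac{1}{\kappa}$-averaged forward step to get $\frac{\kappa}{2\kappa-1}$; and finish with the Krasnosel'ski\u{\i}--Mann iteration of Fact~\ref{fct:picard} plus norm equivalence. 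Your Schur-complement argument for $\mathfrak{P}\succ O_{\mathcal{H}}$ and the Baillon--Haddad framing of the forward step are acceptable substitutes for the paper's citations to~\cite{abe_linearly_2020}.

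Two bookkeeping slips in your part~\ref{item:nonexpansive} would surface the moment you verify the splitting termwise against~\eqref{eq:TcLiGME}. First, the smooth forward piece is not a function of $x$ alone: the paper's $F$ in~\eqref{eq:FG} carries the second component $\mu B^{\ast}Bv$, i.e., $F=\nabla h$ with $h(x,v,w)=\frac{1}{2}\norm{y-Ax}_{\mathcal{Y}}^{2}-\frac{\mu}{2}\bnorm{B\mathfrak{L}x}_{\widetilde{\mathcal{Z}}}^{2}+\frac{\mu}{2}\bnorm{Bv}_{\widetilde{\mathcal{Z}}}^{2}$. Dropping the $v$-term, as your sketch does, breaks the reconstruction of the second line of~\eqref{eq:TcLiGME} --- the input $(\Id-\frac{\mu}{\tau}B^{\ast}B)v$ to $\Prox_{\frac{\mu}{\tau}\Psi}$ comes precisely from this component --- and it leaves the condition $\tau\geq(\frac{\kappa}{2}+\frac{2}{\kappa})\mu\opnorm{B}^{2}$ in~\eqref{eq:stepsizes} with nothing to control; pushing $\mu B^{\ast}Bv$ into the monotone part instead would destroy the closed-form prox evaluation. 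Second, the off-diagonal blocks of $\mathfrak{P}$ do \emph{not} remove the skew coupling from the backward operator: with $G_{C}=G_{C}^{\ev{1}}+G_{C}^{\ev{2}}$ as in the paper's Claim~\ref{claim:maximally_monotone}, those blocks cancel only the upper-triangular half of the skew part, rendering $\mathfrak{P}+G_{C}$ block lower triangular so that the resolvent can be evaluated sequentially (this is exactly the origin of the $2\xi-x$ arguments). The backward step is therefore the resolvent of $\mathfrak{P}^{-1}\circ G_{C}$ with the skew part retained, and its firm nonexpansiveness in $\Hilbert{\mathcal{H}}{\mathfrak{P}}$ rests on maximal monotonicity of the sum $G_{C}^{\ev{1}}+G_{C}^{\ev{2}}$ via Fact~\ref{fct:maximally_monotone}~\ref{item:MM_skew}, \ref{item:MM_sum} and~\ref{item:MM_hilbert} --- not merely of the operator assembled from $N_{C}$, $\partial\Psi$ and $\partial\Psi^{\ast}$; if the backward part were only that subdifferential resolvent, the three updates in~\eqref{eq:TcLiGME} would decouple entirely. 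Both slips are repairable within your strategy, and once repaired your argument coincides with the paper's.
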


\section{Proposed LiGME regularizer with designated isolated local minimizers}
\label{sec:proposed_regularizer_and_algorithm}

\subsection{Proposed LiGME regularizer}
\label{sec:proposed_regularizer}

We propose a contrastive regularizers $\Theta_{\LiGME}$ in~\eqref{eq:theta_LiGME}, which is designed as a sum of the (shifted) GME functions~\eqref{eq:GME} of $\norm{\cdot}_{\bm{\omega}^{\ev{l}};1}$ $(l=1,2,\ldots,L)$.
By a strategic choice of $\bigl(\B^{\ev{l}}\bigr)_{l=1}^{L}$, we can make $\Theta_{\LiGME}$ fairly contrastive (see Theorem~\ref{thm:isolated} below, Fig.~\ref{fig:theta_LiGME_simple}~\subref{fig:theta_LiGME_id} and Fig.~\ref{fig:theta_LiGME_practical}).
Theorem~\ref{thm:isolated} shows that $\Theta_{\LiGME}$ with simple GME matrices has multiple isolated local minimizers at the designated vectors in $\mathfrak{A}^{N}$.
\newcommand{\MC}[1]{\left({}^{#1}\abs{\cdot}_{\text{MC}}\right)}
\begin{theorem}[Isolated local minimizers at designated points] \label{thm:isolated}
  Let
  $\mathfrak{A}\coloneqq \left\{a^{\ev{1}},a^{\ev{2}},\ldots,a^{\ev{L}}\right\}\subset \mathbb{R}$,
  and
  $\bm{\omega}^{\ev{l}} \coloneqq \left(\omega^{\ev{l}}_{1},\omega^{\ev{l}}_{2},\ldots,\omega^{\ev{l}}_{N}\right)^{\top} \in \mathbb{R}_{++}^{N}\ (l=1,2,\ldots,L)$
  satisfy
  $\sum_{l=1}^{L}\omega^{\ev{l}}_{n}=1$ $(n=1,2,\ldots,N)$.
  For
  $\Theta_{\LiGME}:\mathbb{R}^{N}\to \mathbb{R}$
  in~\eqref{eq:theta_LiGME}
  with
  $\B^{\ev{l}} \coloneqq b\I_{N}\ (l=1,2,\ldots,L)$
  and
  $b > 0$,
  the following hold:
  \begin{enumerate}[label=(\alph*), leftmargin=*, itemsep=1ex]
    \item \label{item:expression}
          $\Theta_{\LiGME}$
          has the following expression
          \begin{equation}
            \label{eq:separation}
            \myskip{0.5}
            (\x=(x_{1},x_{2},\ldots,x_{N})^{\top}\in\mathbb{R}^{N})\quad
            \Theta_{\LiGME}(\x)
            = \sum_{n=1}^{N} \psi_{n}(x_{n})
          \end{equation}
          with univariate functions
          \begin{equation}
            \psi_{n}:\mathbb{R}\to\mathbb{R}:x\mapsto  \sum_{l=1}^{L} \omega^{\ev{l}}_{n}\MC{\frac{b^{2}}{\omega^{\ev{l}}_{n}}}(x-a^{\ev{l}}),
          \end{equation}
          where the so-called \emph{Minimax Concave Penalty (MCP)}~\cite{zhang_nearly_2010}
          $\MC{\gamma}$
          with
          $\gamma > 0$
          is given by
            {
              \myskip{0.5}
              \begin{align} \label{eq:MCP}
                \hspace{-1em}\MC{\gamma}:\mathbb{R} \to \mathbb{R}: x\mapsto
                \begin{cases}
                  \abs{x} - \frac{\gamma}{2}x^{2}, & \mathrm{if}\ \abs{x}\leq \frac{1}{\gamma} \\
                  \frac{1}{2\gamma},               & \mathrm{if}\ \abs{x} > \frac{1}{\gamma},
                \end{cases}
              \end{align}
            }%
          and also given by
          $\MC{\gamma}(x)=(\abs{\cdot})_{\sqrt{\gamma}}(x)$
          with the GME
          $(\abs{\cdot})_{\sqrt{\gamma}}$
          of
          $\abs{\cdot}$
          (see, e.g.,~\cite[Prop. 12]{selesnick_sparse_2017}).
    \item \label{item:designated}
          For
          $d_{\min}\coloneqq \min\{\abs{a-a'} \mid a,a'\in \mathfrak{A}, a\neq a'\} > 0$
          and
          $\omega_{\max} \coloneqq \max\left\{\omega^{\ev{l}}_{n} \middle|\, l=1,2,\ldots,L,\ n=1,2,\ldots,N\right\} > 0$,
          let
          $b >  \sqrt{\frac{2\omega_{\max}}{d_{\min}}}(>0)$.
          Then,
          $\mathfrak{A}^{N}$ is the set of all isolated local minimizers of
          $\Theta_{\LiGME}$.
  \end{enumerate}
  \begin{proof}
    See~\ref{appendix:proof_isolated}.
  \end{proof}
\end{theorem}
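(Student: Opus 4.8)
The plan is to establish~\ref{item:expression} by separability and then derive~\ref{item:designated} from a univariate analysis via Lemma~\ref{lemma:separable}~\ref{item:separable:isolated}. For~\ref{item:expression}, I would use that the choice $\B^{\ev{l}}=b\I_{N}$ makes the quadratic $\frac12\bnorm{\B^{\ev{l}}(\cdot-\v)}_{2}^{2}=\frac{b^{2}}{2}\norm{\cdot-\v}_{2}^{2}$ in the GME~\eqref{eq:GME} separable across coordinates, and the weighted $\ell_{1}$ norm $\norm{\cdot}_{\bm{\omega}^{\ev{l}};1}$ is separable as well; hence the minimization defining the GME decouples into $N$ scalar problems, giving $\bigl(\norm{\cdot}_{\bm{\omega}^{\ev{l}};1}\bigr)_{b\I_{N}}(\uu)=\sum_{n=1}^{N}\bigl(\omega^{\ev{l}}_{n}\abs{\cdot}\bigr)_{b}(u_{n})$. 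Pulling the weight out of the scalar minimization yields the scaling identity $\bigl(\omega^{\ev{l}}_{n}\abs{\cdot}\bigr)_{b}=\omega^{\ev{l}}_{n}\,\bigl(\abs{\cdot}\bigr)_{b/\sqrt{\omega^{\ev{l}}_{n}}}=\omega^{\ev{l}}_{n}\MC{b^{2}/\omega^{\ev{l}}_{n}}$, where the last step is the quoted identity $\MC{\gamma}=(\abs{\cdot})_{\sqrt{\gamma}}$. Inserting the shift $\uu=\x-a^{\ev{l}}\ones_{N}$ and summing over $l$ gives~\eqref{eq:separation}.

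For~\ref{item:designated}, Lemma~\ref{lemma:separable}~\ref{item:separable:isolated} reduces the claim to showing that, under $b>\sqrt{2\omega_{\max}/d_{\min}}$, the set of isolated local minimizers of each univariate $\psi_{n}$ is exactly $\mathfrak{A}$. The structural facts I would extract from~\eqref{eq:MCP} are that $\MC{\gamma}(\cdot-a)$ is constant outside its \emph{active region} $\{x:\abs{x-a}<1/\gamma\}$, is strictly monotone away from $a$ inside it, is concave on each side of $a$, and has a convex kink at $a$. For the $l$-th summand $\gamma^{\ev{l}}_{n}=b^{2}/\omega^{\ev{l}}_{n}$, so its active radius is $\omega^{\ev{l}}_{n}/b^{2}\le\omega_{\max}/b^{2}$; the bound $b>\sqrt{2\omega_{\max}/d_{\min}}$ is precisely what forces $d_{\min}>2\omega_{\max}/b^{2}$, hence that any two distinct summands have \emph{disjoint} active regions (their radii sum to at most $2\omega_{\max}/b^{2}<d_{\min}\le\abs{a^{\ev{l}}-a^{\ev{l'}}}$).

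With disjoint active regions the shape of $\psi_{n}$ is transparent. On a neighborhood of each $a^{\ev{l}}$ only the $l$-th summand is non-constant, so there $\psi_{n}=\mathrm{const}+\omega^{\ev{l}}_{n}\MC{\gamma^{\ev{l}}_{n}}(\cdot-a^{\ev{l}})$, which strictly increases as $x$ leaves $a^{\ev{l}}$; thus $a^{\ev{l}}$ is a strict, hence isolated (Definition~\ref{dfn:isolated}~(c)), local minimizer. Conversely, on any open interval meeting no point of $\mathfrak{A}$ every summand is smooth and concave, so $\psi_{n}$ is concave there, and a concave function has no isolated interior local minimizer, since its only interior local minima lie on maximal subintervals on which it is constant and such points are non-isolated. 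Therefore no point outside $\mathfrak{A}$ is an isolated local minimizer of $\psi_{n}$, so the isolated local minimizers of $\psi_{n}$ are exactly $\mathfrak{A}$, and by Lemma~\ref{lemma:separable}~\ref{item:separable:isolated} those of $\Theta_{\LiGME}$ are exactly $\mathfrak{A}^{N}$.

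I expect the main obstacle to be part~\ref{item:designated}, and specifically the global ``no other isolated minimizers'' direction together with the precise role of the factor $\sqrt{2}$: the weaker bound $b>\sqrt{\omega_{\max}/d_{\min}}$ already renders each $a^{\ev{l}}$ an isolated local minimizer, but the factor $2$ guarantees pairwise-disjoint active regions, which cleanly organizes the whole line into valleys at $\mathfrak{A}$ separated by flat plateaus and so makes the concavity/plateau dichotomy between alphabet points airtight. Verifying the concavity of each branch of $\MC{\gamma}$ and the elementary fact that a concave function admits no isolated interior local minimizer is the technical heart of this step.
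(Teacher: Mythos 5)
Your proposal is correct, and its skeleton coincides with the paper's proof in \ref{appendix:proof_isolated}: part~\ref{item:expression} via the same coordinate-wise decoupling of the GME under $\B^{\ev{l}}=b\I_{N}$ and the scaling identity $\bigl(\omega\abs{\cdot}\bigr)_{b}=\omega\MC{b^{2}/\omega}$, and part~\ref{item:designated} via Lemma~\ref{lemma:separable}~\ref{item:separable:isolated} together with the active regions $\mathcal{N}_{a^{\ev{l}},n}\coloneqq\{x\in\mathbb{R}\mid\abs{x-a^{\ev{l}}}<\omega^{\ev{l}}_{n}/b^{2}\}$, whose pairwise disjointness is exactly what $b>\sqrt{2\omega_{\max}/d_{\min}}$ purchases in both arguments. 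The one genuine divergence is the mechanism excluding non-alphabet points: the paper computes the explicit derivative $\sgn(x-a^{\ev{l}})-\tfrac{b^{2}}{\omega^{\ev{l}}_{n}}(x-a^{\ev{l}})\neq 0$ inside each punctured valley and observes that $\psi_{n}$ is \emph{constant} outside the union of the (disjoint) active regions, so plateau points are local but never isolated minimizers; you instead note that each summand is $C^{1}$ and concave on any interval avoiding its center, hence $\psi_{n}$ is concave on every open interval disjoint from $\mathfrak{A}$, and a concave function admits no isolated interior local minimizer (a local minimum of a concave function forces local constancy). Your route is mildly more general: as you yourself observe, the concavity argument does not need disjoint active regions, so your proof in fact yields the conclusion already under the weaker bound $b>\sqrt{\omega_{\max}/d_{\min}}$, whereas the paper's constancy-based Case~2 genuinely uses the factor $2$. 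One phrasing to repair: ``strict, hence isolated'' is false as a general implication (a strict local minimizer need not be isolated); the correct justification, which your listed structural facts already contain, is the strict monotonicity of $\psi_{n}$ on each side of $a^{\ev{l}}$ within $\mathcal{N}_{a^{\ev{l}},n}$, which rules out \emph{any} other local minimizer (isolated or not) in that neighborhood, as Definition~\ref{dfn:isolated}~(c) requires.
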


Lemma~\ref{lem:LiGME_regularizer} below demonstrates that $\Theta_{\LiGME}$ in~\eqref{eq:theta_LiGME} is an LiGME regularizer.

\begin{lemma}[]
  \,
  \label{lem:LiGME_regularizer}
  \begin{enumerate}[label=(\alph*), leftmargin=0em, itemsep=1ex, align=left]
    \item \label{item:commutativity}
          (Commutativity of GME and shift operator).
          Let $\mathcal{Z}$ and $\widetilde{\mathcal{Z}}$ be finite dimensional real Hilbert spaces, $\Phi\in\Gamma_{0}(\mathcal{Z})$ be coercive with $\dom\Phi=\mathcal{Z}$, $B\in\mathcal{B}(\mathcal{Z}, \widetilde{\mathcal{Z}})$ and $z\in\mathcal{Z}$.
          Consider the GME $[\Phi(\cdot-z)]_{B}$ (see~\eqref{eq:GME}) of $\Phi(\cdot-z):\mathcal{Z}\to\mathbb{R}:u\mapsto\Phi(u-z)$.
          Then we have for $u\in\mathcal{Z}$,
          \begin{equation}\label{eq:GME_shift}
            [\Phi(\cdot-z)]_{B}(u)=\Phi_{B}(u-z).
          \end{equation}
    \item \label{item:LiGME_regularizer}
          (Reformulation of $\Theta_{\LiGME}$ in~\eqref{eq:theta_LiGME} as an LiGME regularizer).
          Define
            {
              \myskip{0.5}
              \begin{align}
                \hspace{-2em}
                \mathfrak{L}: &
                \mathbb{R}^{N}\to
                \bigl(\mybigtimes_{l=1}^{L}\mathbb{R}^{N}\bigr)
                :\uu\mapsto (\uu)_{l=1}^{L},                                                                                                             \\
                \hspace{-2em}
                \Psi:         &
                \bigl(\mybigtimes_{l=1}^{L}\mathbb{R}^{N}\bigr)
                \to\mathbb{R}:\bigl(\uu^{\ev{l}}\bigr)_{l=1}^{L}\mapsto\sum_{l=1}^{L} \bnorm{\uu^{\ev{l}}-a^{\ev{l}}\ones_{N}}_{\bm{\omega}^{\ev{l}};1}, \\
                \hspace{-2em}
                B:           &
                \bigl(\mybigtimes_{l=1}^{L}\mathbb{R}^{N}\bigr)
                \to
                \bigl(\mybigtimes_{l=1}^{L}\mathbb{R}^{q^{\ev{l}}}\bigr)
                :\bigl(\uu^{\ev{l}}\bigr)_{l=1}^{L}\mapsto \bigl(\B^{\ev{l}}\uu^{\ev{l}}\bigr)_{l=1}^{L}.
              \end{align}}%
          Then, $\Theta_{\LiGME}$ in~\eqref{eq:theta_LiGME} can be expressed as a special instance of the LiGME regularizer in~\eqref{eq:LiGME_objective_function}, i.e.,
          \begin{equation}
            \Theta_{\LiGME}
            =\Psi_{B}\circ\mathfrak{L},
          \end{equation}
          where $\Psi$ is coercive with $\dom(\Psi)=\mathcal{X}$.
  \end{enumerate}

  \begin{proof}
    See~\ref{appendix:proof_LiGME_regularizer}.
  \end{proof}
\end{lemma}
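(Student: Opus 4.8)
The plan is to establish part~\ref{item:commutativity} first as a short change-of-variables computation and then to feed it into part~\ref{item:LiGME_regularizer}. For~\ref{item:commutativity}, set $g\coloneqq\Phi(\cdot-z)$ and expand the GME~\eqref{eq:GME} of $g$ directly:
\[
  g_{B}(u)=\Phi(u-z)-\min_{v\in\mathcal{Z}}\left[\Phi(v-z)+\tfrac{1}{2}\bnorm{B(u-v)}_{\widetilde{\mathcal{Z}}}^{2}\right].
\]
The only maneuver needed is the substitution $v'\coloneqq v-z$ in the inner minimization; since $v\mapsto v-z$ is a bijection on $\mathcal{Z}$ and $B(u-v)=B((u-z)-v')$, the minimum over $v$ equals the minimum over $v'$ of $\Phi(v')+\tfrac{1}{2}\bnorm{B((u-z)-v')}_{\widetilde{\mathcal{Z}}}^{2}$, which is precisely the defining minimization of $\Phi_{B}$ evaluated at $u-z$. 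Comparing with~\eqref{eq:GME} then yields~\eqref{eq:GME_shift}. I would also note in passing that the minimum is attained, so that $g_{B}$ and $\Phi_{B}$ are real-valued: the map $v\mapsto\Phi(v)+\tfrac{1}{2}\bnorm{B(\cdot-v)}_{\widetilde{\mathcal{Z}}}^{2}$ is proper, lower semicontinuous, convex and coercive (because $\Phi$ is coercive with full domain and the quadratic term is nonnegative), hence admits a minimizer by the existence result quoted after~\eqref{eq:indicator}.

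For part~\ref{item:LiGME_regularizer}, the key observation is that the product structure of $\mybigtimes_{l=1}^{L}\mathbb{R}^{N}$ renders the GME minimization of $\Psi$ with $B$ \emph{block-separable}. From the product-space inner product one has $\bnorm{B\v}^{2}=\sum_{l=1}^{L}\bnorm{\B^{\ev{l}}\v^{\ev{l}}}_{2}^{2}$ and $\Psi(\v)=\sum_{l=1}^{L}\bnorm{\v^{\ev{l}}-a^{\ev{l}}\ones_{N}}_{\bm{\omega}^{\ev{l}};1}$, so for $\uu=(\uu^{\ev{l}})_{l=1}^{L}$ the joint minimization in~\eqref{eq:GME} splits into $L$ independent block minimizations. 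Substituting this back gives
\[
  \Psi_{B}(\uu)=\sum_{l=1}^{L}\bigl[\bnorm{\cdot-a^{\ev{l}}\ones_{N}}_{\bm{\omega}^{\ev{l}};1}\bigr]_{\B^{\ev{l}}}(\uu^{\ev{l}}),
\]
i.e., each summand is the GME (with matrix $\B^{\ev{l}}$) of the shifted weighted $\ell_{1}$-norm. Applying part~\ref{item:commutativity} with $\Phi=\norm{\cdot}_{\bm{\omega}^{\ev{l}};1}$ and $z=a^{\ev{l}}\ones_{N}$ rewrites each summand as $(\norm{\cdot}_{\bm{\omega}^{\ev{l}};1})_{\B^{\ev{l}}}(\uu^{\ev{l}}-a^{\ev{l}}\ones_{N})$. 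Finally, evaluating at $\uu=\mathfrak{L}(\x)=(\x)_{l=1}^{L}$ (the diagonal embedding, so that every $\uu^{\ev{l}}=\x$) reproduces term by term the definition of $\Theta_{\LiGME}$ in~\eqref{eq:theta_LiGME}, which establishes $\Theta_{\LiGME}=\Psi_{B}\circ\mathfrak{L}$.

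It remains to record the stated regularity of $\Psi$. Each block map $\v^{\ev{l}}\mapsto\bnorm{\v^{\ev{l}}-a^{\ev{l}}\ones_{N}}_{\bm{\omega}^{\ev{l}};1}$ is a finite-valued, nonnegative, convex and coercive function (a shifted weighted $\ell_{1}$-norm with strictly positive weights); hence $\Psi\in\Gamma_{0}(\mybigtimes_{l=1}^{L}\mathbb{R}^{N})$ is finite everywhere, so $\dom\Psi=\mybigtimes_{l=1}^{L}\mathbb{R}^{N}$, and its coercivity follows since $\norm{\v}\to\infty$ forces $\bnorm{\v^{\ev{l_{0}}}}_{2}\to\infty$ for some block $l_{0}$, driving that summand to $+\infty$ while the remaining summands stay nonnegative. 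I expect the block-separability step of part~\ref{item:LiGME_regularizer} to demand the most care, since it is the place where the product-space geometry, namely the additivity of both the penalty $\Psi$ and the squared image norm $\bnorm{B\cdot}^{2}$ across the $L$ blocks, must be made explicit; once separability is in hand, part~\ref{item:commutativity} and the diagonal-embedding evaluation are routine.
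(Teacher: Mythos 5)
Your proof is correct, and for part~\ref{item:commutativity} it is identical to the paper's: the same change of variables $v'\coloneqq v-z$ in the inner minimization of~\eqref{eq:GME}. For part~\ref{item:LiGME_regularizer} the overall structure also matches (apply~\ref{item:commutativity} to fold the shifts into the seed functions, then identify the resulting sum with $\Psi_{B}\circ\mathfrak{L}$), but you handle the one substantive step differently: the paper disposes of it by citing the reformulation in~\cite[Exm.~3]{abe_linearly_2020}, whereas you prove that fact directly, showing that the product-space inner product makes both $\Psi$ and $\bnorm{B\cdot}^{2}$ additive across blocks, so the joint minimization in~\eqref{eq:GME} decouples into $L$ independent block minimizations and $\Psi_{B}(\uu)=\sum_{l=1}^{L}\bigl[\bnorm{\cdot-a^{\ev{l}}\ones_{N}}_{\bm{\omega}^{\ev{l}};1}\bigr]_{\B^{\ev{l}}}(\uu^{\ev{l}})$, after which evaluation at $\uu=\mathfrak{L}(\x)$ gives~\eqref{eq:theta_LiGME} term by term. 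This buys self-containedness at the cost of length: your version does not depend on the external example, and it additionally records two points the paper leaves implicit, namely attainment of the minimum in~\eqref{eq:GME} (via coercivity of $v\mapsto\Phi(v)+\frac{1}{2}\bnorm{B(\cdot-v)}_{\widetilde{\mathcal{Z}}}^{2}$, so the GME is real-valued) and the coercivity and full domain of $\Psi$, which are exactly the hypotheses needed to invoke the lemma downstream in Proposition~\ref{prop:cLiGME_algorithm}. No gaps.
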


\subsection{Proposed model and algorithm for discrete-valued signal estimation}
\label{sec:proposed_algorithm}

\begin{algorithm}[t]
  \caption{A cLiGME algorithm for proposed model~\eqref{eq:proposed_model}}
  \begin{algorithmic} [1]
    \label{alg}
    \STATE{
    Choose
    $\left(\x_{0},(\v_{0}^{\ev{l}})_{l=1}^{L},(\w_{0}^{\ev{l}})_{l=1}^{L}\right)\in\mathbb{R}^{N}
      \times
      \left(\mybigtimes_{l=1}^{L}\mathbb{R}^{N}\right)
      \times
      \left(\mybigtimes_{l=1}^{L}\mathbb{R}^{N}\right)$.
    }
    \STATE{
      Choose
      $(\sigma, \tau, \kappa)\in\mathbb{R}_{++}\times\mathbb{R}_{++}\times(1,\infty)$ satisfying\footnotemark
      \begin{equation}
        \label{eq:stepsizes2}
        \begin{cases}
          (\sigma-\mu L)\Id-\frac{\kappa}{2}\A^{\top}\A\succeq \O_{N\times N},
          \\
          \tau\geq(\frac{\kappa}{2}+\frac{2}{\kappa})\mu\max\left\{
          \opnorm{\B^{\ev{l}}}^2\mid l=1,2,\ldots,L
          \right\}.
        \end{cases}
      \end{equation}
    }
    \FOR{$k=0,1,2,\ldots$}
    \STATE{
      \COMMENT{
        Modification in \ref{appendix:modification} can be inserted if necessary.
      }
    }
    \STATE{
      $\x_{k+1} \gets P_{\mathcal{C}}\Bigl[
          \x_{k}-\frac{1}{\sigma}\A^{\top}(\A\x_{k}-\y)$
            \\ \hfill
          $+\frac{\mu}{\sigma}\sum_{l=1}^{L}\left({\B^{\ev{l}}}^{\top}\B^{\ev{l}}\left(\x_{k}-\v_{k}^{\ev{l}}\right)-\w_{k}^{\ev{l}}\right)
          \Bigr]$
    }
    \FOR{$l=1,2,\ldots,L$}
    \STATE{
    $\v_{k+1}^{\ev{l}} \gets a^{\ev{l}}\ones_{N}+\Prox_{\frac{\mu}{\tau}\norm{\cdot}_{\bm{\omega}^{\ev{l}};1}}
      \Bigl[\frac{\mu}{\tau}{\B^{\ev{l}}}^{\top}\B^{\ev{l}}\left(2\x_{k+1}-\x_{k}-\v_{k}^{\ev{l}}\right)$
    \\ \hfill
    $+\v_{k}^{\ev{l}}-a^{\ev{l}}\ones_{N}\Bigr]$
    }
    \STATE{
    $\w_{k+1}^{\ev{l}} \gets \left(\Id - \Prox_{\norm{\cdot}_{\bm{\omega}^{\ev{l}};1}}\right)
      \Bigl[2\x_{k+1}-\x_{k}+\w_{k}^{\ev{l}}- a^{\ev{l}}\ones_{N}\Bigr]$
    }
    \ENDFOR
    \ENDFOR
    \RETURN{$\x_{k+1}$}
  \end{algorithmic}
\end{algorithm}
\footnotetext{
  \label{foot:sigma_tau_kappa}
  For example, \eqref{eq:stepsizes2} is satisfied by any $\kappa>1$ and
  \begin{equation}
    \hspace{-0.5em}
    \begin{cases}
      \sigma\coloneqq\frac{\kappa}{2}\opnorm{\A}^2+\mu L+(\kappa-1),
      \\
      \tau\coloneqq({\frac{\kappa}{2}+\frac{2}{\kappa}})\mu
      \max\left\{\opnorm{\B^{\ev{l}}}^2\mid l=1,2,\ldots,L\right\}+(\kappa-1).
    \end{cases}
  \end{equation}
}

By using $\Theta_{\LiGME}$ (see~\eqref{eq:theta_LiGME}) in Step~1 of Scheme~\ref{scheme}, we propose the following model:
\begin{equation}\label{eq:proposed_model}
  \myskip{0.0}
  \hspace{-2em}
  \text{Find~}\x^{\lozenge}\in
  \argmin_{\x\in\mathcal{C}}
  \overbrace{
  \frac12\norm{\y-\A\x}_2^2+\mu
  \underbrace{
  \sum_{l=1}^{L}\bigl(\norm{\cdot}_{\bm{\omega}^{\ev{l}};1}\bigr)_{\B^{\ev{l}}}\bigl(\x-a^{\ev{l}}\ones_{N}\bigr)
  }_{=\Theta_{\LiGME}(\x)}
  }^{J_{\Theta_{\LiGME}}(\x)=},
\end{equation}
where $\mathcal{C}\subset\mathbb{R}^{N}$ is a closed convex superset of $\mathfrak{A}^{N}$, and $\B^{\ev{l}}\in\mathbb{R}^{q^{\ev{l}}\times N}$ $(l=1,2,\ldots,L)$ are GME matrices (Note: $q^{\ev{l}}$ can be any positive integer).
If we set $\B^{\ev{l}}=\O_{q^{\ev{l}}\times N}$ $(l=1,2,\ldots,L)$, then $\Theta_{\LiGME}=\Theta_{\SOAV}$ holds true.
Under the setting $(\mathfrak{L}, \Psi, B)$ in Lemma~\ref{lem:LiGME_regularizer}~\ref{item:LiGME_regularizer}, the proposed model~\eqref{eq:proposed_model} can be seen as a special instance of the cLiGME model in Problem~\ref{prob:cLiGME_model} although $\Psi$ violates $\Psi\circ(-\Id) = \Psi$.

By tuning properly $\bigl(\B^{\ev{l}}\bigr)_{l=1}^{L}$ in the proposed LiGME regularizers~\eqref{eq:theta_LiGME}, we can make the proposed nonconvexly-regularized least squares model~\eqref{eq:proposed_model} convex.
More precisely, by noting that the proposed model~\eqref{eq:proposed_model} is an instance of Problem~\ref{prob:cLiGME_model}, Fact~\ref{fct:occ} and Lemma~\ref{lem:LiGME_regularizer}~\ref{item:LiGME_regularizer} ensure that $J_{\Theta_{\LiGME}}$ in~\eqref{eq:proposed_model} is convex if $\bigl(\B^{\ev{l}}\bigr)_{l=1}^{L}$ satisfies
\begin{equation}\label{eq:occ_special}
  \A^{\top}\A-\mu\sum_{l=1}^{L}{\B^{\ev{l}}}^{\top}\B^{\ev{l}}\succeq \O_{N\times N}.
\end{equation}
For example, the following $\B^{\ev{l}}$ satisfies~\eqref{eq:occ_special}:
\begin{equation} \label{eq:choice_Bl}
  \B^{\ev{l}}\coloneqq\sqrt{{\gamma^{\ev{l}}}/{\mu}}\,\A\in\mathbb{R}^{M\times N}\
  (l=1,2,\ldots,L),
\end{equation}
where $\gamma^{\ev{l}}\in\mathbb{R}_{+}$ $(l=1,2,\ldots,L)$ are chosen to satisfy $\sum_{l=1}^{L}\gamma^{\ev{l}}\leq 1$
(Note: $\B^{\ev{l}}$ was introduced originally in~\cite{selesnick_sparse_2017} for nonconvex enhancement of LASSO regularizer~\cite{tibshirani_regression_1996}, i.e., $L=1$ case).

Algorithm~\ref{alg} illustrates a concrete expression of the Krasnosel'ski\u{\i}-Mann iteration~\eqref{eq:TcLiGME},\eqref{eq:stepsizes} and~\eqref{eq:algorithm} in Proposition~\ref{prop:cLiGME_algorithm} in Section~\ref{sec:cLiGME},
where we used the expressions, in Example~\ref{ex:proximity_operator}~\ref{item:conjugate} and \ref{item:shifted}, of the proximity operators.
Theorem~\ref{thm:proposed_algorithm} below presents a convergence guarantee of Algorithm~\ref{alg}.
We remark that $\norm{\cdot}_{\bm{\omega}^{\ev{l}};1}$ in~\eqref{eq:theta_SOAV} and Algorithm~\ref{alg} is prox-friendly%
\footnote{
$\Prox_{\gamma\norm{\cdot}_{\bm{\omega}^{\ev{l}};1}}:\mathbb{R}^{N}\to\mathbb{R}^{N}:\uu\coloneqq(u_{1},u_{2},\ldots u_{N})^{\top}\mapsto\p\coloneqq(p_{1},p_{2},\ldots p_{N})^{\top}$
$(\gamma\in\mathbb{R_{++}})$
is given by
$  (n=1,2,\ldots,N)\quad p_{n}=\sgn(u_{n})\max\left\{0,\abs{u_{n}}-\gamma\omega^{\ev{l}}_{n}\right\}
$~\cite[Prop. 2]{kowalski_sparse_2009}.
}.

\begin{theorem}[Convergence property of Algorithm~\ref{alg}]
  \label{thm:proposed_algorithm}
  Consider the proposed model~\eqref{eq:proposed_model}.
  Suppose that $\bigl(\B^{\ev{l}}\bigr)_{l=1}^{L}$ satisfies the overall convexity condition~\eqref{eq:occ_special} of $J_{\Theta_{\LiGME}}$.
  Then a sequence $(\x_{k})_{k=0}^{\infty}\subset \mathbb{R}^{N}$ generated by Algorithm~\ref{alg} converges to a global minimizer of $J_{\Theta_{\LiGME}}$ over $\mathcal{C}$.
  \begin{proof}
    Combining Proposition~\ref{prop:cLiGME_algorithm} and Lemma~\ref{lem:LiGME_regularizer}~\ref{item:LiGME_regularizer} completes the proof of Theorem~\ref{thm:proposed_algorithm}.
  \end{proof}
\end{theorem}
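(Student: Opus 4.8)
The plan is to read Theorem~\ref{thm:proposed_algorithm} as an immediate specialization of Proposition~\ref{prop:cLiGME_algorithm}, with Lemma~\ref{lem:LiGME_regularizer}~\ref{item:LiGME_regularizer} supplying the dictionary that turns the concrete model~\eqref{eq:proposed_model} and Algorithm~\ref{alg} into the abstract cLiGME model of Problem~\ref{prob:cLiGME_model} and its iteration~\eqref{eq:TcLiGME}--\eqref{eq:algorithm}. First I would fix the identification $(\mathcal{X},\mathcal{Z},\widetilde{\mathcal{Z}},A)=(\mathbb{R}^{N},\mybigtimes_{l=1}^{L}\mathbb{R}^{N},\mybigtimes_{l=1}^{L}\mathbb{R}^{q^{\ev{l}}},\A)$ together with $(\mathfrak{L},\Psi,B)$ from Lemma~\ref{lem:LiGME_regularizer}~\ref{item:LiGME_regularizer}, so that $\Theta_{\LiGME}=\Psi_{B}\circ\mathfrak{L}$ and hence $J_{\Theta_{\LiGME}}=J_{\Psi_{B}\circ\mathfrak{L}}$. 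I would then check the standing hypotheses (a)--(d) of Problem~\ref{prob:cLiGME_model}: (c) holds since $\mathcal{C}$ is a simple closed convex set; (d) holds because Lemma~\ref{lem:LiGME_regularizer}~\ref{item:LiGME_regularizer} gives $\Psi\in\Gamma_{0}(\mathcal{Z})$ coercive with $\dom\Psi=\mathcal{Z}$, and $\Psi$ is prox-friendly since it splits into $L$ shifted weighted $\ell_{1}$-norms, each prox-friendly by the soft-threshold formula in the footnote combined with the shift rule of Example~\ref{ex:proximity_operator}~\ref{item:shifted}. Crucially, Proposition~\ref{prop:cLiGME_algorithm} does not require $\Psi\circ(-\Id)=\Psi$, so the fact that $\Psi$ here violates even symmetry is harmless.

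Next I would verify that the theorem's convexity assumption~\eqref{eq:occ_special} is exactly~\eqref{eq:occ_general} read through the dictionary. Since $B$ is block-diagonal and $\mathfrak{L}$ is the $L$-fold duplication $\uu\mapsto(\uu)_{l=1}^{L}$, one has $A^{\ast}A=\A^{\top}\A$ and $\mathfrak{L}^{\ast}B^{\ast}B\mathfrak{L}=\sum_{l=1}^{L}{\B^{\ev{l}}}^{\top}\B^{\ev{l}}$, so~\eqref{eq:occ_general} collapses to~\eqref{eq:occ_special}; thus Proposition~\ref{prop:cLiGME_algorithm} applies, with $\mathcal{S}_{\mathcal{C}}\neq\emptyset$ guaranteed e.g.\ when $\mathcal{C}$ is compact, as for the convex hull of $\mathfrak{A}^{N}$. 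I would also record the two identities $\mathfrak{L}^{\ast}\mathfrak{L}=L\,\Id$ and $\opnorm{B}^{2}=\max_{l}\opnorm{\B^{\ev{l}}}^{2}$, under which the abstract step-size conditions~\eqref{eq:stepsizes} become precisely the conditions~\eqref{eq:stepsizes2} imposed in Algorithm~\ref{alg}.

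The remaining work is to confirm, line by line, that Algorithm~\ref{alg} is the componentwise rendering of the operator $T_{\cLiGME}$ in~\eqref{eq:TcLiGME}: writing $v=(\v^{\ev{l}})_{l=1}^{L}$ and $w=(\w^{\ev{l}})_{l=1}^{L}$, the $\xi$-line produces the $\x_{k+1}$-update after substituting $\mathfrak{L}^{\ast}(\cdot)=\sum_{l}(\cdot)$; the $\zeta$-line yields the blockwise $\v^{\ev{l}}$-update, where the offset $a^{\ev{l}}\ones_{N}$ appears via Example~\ref{ex:proximity_operator}~\ref{item:shifted} applied to the $l$-th summand $\norm{\cdot-a^{\ev{l}}\ones_{N}}_{\bm{\omega}^{\ev{l}};1}$ of $\Psi$; and the $\eta$-line yields the blockwise $\w^{\ev{l}}$-update after rewriting $\Prox_{\Psi^{\ast}}=\Id-\Prox_{\Psi}$ by Example~\ref{ex:proximity_operator}~\ref{item:conjugate} and again accounting for the shift. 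Invoking Proposition~\ref{prop:cLiGME_algorithm}~\ref{item:convergence} then gives $\x_{k}\to\overline{x}\in\mathcal{S}_{\mathcal{C}}=\argmin_{\x\in\mathcal{C}}J_{\Theta_{\LiGME}}$, i.e.\ convergence to a global minimizer, as claimed. I expect the main obstacle to be exactly this blockwise unpacking of the two proximity steps: one must correctly compose the conjugation identity with the shift rule so that each offset $a^{\ev{l}}\ones_{N}$ lands in the right place, and confirm that the separability of $\Psi$ across the $L$ blocks legitimizes the inner loop over $l$; the adjoint and operator-norm computations are then routine.
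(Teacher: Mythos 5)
Your proposal is correct and follows exactly the paper's own route: the paper's proof is the one-line combination of Proposition~\ref{prop:cLiGME_algorithm} and Lemma~\ref{lem:LiGME_regularizer}~\ref{item:LiGME_regularizer}, and you have simply spelled out the details left implicit there (verifying hypotheses (a)--(d) of Problem~\ref{prob:cLiGME_model}, the identities $\mathfrak{L}^{\ast}B^{\ast}B\mathfrak{L}=\sum_{l=1}^{L}{\B^{\ev{l}}}^{\top}\B^{\ev{l}}$, $\mathfrak{L}^{\ast}\mathfrak{L}=L\,\Id$ and $\opnorm{B}^{2}=\max_{l}\opnorm{\B^{\ev{l}}}^{2}$ matching \eqref{eq:occ_special} and \eqref{eq:stepsizes2} to \eqref{eq:occ_general} and \eqref{eq:stepsizes}, and the blockwise unpacking of $T_{\cLiGME}$ via Example~\ref{ex:proximity_operator}~\ref{item:conjugate} and \ref{item:shifted}). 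Your explicit remark that $\mathcal{S}_{\mathcal{C}}\neq\emptyset$ holds for compact $\mathcal{C}$ such as $\conv(\mathfrak{A}^{N})$ is a careful touch the paper relegates to a footnote of Proposition~\ref{prop:cLiGME_algorithm}.
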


To enhance the accuracy in a discrete-valued signal estimation by Algorithm~\ref{alg}, we also propose a pair of simple modifications, in \ref{appendix:modification}, which are inserted into line~4 of Algorithm~\ref{alg}.
Both modifications exploit adaptively the discrete information regarding $\mathfrak{A}^{N}$.
Although Algorithm~\ref{alg} with such modifications has no guarantee of convergence to a global minimizer of $J_{\Theta_{\LiGME}}$ over $\mathcal{C}$, numerical experiments in Section~\ref{sec:numerical_experiment} demonstrate such modifications improve the estimation accuracy.

\section{Numerical experiments}
\label{sec:numerical_experiment}

\begin{figure*}[t]
  \centering
  \begin{minipage}[b]{2\columnwidth}
    \begin{minipage}[b]{0.32\linewidth}
      \centering
      \includegraphics[width=0.9\linewidth]{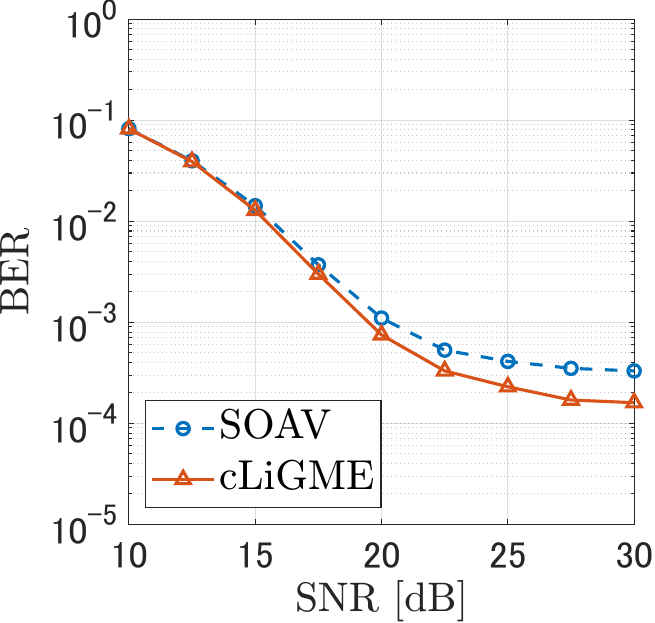}
      \subcaption{4-QAM ($N=50$, $M=35$)}
      \label{fig:4-QAM}
    \end{minipage}
    \hfill
    \begin{minipage}[b]{0.32\linewidth}
      \centering
      \includegraphics[width=0.9\linewidth]{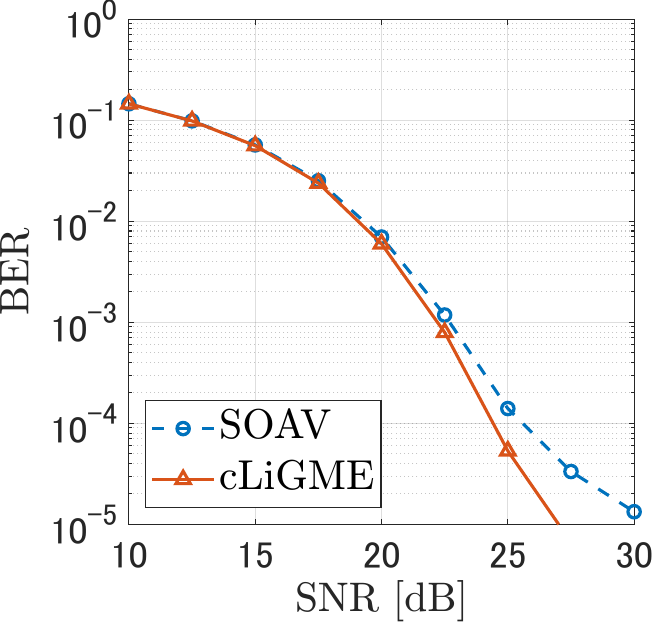}
      \subcaption{8-PSK ($N=50$, $M=45$)}
      \label{fig:8-PSK}
    \end{minipage}
    \hfill
    \begin{minipage}[b]{0.32\linewidth}
      \centering
      \includegraphics[width=0.9\linewidth]{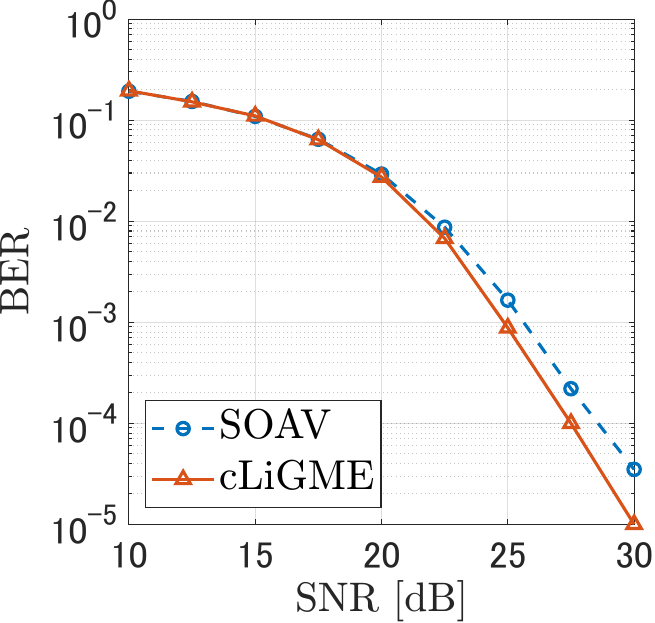}
      \subcaption{16-QAM ($N=50$, $M=50$)}
      \label{fig:16-QAM}
    \end{minipage}
    \caption{BER vs SNR (4-QAM, 8-PSK, 16-QAM)}
    \label{fig:nonconvex_enhancement}
  \end{minipage}
  \begin{minipage}[b]{2\columnwidth}
    \begin{minipage}[b]{0.66\linewidth}
      \begin{minipage}[b]{0.48\linewidth}
        \centering
        \includegraphics[width=0.9\linewidth]{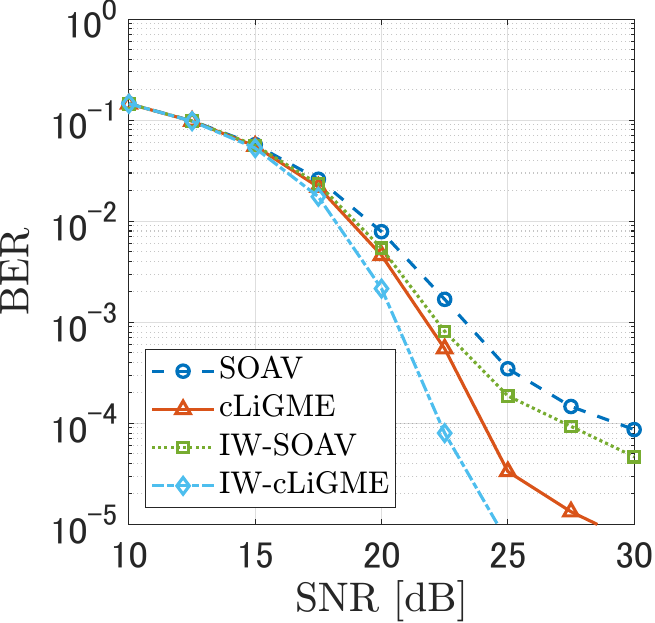}
        \subcaption{Iterative reweighting}
        \label{fig:iw}
      \end{minipage}
      \hfill
      \begin{minipage}[b]{0.48\linewidth}
        \centering
        \includegraphics[width=0.9\linewidth]{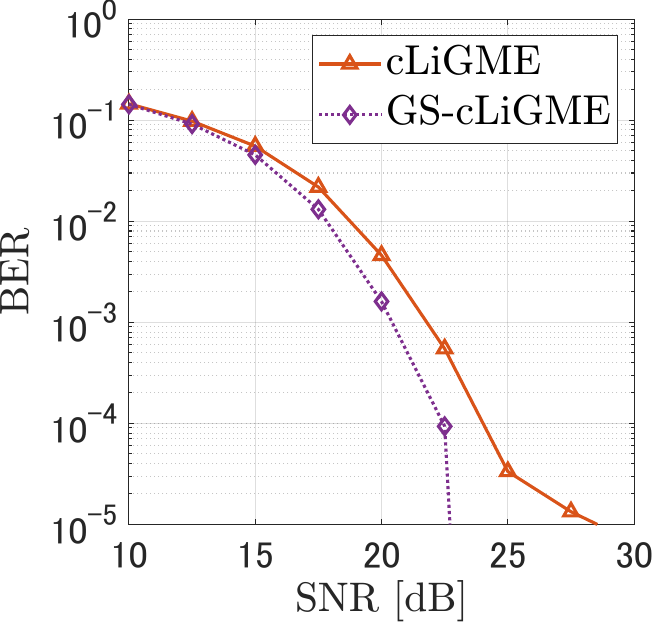}
        \subcaption{Generalized superiorization}
        \label{fig:sup}
      \end{minipage}
      \caption{BER vs SNR (8-PSK, $N=50$, $M=45$)}
      \label{fig:ber_performances}
    \end{minipage}
    \hfill
    \begin{minipage}[b]{0.285\linewidth}
      \centering
      \includegraphics[width=\linewidth]{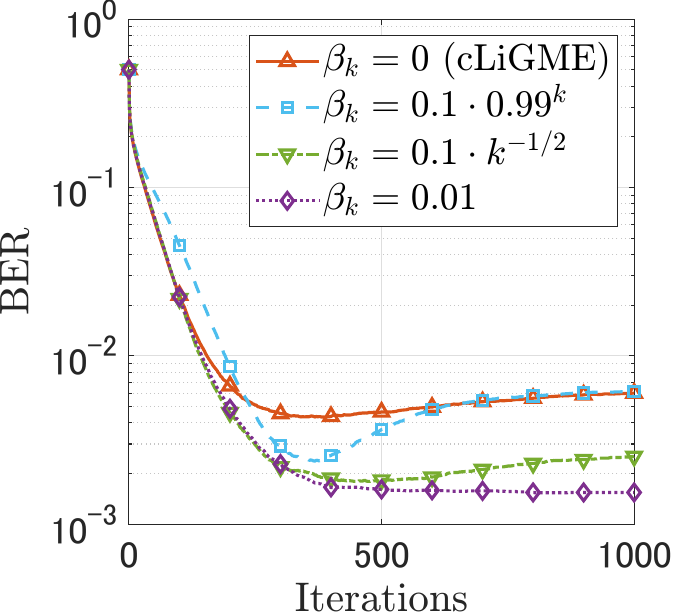}
      \vspace{0.01em} %
      \caption{BER vs iterations \\(generalized superiorization)}
      \label{fig:perturbations}
    \end{minipage}
  \end{minipage}
\end{figure*}

\subsection{An application of the proposed model to MIMO signal detection problem}
\label{sec:application}

We conducted numerical experiments in a scenario of MIMO signal detection~\cite{chen_manifold_2017} with $N$ transmit antennas and $M$ receive antenna.
MIMO signal detection has been formulated as an estimation problem of the transmitted signal $\xstar\in\mathbb{C}^{N}$ from the received signal $\y=\A\xstar+\eps\in\mathbb{C}^{M}$, where
(i)~$\xstar\in\mathfrak{A}^{N}\coloneqq\left\{a^{\ev{1}},a^{\ev{2}},\ldots,a^{\ev{L}}\right\}^{N}\subset\mathbb{C}^{N}$,
(ii)~$\A\in\mathbb{C}^{M\times N}$ is a channel matrix, and
(iii)~$\eps\in\mathbb{C}^{M}$ is noise (Note: MIMO systems with $M<N$, such as Fig.~\ref{fig:ber_performances}~\subref{fig:4-QAM} and~\subref{fig:8-PSK}, are known as overloaded (undetermined) MIMO~\cite{wong_efficient_2007,hayakawa_convex_2017}).
Via a simple $\mathbb{C}\rightleftarrows\mathbb{R}^{2}$ translation (see, e.g.,~\cite{mizoguchi_hypercomplex_2019}), the complex-valued system can be translated into a real-valued system $\widehat{\y}=\widetilde{\A}\widehat{\xstar}+\widehat{\eps}$, where
\begin{equation}\label{eq:translation}
  \hspace{-0.5em}
  \begin{aligned}
     &
    \widehat{(\cdot)}:\mathbb{C}^{d}\to\mathbb{R}^{2d}:\uu\mapsto\widehat{\uu}\coloneqq\begin{pmatrix}\Re(\uu)\\ \Im(\uu)\end{pmatrix}\ (d=N,M), \\
     &
    \widetilde{\A}\coloneqq
    \begin{pmatrix}
      \Re(\A) & -\Im(\A) \\
      \Im(\A) & \Re(\A)
    \end{pmatrix}\in\mathbb{R}^{2M\times 2N},                                                                                                    \\
     &
    \widehat{\xstar}\in\widehat{\mathfrak{A}^{N}}\coloneqq\left\{\widehat{\dfrak}\in\mathbb{R}^{2N}\middle|\, \dfrak\in\mathfrak{A}^{N}\subset\mathbb{C}^{N}\right\}.
  \end{aligned}
\end{equation}
For the estimation of $\widehat{\xstar}\in\widehat{\mathfrak{A}^{N}}\subset\mathbb{R}^{2N}$, the model~\eqref{eq:regularized_least_squares} in Step~1 of Scheme~\ref{scheme} can be rewritten as:
\begin{equation}\label{eq:proposed_model_MIMO}
  \text{Find }\widehat{\x^{\lozenge}}\in\argmin_{\widehat{\x}\in\mathcal{C}}J_{\Theta}\bigl(\widehat{\x}\bigr)\coloneqq\frac{1}{2}
  \bnorm{\widehat{\y}-\widetilde{\A}\widehat{\x}}_{2}^{2}
  +\mu\Theta\bigl(\widehat{\x}\bigr),
\end{equation}
where $\Theta:\mathbb{R}^{2N}\to\mathbb{R}$ is a regularizer, and $\mathcal{C}\subset\mathbb{R}^{2N}$ is a closed convex superset of $\widehat{\mathfrak{A}^{N}}$ as an approximation of $\widehat{\mathfrak{A}^{N}}$.
In this experiment, we employed $\mathcal{C}\coloneqq \conv(\widehat{\mathfrak{A}^{N}})$ as the constraint set $\mathcal{C}$
because 
(i) $\conv(\widehat{\mathfrak{A}^{N}})\subset\mathbb{R}^{2N}$ is the smallest closed convex superset of $\widehat{\mathfrak{A}^{N}}$, i.e., the best outer convex approximation of $\widehat{\mathfrak{A}^{N}}$, and 
(ii) the metric projection onto $\conv(\widehat{\mathfrak{A}^{N}})\subset\mathbb{R}^{2N}$ has closed-form expressions in the following all cases (see Remark~\ref{remark:computability}).

In a case where digital QAM (quadrature amplitude modulation) is employed, the finite alphabet (called the constellation set in the field of MIMO) $\mathfrak{A}$ is usually expressed as $\mathfrak{A} = \mathfrak{A}_{\qam} + j \mathfrak{A}_{\qam}$, where $\mathfrak{A}_{\qam}\coloneqq\left\{a_{\qam}^{\ev{1}},a_{\qam}^{\ev{2}},\ldots,a_{\qam}^{\ev{L_{\qam}}}\right\}\subset\mathbb{R}$ is a finite set (e.g., $\mathfrak{A}_{\qam}=\{-1,1\}$ for 4-QAM and $\mathfrak{A}_{\qam}=\{\pm1,\pm3\}$ for 16-QAM), and $L_{\qam}^{2}=L$.
In this case, for $\x \in \mathbb{C}^{N} (\Leftrightarrow \widehat{\x} \in \mathbb{R}^{2N})$, we have the relation 
$\x \in \mathfrak{A}^{N} \Leftrightarrow \widehat{\x} \in \mathfrak{A}_{\qam}^{2N}$.
From this observation, the proposed LiGME regularizer~\eqref{eq:theta_LiGME} can be rewritten as
\begin{equation}\label{eq:theta_LiGME_QAM}
  \Theta_{\LiGME}^{\qam}\bigl(\widehat{\x}\bigr)\coloneqq
  \sum_{l=1}^{L_{\qam}}
  \bigl(\norm{\cdot}_{\bm{\omega}^{\ev{l}};1}\bigr)_{\B^{\ev{l}}}
  \bigl(\widehat{\x}-a_{\qam}^{\ev{l}}\ones_{2N}\bigr),
\end{equation}
where, for each $l=1,2,\ldots,L_{\qam}$, $\bm{\omega}^{\ev{l}}\in\mathbb{R}_{++}^{2N}$ is a weighting vector, and $\B^{\ev{l}}$ is a tuning matrix satisfying the overall convexity condition~\eqref{eq:occ_special}.

\def\vec{\text{vec}}
In a case where the PSK (phase shift keying) is employed, the finite alphabet $\mathfrak{A}\subset\mathbb{C}$ is expressed as $\mathfrak{A}\coloneqq\mathfrak{A}_{\psk}\coloneqq\left\{ a^{\ev{l}}\coloneqq\exp[j(l-1)2\pi/L]\mid l=1,2,\ldots,L\right\}$, where $L=8$ is used for 8-PSK.
In this case, we use
\begin{equation}\label{eq:theta_LiGME_PSK}
  \hspace{-2em}
  \begin{aligned}
    \Theta_{\LiGME}^{\text{\psk}}\bigl(\widehat{\x}\bigr)
    \coloneqq &
    \sum_{l=1}^{L}
    \bigl(\norm{\cdot}_{\bm{\omega}^{\ev{l}};2,1}\bigr)_{\B^{\ev{l}}}
    \left(
    \widehat{\x}-\widehat{(a^{\ev{l}}\ones_{N})}
    \right)                                                \\
    =         &
    \sum_{l=1}^{L}
    \bigl(\norm{\cdot}_{\bm{\omega}^{\ev{l}};2,1}\bigr)_{\B^{\ev{l}}}
    \left(\widehat{\x}-
    \begin{pmatrix}
      \Re(a^{\ev{l}})\ones_{N} \\
      \Im(a^{\ev{l}})\ones_{N}
    \end{pmatrix}\right) \\
  \end{aligned}
\end{equation}
as the regularizer $\Theta$ in~\eqref{eq:proposed_model_MIMO},
where, for each $l=1,2,\ldots,L$ and $\uu\coloneqq(u_{1},u_{2},\ldots,u_{N})^{\top}\in\mathbb{C}^{N}$,
$\bnorm{\widehat{\uu}}_{\bm{\omega}^{\ev{l}};2,1}
  \coloneqq\sum_{n=1}^{N}\omega^{\ev{l}}_{n}\norm{(\Re(u_{n}),\Im(u_{n}))}_{2}
  \left(=\sum_{n=1}^{N}\omega^{\ev{l}}_{n}\abs{u_{n}}\right)$
is a weighted $\ell_{2,1}$-norm associated with a weighting vector $\bm{\omega}^{\ev{l}}\in\mathbb{R}_{++}^{N}$, 
and $\B^{\ev{l}}$ is a tuning matrix satisfying the overall convexity condition~\eqref{eq:occ_special}.
We note that Algorithm~\ref{alg} can be applied even to the minimization of $J_{\Theta_{\LiGME}^{\text{\psk}}}$ over a closed convex set $\mathcal{C}$ by using%
\footnote{
  For $\uu\coloneqq(u_{1},u_{2},\ldots,u_{N})\in\mathbb{C}^{N}$, 
$
  \Prox_{\gamma\norm{\cdot}_{\bm{\omega}^{\ev{l}};2,1}}:\mathbb{R}^{2N}\to\mathbb{R}^{2N}:\widehat{\uu}\mapsto \p\coloneqq(p_{1},p_{2},\ldots p_{2N})^{\top}$ $(\gamma\in\mathbb{R}_{++})$ is given by
  $
  \myskip{0.2}
  (n=1,2,\ldots,N)\quad
  \begin{pmatrix}
    p_{n} \\
    p_{N+n}
  \end{pmatrix}
  =\max\left\{0,1-\frac{\gamma\omega^{\ev{l}}_{n}}{\abs{u_{n}}}\right\}
  \begin{pmatrix}
    \Re(u_{n}) \\
    \Im(u_{n})
  \end{pmatrix}$~\cite[Prop. 2]{kowalski_sparse_2009}.
} 
$\norm{\cdot}_{\bm{\omega}^{\ev{l}};2,1}:\mathbb{R}^{2N}\to\mathbb{R}_{+}$ and $\widehat{(a^{\ev{l}}\ones_{N})}\in \mathbb{R}^{2N}$ $(l=1,2,\ldots,L)$ in place of $\norm{\cdot}_{\bm{\omega}^{\ev{l}};1}:\mathbb{R}^{N}\to\mathbb{R}_{+}$ and $a^{\ev{l}}\ones_{N}\in\mathbb{R}^{N}$, respectively.

\begin{remark}[Computability of $P_{\conv(\widehat{\mathfrak{A}^{N}})}$]\label{remark:computability}
  \,

  \begin{enumerate}[label=(\alph*), leftmargin=*, align=left]
    \item For QAM, $\conv(\widehat{\mathfrak{A}^{N}})=[\min(\mathfrak{A}_{\qam}), \max(\mathfrak{A}_{\qam})]^{2N}$ holds, i.e., $\conv(\widehat{\mathfrak{A}^{N}})$ is a box constraint.
    By~\cite[Prop. 29.3 and Exm. 29.21]{bauschke_convex_2017}, $P_{\conv(\widehat{\mathfrak{A}^{N}})}$ has a closed-form expression.
    \item For 8-PSK, $\conv(\widehat{\mathfrak{A}^{N}})=\mathcal{P}((\conv(\widehat{\mathfrak{A}}))^{N})$ holds with a proper permutation $\mathcal{P}:\mathbb{R}^{2N}\to\mathbb{R}^{2N}$, where the projection $P_{\conv(\widehat{\mathfrak{A}})}:\mathbb{R}^{2}\to\conv(\widehat{\mathfrak{A}})(\subset\mathbb{R}^{2})$ onto the regular octagon $\conv(\widehat{\mathfrak{A}})$ has a closed-form expression (see, e.g., \cite{castaneda_vlsi_2018}).
    By~\cite[Prop. 29.3]{bauschke_convex_2017}, $P_{\conv(\widehat{\mathfrak{A}^{N}})}$ has a closed-form expression.
  \end{enumerate}
\end{remark}

\subsection{Experimental setting and results on MIMO detection}
\label{sec:experimental_setting}

\def\4qam{\text{4-qam}}
\def\8psk{\text{8-psk}}
\def\16qam{\text{16-qam}}

We tested the following three modulations:
(a)~4-QAM,
(b)~8-PSK,
and
(c)~16-QAM.
In this experiment according to~\cite{chen_manifold_2017}, we chose randomly
(i)~$\xstar\in\mathfrak{A}^{N}$,
(ii)~$\A\coloneqq\sqrt{\mathbf{R}}\mathbf{G}\in\mathbb{C}^{M\times N}$, where each entry of $\mathbf{G}\in\mathbb{C}^{M\times N}$ was sampled from the complex gaussian distribution $\mathcal{N}_{\mathbb{C}}(0,1/M)$, and $\mathbf{R}\in\mathbb{R}^{M\times M}$ satisfies $(\mathbf{R})_{r,c}=0.5^{\abs{r-c}}$ $(r,c=1,2,\ldots,M)$, and
(iii)~each entry of $\eps\in\mathbb{C}^{M}$ was sampled from $\mathcal{N}_{\mathbb{C}}(0,\sigma_{\varepsilon}^2)$ with a variance $\sigma_{\varepsilon}^2\in\mathbb{R}_{++}$, which was chosen so that $10\log_{10}\frac{E[\norm{\xstar}^{2}]}{\sigma_{\varepsilon}^{2}}$ (dB) achieves a given SNR (signal-to-noise ratio).

In Step~1 of Scheme~\ref{scheme}, we considered the model~\eqref{eq:proposed_model} with
(i)~$\Theta_{\LiGME}^{\qam}$ in~\eqref{eq:theta_LiGME_QAM} for QAM and
(ii)~$\Theta_{\LiGME}^{\psk}$ in~\eqref{eq:theta_LiGME_PSK} for PSK,
where the convex hull $\conv(\widehat{\mathfrak{A}^{N}})$ of $\widehat{\mathfrak{A}^{N}}$ was employed as $\mathcal{C}$.
In this experiment, we compared numerical performance of
(i)~the proposed cLiGME model~\eqref{eq:proposed_model} with that of
(ii)~the SOAV model \cite{hayakawa_convex_2017} as convex models, for discrete-valued signal estimation, whose global minimizers can be approximated iteratively by a certain algorithm. 
The SOAV model is a special instance of the proposed model~\eqref{eq:proposed_model} with $\B^{\ev{l}}=\O_{q^{\ev{l}}\times 2N}$ $(l=1,2,\ldots,L_{\clubsuit})$ in the regularizers $\Theta_{\LiGME}^{\qam}$ and $\Theta_{\LiGME}^{\psk}$, where $L_{\clubsuit}=2$ (4-QAM), $8$ (8-PSK) and $4$ (16-QAM).
For the cLiGME model, we used Algorithm~\ref{alg} (denoted by `cLiGME') by employing the following tuning matrices in~\eqref{eq:proposed_model}
\begin{equation}\label{eq:Bl}
  \hspace{-1em}
  \B^{\ev{l}}=\sqrt{0.99/\mu L_{\clubsuit}}\,\widetilde{\A}\in\mathbb{R} ^{2M\times 2N}\,(l=1,2,\ldots,L_{\clubsuit})
\end{equation}
to achieve the overall convexity condition~\eqref{eq:occ_special} according to~\eqref{eq:choice_Bl}, where $\widetilde{\A}$ is obtained via $\mathbb{C}\rightleftarrows\mathbb{R}^{2}$ translation (see~\eqref{eq:translation} in Section~\ref{sec:application}), and
$\mu$ is a predetermined regularization parameter.
Since SOAV model can be reduced to the cLiGME model~\eqref{eq:proposed_model} with $\B^{\ev{l}}=\O_{2M\times 2N}$ $(l=1,2,\ldots,L_{\clubsuit})$, we used Algorithm~\ref{alg} (denoted by `SOAV') with $\B^{\ev{l}}=\O_{2M\times 2N}$ $(l=1,2,\ldots,L_{\clubsuit})$ for the SOAV model.
For both `cLiGME' and `SOAV', we employed the same
(i)~stepsize $(\sigma,\tau,\kappa)\in\mathbb{R}_{++}\times\mathbb{R}_{++}\times(1,\infty)$ as footnote~\ref{foot:sigma_tau_kappa} in Algorithm~\ref{alg} with $\kappa=1.001$, and
(ii)~initial points $\x_{0}=\bm{0}$, $\v^{\ev{l}}_{0}=\bm{0}$ and $\w^{\ev{l}}_{0}=\bm{0}$ $(l=1,2,\ldots, L_{\clubsuit})$.
Since $\mathcal{C}\coloneqq\conv(\widehat{\mathfrak{A}^{N}})\subset\mathbb{R}^{2N}$ is compact, `cLiGME' and `SOAV' can find their global minimizers, respectively (see Theorem~\ref{thm:proposed_algorithm}).

Before evaluating numerical performance, let us examine contrastiveness of $\Theta_{\LiGME}$ used in these experiments.
Fig.~\ref{fig:theta_LiGME_practical} shows the function values of $\Theta_{\LiGME}^{\psk}$ in~\eqref{eq:theta_LiGME_PSK} designed with $\B^{\ev{l}}$ in~\eqref{eq:Bl} hence achieving the overall convexity condition~\eqref{eq:occ_special}.
Each numerical value of $\Theta_{\LiGME}(\x)$ is computed with ISTA-type algorithm \cite{combettes_signal_2005} (Note: the function value of $\Theta_{\LiGME}$ is not required in the proposed Algorithm~\ref{alg}).
As seen from Fig.~\ref{fig:theta_LiGME_practical}~\subref{fig:theta_LiGME_practical_kakudai}, we observe numerically that $\Theta_{\LiGME}(\x)$ is certainly contrastive as a regularizer for discrete-valued signal estimation because each constellation point in $\mathfrak{A}$ is located at an isolated local minimizer of $\Theta_{\LiGME}(\x)$ as we expected.

As a performance metric, we adopted averaged BER%
\footnote{
  BER was computed by a MATLAB code with `qamdemod', `pskdemod' and `biterr'.
} 
(bit error rate) over 1,000 independent realizations of $(\x^\star,\A,\eps)$.
The parameter $\mu$ was chosen to achieve the lowest BER from the set $\{10^{i}\mid i=-6,-5\ldots,1\}$ at each SNR.

Fig.~\ref{fig:nonconvex_enhancement} shows BER of `SOAV' and `cLiGME' at each SNR, where
Algorithm~\ref{alg} were terminated when the iteration number $k$ exceeded 1,000,
$\omega^{\ev{l}}_{n}=1/L_{\clubsuit}$ $(l=1,2,\ldots,L_{\clubsuit};\,n=1,2,\ldots,N_{\clubsuit})$ in~\eqref{eq:proposed_model} were fixed ($N_{\clubsuit}=2N$ (4-QAM, 16-QAM) and $N$ (8-PSK)).
From Fig.~\ref{fig:nonconvex_enhancement}~\subref{fig:4-QAM}, \subref{fig:8-PSK} and \subref{fig:16-QAM}, 
we can see that  
(i)~both `SOAV' and `cLiGME' achieve lower BER as SNR becomes higher, and
(ii)~compared with `SOAV', `cLiGME' improves BER largely at high SNR.
These observations show the effectiveness of the proposed nonconvex $\Theta_{\LiGME}$ compared with the convex regularizer $\Theta_{\SOAV}$.

In the following, we confirmed the further performance improvements of `cLiGME' by the proposed
(i)~iterative reweighting and
(ii)~generalized superiorization
in 8-PSK (see \ref{appendix:modification}), where we terminated Algorithm~\ref{alg} if the iteration number $k$ exceeded 500, considering realistic time constraints.

Fig.~\ref{fig:ber_performances}~\subref{fig:iw} shows BER, at each SNR, of
(i)~`SOAV',
(ii)~`cLiGME',
(iii)~`SOAV' with iterative reweighting in~\eqref{eq:iw} in \ref{appendix:modification} (denoted by `IW-SOAV'), and
(iv)~`cLiGME' with iterative reweighting in~\eqref{eq:iw} (denoted by `IW-cLiGME'),
where the machine epsilon was used as $\delta$ in~\eqref{eq:set_omega} and the frequency period $K=100$ in~\eqref{eq:iw} was used (Note: the iterative reweighting of SOAV model was initially proposed \cite{hayakawa_convex_2017,hayakawa_reconstruction_2018} with an ADMM-type algorithm).
From Fig.~\ref{fig:ber_performances}~\subref{fig:iw}, `IW-cLiGME' improves `cLiGME', while even `cLiGME' outperforms `IW-SOAV'.

To examine the impact of choices of $(\beta_k)_{k=0}^{\infty}$ in generalized superiorization~\eqref{eq:perturbation}, we compared generalized superiorization of `cLiGME' with
(i)~$\beta_{k}=0$ (which reduces to the original `cLiGME'),
(ii)~$\beta_{k}=0.1\cdot 0.99^k$ ($(\beta_{k})_{k=0}^{\infty}$ is summable),
(iii)~$\beta_{k}=0.1\cdot k^{-1/2}$ ($(\beta_{k})_{k=0}^{\infty}$ is nonsummable but $\beta_{k}\to0$ $(k\to\infty)$), and
(iv)~$\beta_{k}=0.01$ (constant).
Fig.~\ref{fig:perturbations} shows history of BER achieved by generalized superiorization of `cLiGME' with such $(\beta_{k})_{k=0}^{\infty}$ in~\eqref{eq:perturbation}, where $\text{SNR}=20~\text{dB}$, $\mu=10^{-4}$ and $\omega^{\ev{l}}_{n}=1/8$ $(l=1,2,\ldots,8;\,n=1,2,\ldots,N)$.
From Fig.~\ref{fig:perturbations}, $\beta_{k}=0.01$ outperforms the others.
Fig.~\ref{fig:ber_performances}~\subref{fig:sup} shows BER, at each SNR, of `cLiGME' and generalized superiorization of `cLiGME' (denoted by `GS-cLiGME') with $\beta_{k}=0.01$, where the averaged BER $0$ was replaced by the machine epsilon.
From Fig.~\ref{fig:ber_performances}~\subref{fig:sup}, we see `GS-cLiGME' improves `cLiGME'.
Although these two modifications do not have convergence guarantees, they seem to improve dramatically the estimation accuracy of Algorithm~\ref{alg}.

\section{Conclusion}
\label{sec:conclusion}

We proposed a novel contrastive regularizer that has designated isolated local minimizers for discrete-valued estimation problems.
By using the proposed regularizer, we also presented an iterative algorithm with guaranteed convergence to a global minimizer of the contrastively regularized least squares model under the overall convexity of the cost function.
Numerical experiments demonstrate that the proposed model and algorithm have great potential for the discrete-valued signal estimation problem, and that two simple techniques effectively improve the estimation performance.

\section*{Acknowledgement}
\label{sec:acknowledgement}
The authors would like to thank Prof. Kazunori Hayashi (Kyoto University) and Prof. Masahiro Yukawa (Keio University) for valuable discussions at a conference (APSIPA 2024).
This work was supported by JSPS Grants-in-Aid (19H04134,23KJ0945)
and by JST SICORP (JPMJSC20C6).

\bibliographystyle{ieicetr}%
\bibliography{main}%
\numberwithin{equation}{section}
\def\theequation{\@Alph\c@section$\cdot$\,\@arabic\c@equation}
\appendix
\label{appendix}

\section{Known facts}
\label{appendix:known_facts}

\noindent\textbf{(Subdifferential~\cite[Def. 16.1]{bauschke_convex_2017})}
For a function $f\in\Gamma_{0}(\mathcal{H})$, the subdifferential of $f$ is defined as the set valued operator
\begin{equation}
  \myskip{0.3}
  \hspace{-2em}
  \partial f:\mathcal{H}\to 2^{\mathcal{H}}:
  x\mapsto\{u\in\mathcal{H}\mid \ev{y-x,u}_{\mathcal{H}}+f(x)\leq f(y),\forall y\in\mathcal{H}\}.
\end{equation}
The subdifferential has the following properties:
\begin{fact}[Some properties of subdifferential]\label{fct:subdifferential}
  \,
  \begin{enumerate}[label=(\alph*), leftmargin=*]
    \item \label{fct:fermat_rule}
          (Fermat's rule~\cite[Thm. 16.3]{bauschke_convex_2017}).
          Let $f\in\Gamma_{0}(\mathcal{H})$ and $\widebar{x}\in\mathcal{H}$.
          Then
          \begin{equation}
            \label{eq:fermat_rule}
            \widebar{x}\in\argmin_{x\in\mathcal{H}}f(x)
            \Leftrightarrow
            0_{\mathcal{H}}\in\partial f(\widebar{x}).
          \end{equation}
    \item \label{fct:sum_rule}
          (Sum rule~\cite[Cor. 16.48 (iii)]{bauschke_convex_2017}).
          Let $f,g\in\Gamma_{0}(\mathcal{H})$ satisfy $\dom g=\mathcal{H}$.
          Then
          \begin{equation}
            \label{eq:sum_rule}
            \partial(f+g)=\partial f +\partial g.
          \end{equation}
    \item \label{fct:chain_rule}
          (Chain rule~\cite[Cor. 16.53]{bauschke_convex_2017}).
          Let $g\in\Gamma_0(\mathcal{K})$ and $\mathcal{L}\in\mathcal{B}(\mathcal{H},\mathcal{K})$ satisfy $0_{\mathcal{K}}\in\ri(\dom g-\ran(\mathcal{L}))$.
          Then
          \begin{equation}
            \label{eq:chain_rule}
            \partial(g\circ \mathcal{L})=\mathcal{L}^{\ast}\circ(\partial g)\circ \mathcal{L}.
          \end{equation}
    \item \label{fct:differential}
          (Subdifferential and G\^{a}teaux differential~\cite[Prop. 17.31 (i)]{bauschke_convex_2017}).
          Let $f\in\Gamma_{0}(\mathcal{H})$ and $x\in\dom f$.
          Suppose that $f$ is G\^{a}teaux differentiable%
          \footnote{
          (G\^{a}teaux differentiable)
          Let $U$ be an open subset of $\mathcal{H}$. Then a function $f:U\to\mathbb{R}$ is said to be G\^{a}teaux differentiable at $x\in U$ if there exists $a(x)\in\mathcal{H}$ such that
          $
            (\forall h\in\mathcal{H})\quad
            \lim_{t\to 0}\frac{f(x+t h)-f(x)}{t}=\ev{a(x),h}_{\mathcal{H}}.
          $
          In this case, $\nabla f(x)\coloneqq a(x)$ is called gradient of $f$ at $x$.
          } at $x$.
          Then
          \begin{equation}
            \label{eq:differential}
            \partial f(x)=\{\nabla f(x)\}.
          \end{equation}
    \item(Subdifferential and conjugate~\cite[Cor. 16.30]{bauschke_convex_2017}).
          Let $f\in\Gamma_{0}(\mathcal{H})$.
          Then, for any $(x,u)\in\mathcal{H}\times\mathcal{H}$,
          \begin{equation}
            \label{eq:conjugate_subdifferential}
            u\in\partial f(x)
            \Leftrightarrow
            x\in \partial f^{\ast}(u).
          \end{equation}
  \end{enumerate}
\end{fact}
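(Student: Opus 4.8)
The plan is to derive each of the five assertions directly from the definition of the subdifferential together with elementary conjugate duality, so that they split into \emph{easy} inclusions obtained by simply reading off the subgradient inequality and \emph{hard} reverse inclusions that rest on a constraint qualification and a separation/Fenchel-duality argument. Nothing beyond $f\in\Gamma_{0}(\mathcal{H})$ and the definition of $f^{\ast}$ in \eqref{eq:conjugate} should be needed.

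First I would dispatch the three facts that follow almost immediately from definitions. For Fermat's rule, $0_{\mathcal{H}}\in\partial f(\bar x)$ means by definition that $\ev{y-\bar x,0_{\mathcal{H}}}_{\mathcal{H}}+f(\bar x)\leq f(y)$ for every $y$, i.e. $f(\bar x)\leq f(y)$ for all $y\in\mathcal{H}$, which is exactly the global-minimizer condition. For the subdifferential–conjugate duality, I would first establish the Fenchel–Young equality $u\in\partial f(x)\iff f(x)+f^{\ast}(u)=\ev{x,u}_{\mathcal{H}}$ straight from the definitions of $\partial f$ and of $f^{\ast}$, and then exploit the symmetry of that equality together with biconjugation $f^{\ast\ast}=f$ (valid since $f\in\Gamma_{0}(\mathcal{H})$) to read off $u\in\partial f(x)\iff x\in\partial f^{\ast}(u)$. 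For the G\^{a}teaux case, convexity yields the subgradient inequality with $\nabla f(x)$ as a supporting slope, giving $\nabla f(x)\in\partial f(x)$; uniqueness follows because any $u\in\partial f(x)$ must satisfy $\ev{u,h}_{\mathcal{H}}\leq f'(x;h)=\ev{\nabla f(x),h}_{\mathcal{H}}$ for every direction $h$, and applying this to $\pm h$ forces $u=\nabla f(x)$.

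Next I would treat the sum rule and the chain rule, whose forward inclusions $\partial f+\partial g\subseteq\partial(f+g)$ and $\mathcal{L}^{\ast}\circ(\partial g)\circ\mathcal{L}\subseteq\partial(g\circ\mathcal{L})$ are again immediate from adding, respectively composing, the defining subgradient inequalities. The substance lies in the reverse inclusions. For the sum rule I would invoke Fenchel duality: the conjugate of $f+g$ is the infimal convolution of $f^{\ast}$ and $g^{\ast}$, and the hypothesis $\dom g=\mathcal{H}$ (hence $0_{\mathcal{H}}\in\Int(\dom f-\dom g)\subseteq\ri(\dom f-\dom g)$) is the qualification that makes this infimal convolution exact and attained, thereby splitting a subgradient of $f+g$ into a compatible pair of subgradients of $f$ and $g$. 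For the chain rule I would argue analogously, with the qualification $0_{\mathcal{K}}\in\ri(\dom g-\ran(\mathcal{L}))$ playing the identical role and using that $(g\circ\mathcal{L})^{\ast}$ is computed as an infimal projection of $g^{\ast}$ through $\mathcal{L}^{\ast}$, which is exact under this condition.

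The main obstacle will be precisely these reverse inclusions in the sum and chain rules: proving them rigorously requires a separation theorem in $\mathcal{H}$ (equivalently, exactness of the relevant infimal convolution/projection), and the role of the relative-interior qualifications is exactly to preclude a duality gap. This is the technical point the surrounding development cares about, since Lemma~\ref{lem:relative_interior} is engineered precisely to certify the chain-rule qualification $0\in\ri(\dom(\cdot)-\ran(B^{\ast}))$ in the GME setting. Everything else reduces to careful bookkeeping with the subgradient inequality and the Fenchel–Young equality.
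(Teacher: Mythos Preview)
Your outline is mathematically sound, but note that the paper does not prove this Fact at all: it is placed in the appendix ``Known facts'' and each item is simply cited from Bauschke--Combettes, \emph{Convex Analysis and Monotone Operator Theory in Hilbert Spaces}. There is nothing to compare against beyond the citations themselves.

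That said, your plan correctly identifies the asymmetry between the five items: (a), (d), (e) are essentially definitional unpackings (with biconjugation $f^{\ast\ast}=f$ the only nontrivial ingredient for (e)), while the reverse inclusions in (b) and (c) are where the real work sits and genuinely require a separation or Fenchel-duality argument under the stated qualification conditions. Your remark that Lemma~\ref{lem:relative_interior} exists precisely to verify the chain-rule qualification for the GME construction is exactly right and shows you understand why the paper records these facts. If you were to write this up in full, the only place to be careful is the exactness of the infimal convolution in (b) and of the infimal projection through $\mathcal{L}^{\ast}$ in (c); these are standard but not one-liners, and in a self-contained treatment you would either invoke the Fenchel--Rockafellar duality theorem or reproduce the separating-hyperplane argument behind it.
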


\noindent\textbf{(Maximally monotone operator~\cite[Def. 20.20]{bauschke_convex_2017})}
A set-valued operator $A:\mathcal{H}\to 2^{\mathcal{H}}$ is said to be \emph{maximally monotone} if, for every $(x,u)\in\mathcal{H}\times\mathcal{H}$,
\begin{equation}
  \label{eq:maximally_monotone}
  \hspace{-2em}
  \myskip{0.9}
  (x,u)\in\gra(A)
  \Leftrightarrow
  (\forall(y,v)\in\gra(A))
  \ev{x-y,u-x}_{\mathcal{H}}\geq 0,
\end{equation}
where $\gra(A)\coloneqq\{(x,u)\in\mathcal{H}\times\mathcal{H}\mid u\in A(x)\}$.
\begin{fact}[Some properties of maximally monotone operators]
  \label{fct:maximally_monotone}
  \,
  \begin{enumerate}[label=(\alph*), leftmargin=*]
    \item \label{item:MM_subdifferential}
          (\cite[Thm. 20.48]{bauschke_convex_2017}).
          The subdifferential $\partial f:\mathcal{H}\to 2^{\mathcal{H}}$ of $f\in\Gamma_{0}(\mathcal{H})$ is maximally monotone.
    \item \label{item:MM_skew}
          (\cite[Exm. 20.35]{bauschke_convex_2017}).
          If $\mathcal{L}\in\mathcal{B}(\mathcal{H},\mathcal{H})$ is skew-symmetry (i.e., $\mathcal{L}^{\ast}=-\mathcal{L}$),
          then $\mathcal{L}$ is maximally monotone.
    \item \label{item:MM_sum}
          (\cite[Cor. 25.5 (i)]{bauschke_convex_2017}).
          For maximally monotone operators $A,B\in\mathcal{H}\to 2^{\mathcal{H}}$, $A+B\in\mathcal{H}\to 2^{\mathcal{H}}$ is maximally monotone if $\dom(B)\coloneqq\{x\in\mathcal{H}\mid B(x)\neq\emptyset\}=\mathcal{H}$.
    \item \label{item:MM_hilbert}
          (\cite[Prop. 20.24]{bauschke_convex_2017}).
          If $A\in\mathcal{H}\to 2^{\mathcal{H}}$ is maximally monotone over $\Hilbert{\mathcal{H}}{\mathcal{H}}$ and $\mathcal{L}\succ O_{\mathcal{H}}$, then $\mathcal{L}^{-1}\circ A$ is maximally monotone over $\Hilbert{\mathcal{H}}{\mathcal{L}}$.
  \end{enumerate}
\end{fact}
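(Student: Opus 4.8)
The four assertions are classical facts of monotone operator theory, so the plan is, for each, to dispatch monotonicity by a one-line application of the defining inequality and then to obtain maximality through a single unifying tool: Minty's surjectivity characterization, namely that a monotone operator $A$ is maximally monotone if and only if $\ran(\Id+A)=\mathcal{H}$.

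For part~\ref{item:MM_subdifferential}, monotonicity of $\partial f$ follows by adding the subgradient inequalities at $(x,u)$ and $(y,v)\in\gra(\partial f)$, which gives $\ev{x-y,u-v}_{\mathcal{H}}\geq 0$. For maximality I would observe that $(\Id+\partial f)^{-1}=\Prox_{f}$ (Definition~\ref{dfn:tools}\,\ref{item:prox}), and that $\Prox_{f}$ is single-valued and defined on all of $\mathcal{H}$ because $y\mapsto f(y)+\frac{1}{2}\norm{x-y}_{\mathcal{H}}^{2}$ is coercive and strictly convex; hence $\ran(\Id+\partial f)=\mathcal{H}$. For part~\ref{item:MM_skew}, skew-symmetry yields $\ev{w,\mathcal{L}w}_{\mathcal{H}}=\ev{w,\mathcal{L}^{\ast}w}_{\mathcal{H}}=-\ev{w,\mathcal{L}w}_{\mathcal{H}}$, so $\ev{w,\mathcal{L}w}_{\mathcal{H}}=0$ and $\mathcal{L}$ is monotone; the identity $\norm{(\Id+\mathcal{L})w}_{\mathcal{H}}^{2}=\norm{w}_{\mathcal{H}}^{2}+\norm{\mathcal{L}w}_{\mathcal{H}}^{2}\geq\norm{w}_{\mathcal{H}}^{2}$ shows $\Id+\mathcal{L}$ is injective, hence (in finite dimensions) surjective, giving $\ran(\Id+\mathcal{L})=\mathcal{H}$.

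Part~\ref{item:MM_sum} is the crux and the main obstacle. Monotonicity of $A+B$ is trivial, but maximality is Rockafellar's sum theorem, which does not reduce to a one-line range computation. The plan is a two-stage argument exploiting $\dom(B)=\mathcal{H}$. First, for a single-valued, continuous, full-domain monotone summand, I would verify $\ran(\Id+A+B)=\mathcal{H}$ directly via a Brouwer/Debrunner--Flor selection argument on finite-dimensional sections. Second, for a general full-domain maximally monotone $B$, I would replace $B$ by its Yosida approximation $B_{\lambda}$ (single-valued, Lipschitz, monotone, full-domain), apply the first stage to $A+B_{\lambda}$ to solve $z\in x_{\lambda}+Ax_{\lambda}+B_{\lambda}x_{\lambda}$, and let $\lambda\downarrow 0$. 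The delicate point is the limit: I would use that a full-domain maximally monotone operator is locally bounded everywhere to keep $(B_{\lambda}x_{\lambda})$ bounded, and the closedness of $\gra(A)$ and $\gra(B)$ to identify a cluster point as a solution of $z\in x+Ax+Bx$.

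Finally, part~\ref{item:MM_hilbert} I would deduce from the preceding ones. Monotonicity in the $\mathcal{L}$-geometry is direct: for $u\in\mathcal{L}^{-1}Ax$ and $v\in\mathcal{L}^{-1}Ay$ one has $\mathcal{L}u\in Ax$, $\mathcal{L}v\in Ay$, whence $\ev{x-y,u-v}_{\mathcal{L}}=\ev{x-y,\mathcal{L}(u-v)}_{\mathcal{H}}\geq 0$ by monotonicity of $A$. For maximality in $\Hilbert{\mathcal{H}}{\mathcal{L}}$, Minty's criterion requires $\ran(\Id+\mathcal{L}^{-1}A)=\mathcal{H}$; unwinding $z\in x+\mathcal{L}^{-1}Ax$ gives $\mathcal{L}z\in(\mathcal{L}+A)x$, so the condition becomes $\ran(\mathcal{L}+A)=\mathcal{H}$. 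Since $\mathcal{L}\succ O_{\mathcal{H}}$ is continuous, full-domain, and strongly monotone, part~\ref{item:MM_sum} makes $\mathcal{L}+A$ maximally monotone, and strong monotonicity forces its surjectivity; thus $\ran(\mathcal{L}+A)=\mathcal{H}$, closing the argument and exhibiting part~\ref{item:MM_hilbert} as a corollary of part~\ref{item:MM_sum}.
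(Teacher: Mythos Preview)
The paper does not prove this statement at all: Fact~\ref{fct:maximally_monotone} is listed in the appendix of ``Known facts,'' with each item carrying a direct citation to Bauschke--Combettes and no accompanying argument. So there is no ``paper's own proof'' to compare against; the authors simply import these results from the literature.

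Your sketch is mathematically sound and follows a standard route (Minty surjectivity as the unifying maximality criterion, Rockafellar's sum theorem via Yosida regularization for~\ref{item:MM_sum}, and deducing~\ref{item:MM_hilbert} from~\ref{item:MM_sum}). Two minor remarks. First, since the paper works exclusively in finite-dimensional real Hilbert spaces, your ``Brouwer/Debrunner--Flor selection on finite-dimensional sections'' in part~\ref{item:MM_sum} is heavier than needed: the first stage (continuous single-valued monotone summand) can be handled by a direct Brouwer fixed-point argument on balls, and the limiting step $\lambda\downarrow 0$ only needs sequential compactness. Second, in part~\ref{item:MM_hilbert} you invoke ``strong monotonicity forces surjectivity'' of $\mathcal{L}+A$; this is correct, but it is worth making explicit that the argument is: $\mathcal{L}+A$ is maximally monotone (by~\ref{item:MM_sum}) and $c$-strongly monotone (from $\mathcal{L}\succ O_{\mathcal{H}}$), hence its resolvent is a strict contraction with a fixed point, which yields a zero of $(\mathcal{L}+A)(\cdot)-z$ for every $z$. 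With these clarifications your proposal stands as a self-contained proof of what the paper merely cites.
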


\section{Proof of Lemma~\ref{lemma:separable}}
\label{appendix:proof_separable}

\ref{item:separable:local}
  $(\Rightarrow)$
  Suppose that
  $\x^{\heartsuit} \in f$
  is a local minimizer of
  $f$,
  i.e., there exists
  $\delta \in \mathbb{R}_{++}$
  such that
  $f(\x^{\heartsuit}) \leq f(\x)$
  for all
  $\x \in \mathcal{N}_{\x^{\heartsuit}}\coloneqq \mybigtimes_{n=1}^{N}B(x_{n}^{\heartsuit}, \delta) \coloneqq \mybigtimes_{n=1}^{N}\{x \in \mathbb{R} \mid \abs{x-x_ {n}^{\heartsuit}} < \delta\}$.
  For contradiction, we assume that
  $x^{\heartsuit}_{\widebar{n}}$
  is not a local minimizer of
  $f_{\widebar{n}}$
  for some
  $\widebar{n} \in \{1,2,\ldots,N\}$.
  Without loss of generality, assume
  $\widebar{n} = 1$
  for simplicity.
  In this case, there exists
  $\widebar{x}_{1} \in B(x_{1}^{\heartsuit}, \delta)$
  such that
  $f_{1}(x^{\heartsuit}_{1})> f_{1}(\widebar{x}_{1})$.
  For
  $\widebar{\x} \coloneqq (\widebar{x}_{1}, x^{\heartsuit}_{2}, x^{\heartsuit}_{3},\ldots,x^{\heartsuit}_{N}) \in \mathcal{N}_{\x^{\heartsuit}}$,
  we have
  $f(\x^{\heartsuit}) = \sum_{n=1}^{N} f_{n}(x^{\heartsuit}_{n}) > f_{1}(\widebar{x}_{1}) + \sum_{n=2}^{N} f_{n}(x^{\heartsuit}_{n}) = f(\widebar{\x})$,
  which is absurd.

  $(\Leftarrow)$
  For every
  $n=1,2,\ldots,N$,
  suppose that
  $x^{\heartsuit}_{n} \in \mathbb{R}$
  is a local minimizer of
  $f_{n}$.
  Then, there exists
  $\delta > 0$
  such that, for all
  $n = 1,2,\ldots,N$,
  $f_{n}(x^{\heartsuit}_{n}) \leq f(x)\ (\forall x \in B(x_{n}^{\heartsuit}, \delta))$.
  By letting
  $\mathcal{N}_{\x^{\heartsuit}} \coloneqq \mybigtimes_{n=1}^{N}B(x_{n}^{\heartsuit}, \delta)$,
  we have
  $f(\x^{\heartsuit}) = \sum_{n=1}^{N} f_{n}(x^{\heartsuit}_{n}) \leq \sum_{n=1}^{N} f_{n}(x_{n}) = f(\x)$
  for all
  $\x \in \mathcal{N}_{\x^{\heartsuit}}$,
  implying thus
  $\x^{\heartsuit}$
  is a local minimizer of
  $f$.

  \ref{item:separable:isolated}
  $(\Rightarrow)$
  Suppose that
  $\x^{\heartsuit} \in f$
  is an isolated local minimizer of
  $f$,
  i.e.,
  $\x^{\heartsuit}$
  is the only local minimizer of
  $f$
  in
  $\mathcal{N}_{\x^{\heartsuit}} \coloneqq \mybigtimes_{n=1}^{N}B(x_{n}^{\heartsuit}, \delta)$
  with some
  $\delta > 0$.
  For every
  $n = 1,2,\ldots,N$,
  choose arbitrarily a local minimizer
  $\widebar{x}_{n} \in B(x_{n}^{\heartsuit}, \delta)$
  of
  $f_{n}$.
  By~\ref{item:separable:local},
  $\widebar{\x} \coloneqq (\widebar{x}_{1},\widebar{x}_{2},\ldots,\widebar{x}_{n})^{\top}\in \mathcal{N}_{\x^{\heartsuit}}$
  is a local minimizer of
  $f$.
  Since
  $\x^{\heartsuit}$
  is the only local minimizer of
  $f$
  in
  $\mathcal{N}_{\x^{\heartsuit}}$,
  we have
  $\widebar{\x} = \x^{\heartsuit}$.
  Hence, for
  $n=1,2,\ldots,N$,
  $\widebar{x}_{n} = x^{\heartsuit}_{n}$
  holds true, and
  $x^{\heartsuit}_{n}$
  is an isolated local minimizer of
  $f_{n}$.

  $(\Leftarrow)$
  For every
  $n=1,2,\ldots,N$,
  suppose that
  $x^{\heartsuit}_{n}$
  is an isolated local minimizer of
  $f_{n}$.
  Then, there exists
  $\delta > 0$
  such that
  every
  $x^{\heartsuit}_{n}\ (n=1,2,\ldots,N)$
  is the only local minimizer of
  $f_{n}$
  in
  $B(x_{n}^{\heartsuit}, \delta)$.
  Choose a local minimizer
  $\widebar{\x} \in \mathcal{N}_{\x^{\heartsuit}} \coloneqq \mybigtimes_{n=1}^{N}B(x_{n}^{\heartsuit}, \delta)$
  of
  $f$
  arbitrarily.
  For every
  $n=1,2,\ldots,N$,
  since
  (i)
  $\widebar{x}_{n}\in B(x_{n}^{\heartsuit}, \delta)\ $
  is a local minimizer of
  $f_{n}$
  by~\ref{item:separable:local},
  and (ii)
  $x^{\heartsuit}_{n}$
  is the only local minimizer of
  $f_{n}$
  in
  $B(x_{n}^{\heartsuit}, \delta)$,
  we have
  $\widebar{x}_{n} = x^{\heartsuit}_{n}$.
  Hence,
  $\widebar{\x} = \x^{\heartsuit}$
  holds true, and
  $\x^{\heartsuit}$
  is an isolated local minimizer of
  $f$.

\section{Proof for Lemma~\ref{lem:relative_interior}}
\label{appendix:proof_lemma}

Since $\Psi(\in\Gamma_{0}(\mathcal{Z}))$ is coercive and $\frac{1}{2}\bnorm{B\cdot}_{\widetilde{\mathcal{Z}}}^{2}(\in\Gamma_{0}(\mathcal{Z}))$ is bounded below with $\dom\left(\frac{1}{2}\bnorm{B\cdot}_{\widetilde{\mathcal{Z}}}^{2}\right)=\mathcal{Z}$,
$\Psi+\frac{1}{2}\bnorm{B\cdot}_{\widetilde{\mathcal{Z}}}^{2}$ is coercive by \cite[Cor. 11.16 (ii)]{bauschke_convex_2017}.
Then \cite[Prop. 14.16]{bauschke_convex_2017} yields
$0_{\mathcal{Z}}\in\Int\left(\dom\left(\Psi+\frac{1}{2}\bnorm{B\cdot}_{\widetilde{\mathcal{Z}}}^{2}\right)^{\ast}\right)$.
Since
\begin{equation}
  \hspace{-2em}
  (0_{\mathcal{Z}}\in)
  \Int\left(\dom\left(\Psi+\frac{1}{2}\bnorm{B\cdot}_{\widetilde{\mathcal{Z}}}^{2}\right)^{\ast}\right)
  \subset
  \ri\left(\dom\left(\Psi+\frac{1}{2}\bnorm{B\cdot}_{\widetilde{\mathcal{Z}}}^{2}\right)^{\ast}\right)
\end{equation}
(see, e.g., \cite[(6.11)]{bauschke_convex_2017}) and $\ri(\ran(B^{\ast}))=\ran(B^{\ast})\ni 0_{\mathcal{Z}}$, we have
\begin{equation}
  0_{\mathcal{Z}}\in
  \left(
  \ri\left(\dom\left(\Psi+\frac{1}{2}\bnorm{B\cdot}_{\widetilde{\mathcal{Z}}}^{2}\right)^{\ast}\right)
  \right)\cap
  (\ri(\ran(B^{\ast}))),
\end{equation}
which further yields by \cite[Prop. 6.19~(viii)]{bauschke_convex_2017}
\begin{equation}
  0_{\mathcal{Z}}\in\ri\left(\dom\left(\left(\Psi+\frac{1}{2}\bnorm{B\cdot}_{\widetilde{\mathcal{Z}}}^{2}\right)^{\ast}\right)-
  \ran\left({B}^{\ast}\right)\right).
\end{equation}

\section{Proof of Proposition~\ref{prop:cLiGME_algorithm}}
\label{appendix:proof_theorem}

To follow the proof of [23, Thm. 1], we show the following claims.

\begin{claim}[Slight extension of {\cite[Claim D.1]{abe_linearly_2020}}]
  \label{claim:inclusion}
  For any $\widebar{x}\in\mathcal{X}$, we have $\widebar{x}\in\mathcal{S}_{C}\coloneqq\argmin_{x\in C}J_{\Psi_{B}\circ\mathfrak{L}}(x)$ if and only if there exists $(\widebar{v}, \widebar{w})\in\mathcal{Z}\times\mathcal{Z}$ such that
  \begin{equation}
    \label{eq:inclusion}
    (0_{\mathcal{X}}, 0_{\mathcal{Z}}, 0_{\mathcal{Z}})\in
    (F+G_{C})(\widebar{x}, \widebar{v}, \widebar{w}),
  \end{equation}
  where $F:\mathcal{H}\to\mathcal{H}$ and $G_{C}:\mathcal{H}\to 2^{\mathcal{H}}$ are defined by
  \begin{equation}\label{eq:FG}
    \myskip{0.1}
    \hspace{-2em}
    \begin{aligned}
      F(x, v, w)
      \coloneqq &
      (
      (A^{\ast}A - \mu\mathfrak{L}^{\ast}B^{\ast}B\mathfrak{L})x - A^{\ast}y,
      \mu B^{\ast}Bv,
      0_{\mathcal{Z}}
      ),
      \\
      G_{C}(x, v, w)
      \coloneqq &
      (\mu\mathfrak{L}^{\ast}B^{\ast}Bv + \mu\mathfrak{L}^{\ast}w + \partial\iota_{C}(x))
      \\ & \times
      (-\mu B^{\ast}B\mathfrak{L}x + \mu\partial\Psi(v))
      \\ & \times
      (-\mu\mathfrak{L}x + \mu\partial\Psi^{\ast}(w))
    \end{aligned}
  \end{equation}
  (Note: $F$ in~\eqref{eq:FG} is the same as $F$ in~\cite[(D.1)]{abe_linearly_2020},
  and $G_{C}$ in a special case $C\coloneqq\mathcal{X}$ reproduces $G$ employed in~\cite[(D.1)]{abe_linearly_2020}).
  \begin{proof}[{Proof of Claim~\ref{claim:inclusion}}]
    We present our proof by referring to the proof of~\cite[Claim D.1]{abe_linearly_2020} that assumes
    (i)~the even symmetry of $\Psi$, and
    (ii)~$C=\mathcal{X}$.
    The solution set $\mathcal{S}_{C}$ of Problem~\ref{prob:cLiGME_model} can be expressed as
    \begin{align}
       & \mathcal{S}_{C}
      =\argmin_{x\in \mathcal{X}}J_{\Psi_{B}\circ\mathfrak{L}}(x)+\iota_{C}(x) \\
       & \overset{\eqref{eq:fermat_rule}}{=}
      \{ x\in \mathcal{X} \mid 0_{\mathcal{X}} \in \partial(J_{\Psi_{B}\circ\mathfrak{L}} + \iota_{C})(x)\}
      \\
       & \overset{\eqref{eq:sum_rule}}{=}
      \{ x \in \mathcal{X} \mid 0_{\mathcal{X}} \in \partial J_{\Psi_{B}\circ\mathfrak{L}}(x) + \partial \iota_{C}(x)\},
      \label{eq:zero_inclusion}
    \end{align}
    where $\iota_{C}$ is the indicator function of $C$ (see~\eqref{eq:indicator}), and
    $\dom J_{\Psi_{B}\circ\mathfrak{L}} = \mathcal{X}$ is used in the last equality.

    By the discussion in~\cite{abe_linearly_2020} to derive~\cite[(D.9)]{abe_linearly_2020}, we get%
    \footnote{
      In~\cite{abe_linearly_2020}, the expression in~\eqref{eq:partial_J} is derived under the even symmetry of $\Psi$, where the even symmetry is used for applying the chain rule~\eqref{eq:chain_rule} in Fact~\ref{fct:subdifferential}~\ref{fct:chain_rule} to $\partial\left[\left(\Psi + \frac{1}{2}\bnorm{B\cdot}_{\widetilde{\mathcal{Z}}}^{2}\right)^{\ast}\circ B^{\ast}\right]$.
      However, we can apply the chain rule to $\partial\left[\left(\Psi + \frac{1}{2}\bnorm{B\cdot}_{\widetilde{\mathcal{Z}}}^{2}\right)^{\ast}\circ B^{\ast}\right]$ without assuming the even symmetry because the constraint qualification~\eqref{eq:ri} in Lemma~\ref{lem:relative_interior} for the chain rule is always achieved by assuming only the coercivity of $\Psi$ and $\dom(\Psi) = \mathcal{Z}$.
    }
    \begin{equation}\label{eq:partial_J}
      \myskip{0.0}
      \hspace{-2em}
      \begin{aligned}
        \partial J_{\Psi_{B}\circ\mathfrak{L}}(x) &=
        (A^{\ast}A - \mu\mathfrak{L}^{\ast}B^{\ast}B\mathfrak{L})x
        - A^{\ast}y+ \mu\mathfrak{L}^{\ast}\partial\Psi(\mathfrak{L}x)
        \\
         & +
        \mu(B^{\ast}B\mathfrak{L})^{\ast}
        \partial\left(\Psi + \frac{1}{2}\bnorm{B\cdot}_{\widetilde{\mathcal{Z}}}^{2}\right)^{\ast}
        (B^{\ast}B\mathfrak{L}x).
      \end{aligned}
    \end{equation}

    By~\eqref{eq:conjugate_subdifferential}, we have for $\widebar{v},\widebar{w}\in\mathcal{Z}$,
    \begin{equation}
      \label{eq:vw}
      \myskip{0.2}
      \hspace{-2em}
      \begin{cases}
        \widebar{v}\in\partial\left(\Psi + \frac{1}{2}\bnorm{B\cdot}_{\widetilde{\mathcal{Z}}}^{2}\right)^{\ast}(B^{\ast}B\mathfrak{L}\widebar{x})
        \Leftrightarrow
        B^{\ast}B\mathfrak{L}\widebar{x}\in
        \smash{
        \underbrace{
          \partial\left(\Psi + \frac{1}{2}\bnorm{B\cdot}_{\widetilde{\mathcal{Z}}}^{2}\right)(\widebar{v})
        }_{\overset{(\spadesuit)}{=}\partial\Psi(\widebar{v}) + B^{\ast}B\widebar{v}}
        }
        \\
        \\
        \widebar{w}\in\partial\Psi(\mathfrak{L}\widebar{x})\Leftrightarrow\mathfrak{L}\widebar{x}\in\partial\Psi^{\ast}(\widebar{w}),
      \end{cases}
    \end{equation}
    where $(\spadesuit)$ follows from~\eqref{eq:sum_rule} with $\dom\left(\norm{B\cdot}_{\widetilde
        {\mathcal{Z}}}\right)=\mathcal{Z}$ and~\eqref{eq:differential} with $\nabla (\frac{1}{2}\bnorm{B\cdot}_{\widetilde{\mathcal{Z}}}^{2})=B^{\ast}B(\cdot)$.
    From~\eqref{eq:zero_inclusion} and~\eqref{eq:partial_J}, we have
      {
        \myskip{0.1}
        \begin{align}
          \hspace{-2em}
          \widebar{x}\in\mathcal{S}_{C}
           &
          \overset{\hphantom{\eqref{eq:vw}}}{\Leftrightarrow} 0_{\mathcal{X}} \in \partial J_{\Psi_{B}\circ\mathfrak{L}}(\widebar{x}) + \partial \iota_{C}(\widebar{x})
          \\
          \hspace{-2em}
           & \overset{\eqref{eq:vw}}{\Leftrightarrow}
          (\exists \widebar{v},\widebar{w}\in\mathcal{Z})\
          \begin{cases}
            0_{\mathcal{X}}\in
            (A^{\ast}A-\mu\mathfrak{L}^{\ast}B^{\ast}B\mathfrak{L})\widebar{x}
            -A^{\ast}y
            \\ \hfill
            + \mu\mathfrak{L}^{\ast}B^{\ast}B\widebar{v} + \mu\mathfrak{L}^{\ast}\widebar{w} + \partial\iota_{C}(\widebar{x}),
            \\
            B^{\ast}B\mathfrak{L}\widebar{x}\in
            \partial\Psi(\widebar{v}) + B^{\ast}B\widebar{v},
            \\
            \mathfrak{L}\widebar{x}\in
            \partial\Psi^{\ast}(\widebar{w})
          \end{cases}
          \\
          \hspace{-2em}
           &
          \overset{\hphantom{\eqref{eq:vw}}}{\Leftrightarrow}
          (\exists \widebar{v},\widebar{w}\in\mathcal{Z})\
          \begin{cases}
            0_{\mathcal{X}}\in
            (A^{\ast}A-\mu\mathfrak{L}^{\ast}B^{\ast}B\mathfrak{L})\widebar{x}
            -A^{\ast}y
            \\ \hfill
            + \mu\mathfrak{L}^{\ast}B^{\ast}B\widebar{v} + \mu\mathfrak{L}^{\ast}\widebar{w} + \partial\iota_{C}(\widebar{x}),
            \\
            0_{\mathcal{Z}}\in
            -\mu B^{\ast}B\mathfrak{L}\widebar{x} + \mu B^{\ast}B\widebar{v} + \mu\partial\Psi(\widebar{v}),
            \\
            0_{\mathcal{Z}}\in
            -\mu\mathfrak{L}\widebar{x} + \mu\partial\Psi^{\ast}(\widebar{w})
          \end{cases}
          \\
          \hspace{-2em}
           &
          \overset{\hphantom{\eqref{eq:vw}}}{\Leftrightarrow}
          (\exists \widebar{v},\widebar{w}\in\mathcal{Z})\
          (0_{\mathcal{X}}, 0_{\mathcal{Z}}, 0_{\mathcal{Z}})\in
          (F+G_{C})(\widebar{x}, \widebar{v}, \widebar{w}).
          \hspace{1em}\qedhere
        \end{align}
      }%
  \end{proof}
\end{claim}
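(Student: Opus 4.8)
The plan is to reduce membership in $\mathcal{S}_{C}$ to a monotone inclusion by applying Fermat's rule and then splitting the subdifferential of the LiGME regularizer into coordinate inclusions through the introduction of dual variables. Since $\dom\Psi=\mathcal{Z}$ forces $\dom J_{\Psi_{B}\circ\mathfrak{L}}=\mathcal{X}$, I would first rewrite $\mathcal{S}_{C}=\argmin_{x\in\mathcal{X}}\bigl[J_{\Psi_{B}\circ\mathfrak{L}}(x)+\iota_{C}(x)\bigr]$ and apply Fermat's rule~\eqref{eq:fermat_rule} to obtain $\widebar{x}\in\mathcal{S}_{C}\Leftrightarrow 0_{\mathcal{X}}\in\partial(J_{\Psi_{B}\circ\mathfrak{L}}+\iota_{C})(\widebar{x})$, followed by the sum rule~\eqref{eq:sum_rule} (valid because $\dom J_{\Psi_{B}\circ\mathfrak{L}}=\mathcal{X}$) to reach $0_{\mathcal{X}}\in\partial J_{\Psi_{B}\circ\mathfrak{L}}(\widebar{x})+\partial\iota_{C}(\widebar{x})$. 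Both steps are exact equivalences, so neither direction of the claim is lost.

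Next I would derive the subdifferential formula~\eqref{eq:partial_J}, following the computation behind \cite[(D.9)]{abe_linearly_2020}, which expresses $\partial J_{\Psi_{B}\circ\mathfrak{L}}$ as the sum of the affine term $(A^{\ast}A-\mu\mathfrak{L}^{\ast}B^{\ast}B\mathfrak{L})(\cdot)-A^{\ast}y$, the term $\mu\mathfrak{L}^{\ast}\partial\Psi(\mathfrak{L}\,\cdot)$, and the conjugate term $\mu(B^{\ast}B\mathfrak{L})^{\ast}\partial\bigl(\Psi+\frac{1}{2}\bnorm{B\cdot}_{\widetilde{\mathcal{Z}}}^{2}\bigr)^{\ast}(B^{\ast}B\mathfrak{L}\,\cdot)$. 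The delicate point, and what I expect to be the main obstacle, is the chain rule~\eqref{eq:chain_rule} applied to $\bigl(\Psi+\frac{1}{2}\bnorm{B\cdot}_{\widetilde{\mathcal{Z}}}^{2}\bigr)^{\ast}\circ B^{\ast}$: its constraint qualification is precisely~\eqref{eq:ri}, which in \cite{abe_linearly_2020} is secured via the even symmetry of $\Psi$. I would instead invoke Lemma~\ref{lem:relative_interior}, which establishes~\eqref{eq:ri} from coercivity of $\Psi$ and $\dom\Psi=\mathcal{Z}$ alone, thereby discharging the qualification without any even-symmetry assumption and keeping~\eqref{eq:partial_J} valid in the present setting.

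With~\eqref{eq:partial_J} established, I would introduce dual variables $\widebar{v},\widebar{w}\in\mathcal{Z}$ witnessing the two set-valued terms. Using the conjugate relation~\eqref{eq:conjugate_subdifferential}, the inclusion $\widebar{v}\in\partial\bigl(\Psi+\frac{1}{2}\bnorm{B\cdot}_{\widetilde{\mathcal{Z}}}^{2}\bigr)^{\ast}(B^{\ast}B\mathfrak{L}\widebar{x})$ is equivalent to $B^{\ast}B\mathfrak{L}\widebar{x}\in\partial\bigl(\Psi+\frac{1}{2}\bnorm{B\cdot}_{\widetilde{\mathcal{Z}}}^{2}\bigr)(\widebar{v})$, and the latter subdifferential simplifies to $\partial\Psi(\widebar{v})+B^{\ast}B\widebar{v}$ by the sum rule~\eqref{eq:sum_rule} (legitimate since $\frac{1}{2}\bnorm{B\cdot}_{\widetilde{\mathcal{Z}}}^{2}$ has full domain) together with~\eqref{eq:differential}, whose gradient is $B^{\ast}B(\cdot)$. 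Likewise, $\widebar{w}\in\partial\Psi(\mathfrak{L}\widebar{x})$ is equivalent to $\mathfrak{L}\widebar{x}\in\partial\Psi^{\ast}(\widebar{w})$ by~\eqref{eq:conjugate_subdifferential}. Substituting both equivalences into $0_{\mathcal{X}}\in\partial J_{\Psi_{B}\circ\mathfrak{L}}(\widebar{x})+\partial\iota_{C}(\widebar{x})$ turns the single $\mathcal{X}$-inclusion into a system of three inclusions, one on $\mathcal{X}$ and two on $\mathcal{Z}$, coupled through $\widebar{v}$ and $\widebar{w}$.

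Finally, I would regroup the three inclusions so that each separates a single-valued part from a set-valued part matching the block structure in~\eqref{eq:FG}: the $\mathcal{X}$-row reads $0_{\mathcal{X}}\in(A^{\ast}A-\mu\mathfrak{L}^{\ast}B^{\ast}B\mathfrak{L})\widebar{x}-A^{\ast}y+\mu\mathfrak{L}^{\ast}B^{\ast}B\widebar{v}+\mu\mathfrak{L}^{\ast}\widebar{w}+\partial\iota_{C}(\widebar{x})$; the first $\mathcal{Z}$-row reads $0_{\mathcal{Z}}\in\mu B^{\ast}B\widebar{v}-\mu B^{\ast}B\mathfrak{L}\widebar{x}+\mu\partial\Psi(\widebar{v})$; and the second $\mathcal{Z}$-row reads $0_{\mathcal{Z}}\in-\mu\mathfrak{L}\widebar{x}+\mu\partial\Psi^{\ast}(\widebar{w})$. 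Reading the single-valued entries as the three components of $F$ and the remaining set-valued entries as the three components of $G_{C}$ from~\eqref{eq:FG} gives exactly $(0_{\mathcal{X}},0_{\mathcal{Z}},0_{\mathcal{Z}})\in(F+G_{C})(\widebar{x},\widebar{v},\widebar{w})$, which is~\eqref{eq:inclusion}. Because every step in the chain is an equivalence, both implications of the claim follow at once.
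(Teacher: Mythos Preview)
Your proposal is correct and follows essentially the same approach as the paper's own proof: Fermat's rule plus the sum rule reduce $\widebar{x}\in\mathcal{S}_{C}$ to $0_{\mathcal{X}}\in\partial J_{\Psi_{B}\circ\mathfrak{L}}(\widebar{x})+\partial\iota_{C}(\widebar{x})$, then the subdifferential formula~\eqref{eq:partial_J} is obtained via \cite{abe_linearly_2020} with Lemma~\ref{lem:relative_interior} supplying the chain-rule qualification in place of even symmetry, and finally the conjugate duality~\eqref{eq:conjugate_subdifferential} introduces $(\widebar{v},\widebar{w})$ to split the inclusion into the three rows of $(F+G_{C})$. The sequence of equivalences, the justifications invoked, and the identification with~\eqref{eq:FG} all match the paper's argument.
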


\begin{claim}
  \label{claim:maximally_monotone}
  $\mathfrak{P}^{-1}\circ G_{C}$ is maximally monotone over the Hilbert space $\Hilbert{\mathcal{H}}{\mathfrak{P}}$, where $\mathfrak{P}\succ O_{\mathcal{H}}$ has been shown in~\cite[(34) in Thm. 1 (b)]{abe_linearly_2020}.
  \begin{proof}[{Proof of Claim~\ref{claim:maximally_monotone}}]
    Consider the decomposition of $G_{C}= G_{C}^{\ev{1}} + G_{C}^{\ev{2}}$ in~\eqref{eq:FG}, where $G_{C}^{\ev{1}}:\mathcal{H}\to 2^{\mathcal{H}}:(x,v,w)\mapsto(\partial\iota_{C}(x))\times(\mu\partial\Psi(v))\times(\mu\partial\Psi^{\ast}(w))$
    and $G_{C}^{\ev{2}}:\mathcal{H}\to\mathcal{H}:(x,v,w)\mapsto(\mu\mathfrak{L}^{\ast}B^{\ast}Bv + \mu\mathfrak{L}^{\ast}w, -\mu B^{\ast}B\mathfrak{L}x, -\mu\mathfrak{L}x)$ are maximally monotone%
    \footnote{
    $G_{C}^{\ev{1}}$ is maximally monotone over
    $\Hilbert{\mathcal{H}}{\mathcal{H}}$ because
    (i)~$G_{C}^{\ev{1}}$ is also given by the subdifferential of $g\in \Gamma_{0}(\mathcal{H})$ defined as $g:\mathcal{H}\to(-\infty,\infty]:(x,v,w)\mapsto \iota_{C}(x) + \mu\Psi(v) + \mu\Psi^{\ast}(w)$~\cite[Preposition 16.9]{bauschke_convex_2017}; and
    (ii)~$\partial g:\mathcal{H}\to2^{\mathcal{H}}$ is maximally monotone by Fact~\ref{fct:maximally_monotone}~\ref{item:MM_subdifferential}.
    $G_{C}^{\ev{2}}$ is a bounded linear skew-symmetric operator, i.e., ${G_{C}^{\ev{2}}}^{\ast}=-G_{C}^{\ev{2}}$, and thus $G_{C}^{\ev{2}}$ is maximally monotone over
    $\Hilbert{\mathcal{H}}{\mathcal{H}}$
    by Fact~\ref{fct:maximally_monotone}~\ref{item:MM_skew}.
    }
    over $\Hilbert{\mathcal{H}}{\mathcal{H}}$.
    By Fact~\ref{fct:maximally_monotone}~\ref{item:MM_sum} and $\dom\left(G_{C}^{\ev{2}}\right)=\mathcal{H}$, $G_{C}=G_{C}^{\ev{1}}+G_{C}^{\ev{2}}$ is maximally monotone over
    $\Hilbert{\mathcal{H}}{\mathcal{H}}$.
    Finally, by Fact~\ref{fct:maximally_monotone}~\ref{item:MM_hilbert}, $\mathfrak{P}^{-1}\circ G_{C}$ is maximally monotone over  $\Hilbert{\mathcal{H}}{\mathfrak{P}}$.
  \end{proof}
\end{claim}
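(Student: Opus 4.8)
The plan is to decompose $G_{C}$ as the sum of a subdifferential operator and a bounded skew-symmetric linear operator, to verify that each summand is maximally monotone over the \emph{standard} Hilbert space $\Hilbert{\mathcal{H}}{\mathcal{H}}$, and only at the very end to transport maximal monotonicity to the $\mathfrak{P}$-weighted geometry via precomposition with $\mathfrak{P}^{-1}$. Concretely, I would write $G_{C}=G_{C}^{\ev{1}}+G_{C}^{\ev{2}}$ with
\begin{align}
  G_{C}^{\ev{1}}(x,v,w) &\coloneqq (\partial\iota_{C}(x)) \times (\mu\partial\Psi(v)) \times (\mu\partial\Psi^{\ast}(w)), \\
  G_{C}^{\ev{2}}(x,v,w) &\coloneqq (\mu\mathfrak{L}^{\ast}B^{\ast}Bv + \mu\mathfrak{L}^{\ast}w,\ -\mu B^{\ast}B\mathfrak{L}x,\ -\mu\mathfrak{L}x),
\end{align}
so that their pointwise sum reproduces, coordinate by coordinate, the three components of $G_{C}$ in~\eqref{eq:FG}.

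First I would establish that $G_{C}^{\ev{1}}$ is maximally monotone over $\Hilbert{\mathcal{H}}{\mathcal{H}}$. The key observation is that $G_{C}^{\ev{1}}=\partial g$ for the separable function $g:\mathcal{H}\to(-\infty,\infty]:(x,v,w)\mapsto \iota_{C}(x)+\mu\Psi(v)+\mu\Psi^{\ast}(w)$: since $\iota_{C},\Psi,\Psi^{\ast}$ all belong to $\Gamma_{0}$, their separable sum lies in $\Gamma_{0}(\mathcal{H})$, and the subdifferential of a separable sum splits coordinatewise into the three individual subdifferentials. Hence $G_{C}^{\ev{1}}=\partial g$ is maximally monotone by Fact~\ref{fct:maximally_monotone}~\ref{item:MM_subdifferential}.

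Next I would show $G_{C}^{\ev{2}}$ is maximally monotone. It is a single-valued bounded linear operator whose block representation has zero diagonal blocks, first-row off-diagonal blocks $\mu\mathfrak{L}^{\ast}B^{\ast}B$ and $\mu\mathfrak{L}^{\ast}$, and first-column off-diagonal blocks $-\mu B^{\ast}B\mathfrak{L}$ and $-\mu\mathfrak{L}$. Using the self-adjointness $(B^{\ast}B)^{\ast}=B^{\ast}B$ together with $(\mathfrak{L}^{\ast})^{\ast}=\mathfrak{L}$, a direct computation of the adjoint pairs the $(1,2)$ block against the $(2,1)$ block and the $(1,3)$ block against the $(3,1)$ block with opposite signs, giving $(G_{C}^{\ev{2}})^{\ast}=-G_{C}^{\ev{2}}$; that is, $G_{C}^{\ev{2}}$ is skew-symmetric, hence maximally monotone by Fact~\ref{fct:maximally_monotone}~\ref{item:MM_skew}. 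Because $\dom(G_{C}^{\ev{2}})=\mathcal{H}$, the sum rule for maximally monotone operators (Fact~\ref{fct:maximally_monotone}~\ref{item:MM_sum}) then yields that $G_{C}=G_{C}^{\ev{1}}+G_{C}^{\ev{2}}$ is maximally monotone over $\Hilbert{\mathcal{H}}{\mathcal{H}}$. Finally, since $\mathfrak{P}\succ O_{\mathcal{H}}$ is already available (as asserted in the claim, and equivalently in Proposition~\ref{prop:cLiGME_algorithm}~\ref{item:nonexpansive}), Fact~\ref{fct:maximally_monotone}~\ref{item:MM_hilbert} transfers maximal monotonicity of $G_{C}$ to maximal monotonicity of $\mathfrak{P}^{-1}\circ G_{C}$ over the weighted Hilbert space $\Hilbert{\mathcal{H}}{\mathfrak{P}}$, which is exactly the assertion.

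The only genuinely nonroutine point is the verification of skew-symmetry of $G_{C}^{\ev{2}}$; everything else is a direct invocation of the cited maximal-monotonicity facts. I therefore expect the main obstacle to be careful bookkeeping of adjoints: one must confirm that the cross terms pair up with opposite signs, which relies precisely on the self-adjointness of $B^{\ast}B$ and on the mirrored placement of $\pm\mu\mathfrak{L}$ and $\pm\mu\mathfrak{L}^{\ast}B^{\ast}B$ in the definition of $G_{C}$. Once that sign pattern is confirmed, the passage from $\Hilbert{\mathcal{H}}{\mathcal{H}}$ to the $\mathfrak{P}$-weighted space is immediate from $\mathfrak{P}\succ O_{\mathcal{H}}$.
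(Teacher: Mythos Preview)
Your proposal is correct and follows essentially the same argument as the paper: the same decomposition $G_{C}=G_{C}^{\ev{1}}+G_{C}^{\ev{2}}$ into a separable subdifferential and a skew-symmetric linear operator, the same appeals to Fact~\ref{fct:maximally_monotone}~\ref{item:MM_subdifferential}--\ref{item:MM_hilbert}, and the same final transfer to the $\mathfrak{P}$-weighted inner product. The paper even records the identification $G_{C}^{\ev{1}}=\partial g$ with $g(x,v,w)=\iota_{C}(x)+\mu\Psi(v)+\mu\Psi^{\ast}(w)$ and the skew-symmetry check in a footnote, exactly as you outline.
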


\begin{proof}[Proof of Proposition~\ref{prop:cLiGME_algorithm}]
  \ref{item:fixed_point}
  From~\eqref{eq:TcLiGME}, we have
    {
      \myskip{0.0}
      \begin{align}
        \hspace{-2em}
         &
        \label{eq:xi_zeta_eta}
        T_{\cLiGME}(x,v,w) = (\xi,\zeta,\eta)
        \\
        \label{eq:inclusion2}
        \hspace{-2em}
        \Leftrightarrow
         &
        \begin{cases}
          (\sigma\Id-(A^{\ast}A-\mu\mathfrak{L}^{\ast}B^{\ast}B\mathfrak{L}))x
          \\ \hfill
          -\mu\mathfrak{L}^{\ast}B^{\ast}Bv
          -\mu\mathfrak{L}^{\ast}w
          +A^{\ast}y
          \in[\sigma\Id + \partial\iota_{C}](\xi),
          \\
          \mu B^{\ast}B\mathfrak{L}(2\xi-x)
          +(\tau\Id-\mu B^{\ast}B)v
          \in[\tau\Id + \mu\partial\Psi](\zeta),
          \\
          \mu\mathfrak{L}(2\xi-x)+\mu w
          \in[\mu\Id + \mu\partial\Psi^{\ast}](\eta)
        \end{cases} \\
        \hspace{-2em}
        \Leftrightarrow
         &
        \label{eq:P_F_P_G}
        (\mathfrak{P}-F)(x,v,w)\in
        (\mathfrak{P}+G_{C})(\xi,\zeta,\eta),
      \end{align}}%
  where~\eqref{eq:inclusion2} follows from, for $f \in \Gamma_{0}(\mathcal{H})$, $\Prox_{f} = (\Id+\partial f)^{-1}$ \cite[Prop. 16.44]{bauschke_convex_2017}
  and the expression $P_{C}=\Prox_{\frac{1}{\sigma}\iota_{C}}$ (see Example~\ref{ex:proximity_operator}~\ref{item:indicator}).
  By Claim~\ref{claim:inclusion}, we have
  \begin{align}
     & \widebar{x}\in \mathcal{S}_{C}                                                                                  \\
    \Leftrightarrow
     & (\exists \widebar{v},\widebar{w}\in\mathcal{Z})\
    0_{\mathcal{X}}\in (F+G_{C})(\widebar{x}, \widebar{v}, \widebar{w})                                                \\
    \Leftrightarrow
     & (\exists \widebar{v},\widebar{w}\in\mathcal{Z})\
    (\mathfrak{P}-F)(\widebar{x},\widebar{v},\widebar{w})\in (\mathfrak{P}+G_{C})(\widebar{x},\widebar{v},\widebar{w}) \\
    \Leftrightarrow
     & \label{eq:fixed}
    (\exists \widebar{v},\widebar{w}\in\mathcal{Z})\
    T_{\cLiGME}(\widebar{x},\widebar{v},\widebar{w})=(\widebar{x},\widebar{v},\widebar{w})                             \\
    \Leftrightarrow
     & (\exists \widebar{v},\widebar{w}\in\mathcal{Z})\
    (\widebar{x},\widebar{v},\widebar{w})\in\Fix(T_{\cLiGME})                                                          \\
    \Leftrightarrow
     & \widebar{x}\in\Xi(\Fix(T_{\cLiGME})),
  \end{align}
  where we used the relation \eqref{eq:xi_zeta_eta}$\Leftrightarrow$\eqref{eq:P_F_P_G} in \eqref{eq:fixed}.

  \ref{item:nonexpansive}
  By~\eqref{eq:xi_zeta_eta}$\Leftrightarrow$\eqref{eq:P_F_P_G} and $\mathfrak{P}\succ O_{\mathcal{H}}$,
  {\myskip{0.5}
      \begin{align}
        \hspace{-2em} & T_{\cLiGME}(x,v,w) = (\xi,\zeta,\eta)
        \\
        \hspace{-2em}\Leftrightarrow
                      & \label{eq:p_inverse}
        (\Id- \mathfrak{P}^{-1}\circ F)(x,v,w)\in
        (\Id+ \mathfrak{P}^{-1}\circ G_{C})(\xi,\zeta,\eta).
      \end{align}}%
  By Claim~\ref{claim:maximally_monotone} and~\cite[Cor. 23.11 (i)]{bauschke_convex_2017}, $(\Id+ \mathfrak{P}^{-1}\circ G_{C})^{-1}$ is a  $\frac{1}{2}$-averaged nonexpansive (single-valued) mapping over
  $\Hilbert{\mathcal{H}}{\mathfrak{P}}$, and therefore we obtain from~\eqref{eq:p_inverse}
  \begin{equation}
    T_{\cLiGME}=
    (\Id+ \mathfrak{P}^{-1}\circ G_{C})^{-1}\circ(\Id- \mathfrak{P}^{-1}\circ F),
  \end{equation}
  where $\frac{1}{\kappa}$-averaged nonexpansiveness of $(\Id- \mathfrak{P}^{-1}\circ F)$ over $\Hilbert{\mathcal{H}}{\mathfrak{P}}$ has been shown in the proof of~\cite[Thm. 1 (b)]{abe_linearly_2020}.
  By~\cite[Thm. 3 (b)]{ogura_non-strictly_2002}\cite[Prop. 2.4]{combettes_compositions_2015}, $T_{\cLiGME}$ is $\frac{\kappa}{2\kappa-1}$-averaged nonexpansive in
  $\Hilbert{\mathcal{H}}{\mathfrak{P}}$.

  \ref{item:convergence}
  Thanks to \ref{item:fixed_point} and \ref{item:nonexpansive}, the direct application of Krasnosel'ski\u{\i}-Mann iteration in Fact~\ref{fct:picard} to $T_{\cLiGME}$ yields~\ref{item:convergence}.
\end{proof}

\section{Proof of Theorem~\ref{thm:isolated}}
\label{appendix:proof_isolated}

  \ref{item:expression}
  By using the expression%
  \footnote{
    This expression can be verified with
      {
        \myskip{0.5}
        \begin{align}
           & \bigl(\norm{\cdot}_{\bm{\omega}^{\ev{l}};1}\bigr)_{\B^{\ev{l}}}(\x)                                                                                                                 \\
           & \overset{\B^{\ev{l}}=b\I_{N}}{=} \norm{\x}_{\bm{\omega}^{\ev{l}};1} - \min_{\v \in \mathbb{R}^{N}}\left[\norm{\v}_{\bm{\omega}^{\ev{l}};1} + \frac{b^{2}}{2}\norm{\x-\v}^{2}\right] \\
           & = \sum_{n=1}^{N}\omega^{\ev{l}}_{n}\left[\abs{x_{n}} - \min_{\v \in \mathbb{R}^{N}}\left(\abs{v_{n}} + \frac{b^{2}}{2\omega^{\ev{l}}_{n}}(x_{n}-v_{n})^{2}\right)\right]            \\
           & = \sum_{n=1}^{N}\omega^{\ev{l}}_{n}\left((\abs{\cdot})_{\frac{b}{\sqrt{\omega^{\ev{l}}_{n}}}}\right)(x_{n})
          = \sum_{n=1}^{N}\omega^{\ev{l}}_{n}\MC{\frac{b^{2}}{\omega^{\ev{l}}_{n}}}(x_{n}),
        \end{align}}%
    where we used the relation $(\abs{\cdot})_{\sqrt{\gamma}}=\MC{\gamma}$ in the last equality.
  }
  \begin{align}
     & (\x=(x_{1},x_{2},\ldots,x_{N})^{\top}\in\mathbb{R}^{N})                                               \\
     & \bigl(\norm{\cdot}_{\bm{\omega}^{\ev{l}};1}\bigr)_{\B^{\ev{l}}}(\x)=\sum_{n=1}^{N}\omega^{\ev{l}}_{n}
    \MC{\frac{b^{2}}{\omega^{\ev{l}}_{n}}}(x_{n}),
  \end{align}
  we have the following expression
  \begin{align}
     & (\x \in \mathbb{R}^{N}) \quad
    \Theta_{\LiGME}(\x)
    = \sum_{l=1}^{L}\bigl(\norm{\cdot}_{\bm{\omega}^{\ev{l}};1}\bigr)_{\B^{\ev{l}}}\bigl(\x-a^{\ev{l}}\ones_{N}\bigr) \\
     & = \sum_{l=1}^{L}\sum_{n=1}^{N}\omega^{\ev{l}}_{n}\MC{\frac{b^{2}}{\omega^{\ev{l}}_{n}}}(x_{n}-a^{\ev{l}})
    = \sum_{n=1}^{N} \psi_{n}(x_{n}),
  \end{align}

  \ref{item:designated}
  By~\ref{item:expression} and Lemma~\ref{lemma:separable}~\ref{item:separable:isolated},
  $\x \in \mathbb{R}^{N}$
  is an isolated local minimizer of
  $\Theta_{\LiGME}$
  if and only if every
  $x_{n} \in \mathbb{R}\ (n=1,2,\ldots,N)$
  is an isolated local minimizer of
  $\psi_{n}$.
  In the following, by fixing
  $n = 1,2,\ldots,N$,
  we show that
  $\mathfrak{A}$
  is the set of all isolated local minimizers of
  $\psi_{n}$.
  For
  $l = 1,2,\ldots,L$,
  let
  $\mathcal{N}_{a^{\ev{l}},n}\coloneqq \{x \in \mathbb{R} \mid \abs{x-a^{\ev{l}}} < \frac{\omega^{\ev{l}}_{n}}{b^{2}}\}$
  be an open neighborhood of
  $a^{\ev{l}}$.
  From~\eqref{eq:MCP}, we obtain
  \begin{equation}
    \hspace{-2em}
    \myskip{0.0}
    (l=1,2,\ldots,L,\ x \notin \mathcal{N}_{a^{\ev{l}},n})\quad
    \MC{\frac{b^{2}}{\omega^{\ev{l}}_{n}}}(x-a^{\ev{l}})
    = \frac{\omega^{\ev{l}}_{n}}{2b^{2}}. \label{eq:constant_outside_neighborhood}
  \end{equation}
  Moreover, we have
  \begin{equation}
    \hspace{-2em}
    \myskip{0.2}
    (a^{\ev{l}},a^{\ev{l'}}\in\mathfrak{A} \ {\rm s.t.}\ a^{\ev{l}}\neq a^{\ev{l'}}) \quad
    \mathcal{N}_{a^{\ev{l}},n} \cap \mathcal{N}_{a^{\ev{l'}},n} = \emptyset \label{eq:distinct_neighborhood}
  \end{equation}
  by the following inequality
    {
      \myskip{0.2}
      \begin{align}
        \hspace{-2em}
        (x \in \mathcal{N}_{a^{\ev{l}},n}) \quad
        \abs{x-a^{\ev{l'}}}
         & \geq \abs{a^{\ev{l'}} - a^{\ev{l}}} - \abs{x-a^{\ev{l}}}
        > d_{\min} - \frac{\omega^{\ev{l}}_{n}}{b^{2}}              \\
        \hspace{-2em}
         & \geq d_{\min} - \frac{d_{\min}}{2}
        = \frac{d_{\min}}{2}
        \geq \frac{\omega^{\ev{l'}}_{n}}{b^{2}},
      \end{align}}%
  where we used
  $\frac{d_{\min}}{2} \geq \frac{\omega_{\max}}{b^{2}} \geq \frac{\omega^{\ev{k}}_{n}}{b^{2}}\ (k=l,l')$.

  \textbf{(Case 1: $x \in \mathcal{N}_{a^{\ev{l}},n}\ (l=1,2,\ldots,L)$)}
  Choose
  $a^{\ev{l}} \in \mathfrak{A}$
  arbitrarily, and let
  $x \in \mathcal{N}_{a^{\ev{l}},n}$.
  By~\eqref{eq:distinct_neighborhood}, we have
  $x \notin \mathcal{N}_{a^{\ev{l'}},n}$
  for all
  $a^{\ev{l'}} \in\mathfrak{A} \setminus \{a^{\ev{l}}\}$.
  From~\eqref{eq:constant_outside_neighborhood}, we obtain the expression
  \begin{equation}
    \hspace{-2em}
    \myskip{0.2}
    (x \in \mathcal{N}_{a^{\ev{l}},n}) \
    \psi_{n}(x) =
    \MC{\frac{b^{2}}{\omega^{\ev{l}}_{n}}}(x-a^{\ev{l}}) +
    \underbrace{\sum_{l'=1,l'\neq l}^{L}\frac{\omega^{\ev{l'}}_{n}}{2b^{2}}}_{\text{const.}},
  \end{equation}
  implying thus
  $\psi_{n}$
  is differentiable on
  $\mathcal{N}_{a^{\ev{l}},n} \setminus \{a^{\ev{l}}\}$
  with the derivative
  $\psi'_{n}(x) = \mathrm{sgn}(x-a^{\ev{l}}) - \frac{b^{2}}{\omega^{\ev{l}}_{n}}(x-a^{\ev{l}})$
  for
  $x \in \mathcal{N}_{a^{\ev{l}},n} \setminus \{a^{\ev{l}}\}$.
  For
  $x \in \mathcal{N}_{a^{\ev{l}},n} \setminus \{a^{\ev{l}}\}$,
  due to
  $\babs{\frac{b^{2}}{\omega^{\ev{l}}_{n}}(x-a^{\ev{l}})} < 1$,
  we have
  $\psi'_{n}(x) \neq 0$,
  implying thus any
  $x \in \mathcal{N}_{a^{\ev{l}},n} \setminus \{a^{\ev{l}}\}$
  can not be a local minimizer of
  $\psi_{n}$.
  On the other hand,
  $a^{\ev{l}}$ is the unique minimizer of $x\mapsto \MC{\frac{b^{2}}{\omega^{\ev{l}}_{n}}}(x-a^{\ev{l}})$,
  and thus
  $a^{\ev{l}}$ is the unique isolated local minimizer of
  $\psi_{n}$
  in
  $\mathcal{N}_{a^{\ev{l}},n}$.

  \textbf{(Case 2: $x \in \mathbb{R}\setminus (\bigcup_{l=1}^{L}\mathcal{N}_{a^{\ev{l'}}})$)}
  By~\eqref{eq:constant_outside_neighborhood},
  we have
  \begin{equation}
    \left(x \in \mathbb{R}\setminus \left(\bigcup_{l=1}^{L}\mathcal{N}_{a^{\ev{l}},n}\right)\right) \quad
    \psi_{n}(x) =
    \underbrace{\sum_{l=1}^{L}\frac{\omega^{\ev{l}}_{n}}{2b^{2}}}_{\text{const.}},
  \end{equation}
  from which any
  $x \in \mathbb{R}\setminus \left(\bigcup_{l=1}^{L}\mathcal{N}_{a^{\ev{l}},n}\right)$
  can not be an isolated local minimizer of
  $\psi_{n}$.

  By combining Case 1 and Case 2, the set of all isolated local minimizers of
  $\psi_{n}$
  is
  $\mathfrak{A}$.

\section{Proof of Lemma~\ref{lem:LiGME_regularizer}}
\label{appendix:proof_LiGME_regularizer}

  \ref{item:commutativity}
  The equality in~\eqref{eq:GME_shift} follows from
  {
    \myskip{0.2}
    \begin{align}
      \hspace{-1em}
       & [\Phi(\cdot-z)]_{B}(u)                    \\
      \hspace{-1em}
      \overset{
        \hphantom{v'\coloneqq v-z}
      }{=}
       & \Phi(u-z) - \min_{v\in\mathcal{Z}}\left[
      \Phi(v-z)+\frac{1}{2}\bnorm{B(u-v)}_{\widetilde{\mathcal{Z}}}^{2}
      \right]                                      \\
      \hspace{-1em}
      \overset{
        \hphantom{v'\coloneqq v-z}
      }{=}
       & \Phi(u-z) - \min_{v\in\mathcal{Z}}\left[
      \Phi(v-z)+\frac{1}{2}\bnorm{B((u-z)-(v-z))}_{\widetilde{\mathcal{Z}}}^{2}
      \right]                                      \\
      \hspace{-1em}
      \overset{v'\coloneqq v-z}{=}
       & \Phi(u-z) - \min_{v'\in\mathcal{Z}}\left[
      \Phi(v')+\frac{1}{2}\bnorm{B((u-z)-v')}_{\widetilde{\mathcal{Z}}}^{2}
      \right]                                      \\
      \hspace{-1em}
      \overset{
        \hphantom{v'\coloneqq v-z}
      }{=}
       & \Phi_{B}(u-z).
    \end{align}
  }%

  \ref{item:LiGME_regularizer}
  By using a reformulation in~\cite[Exm.~3]{abe_linearly_2020}, we have
    \begin{align}
      (\x\in\mathbb{R}^{N})\quad
       &
      \Theta_{\LiGME}(\x)
      \\
       &
      \overset{
        \hphantom{\text{\ref{item:commutativity}}}
      }{=}
      \sum_{l=1}^{L}\bigl(\norm{\cdot}_{\bm{\omega}^{\ev{l}};1}\bigr)_{\B^{\ev{l}}}\bigl(\x-a^{\ev{l}}\ones_{N}\bigr)
      \\
       &
      \overset{\text{\ref{item:commutativity}}}{=}
      \sum_{l=1}^{L}\left[
      \norm{\cdot-a^{\ev{l}}\ones_{N}}_{\bm{\omega}^{\ev{l}};1}
      \right]_{\B^{\ev{l}}}\circ\Id(\x)
      \\
       &
      \overset{\text{\cite[Exm. 3]{abe_linearly_2020}}}{=}
      \Psi_{\B}\circ\mathfrak{L}(\x).
      \hspace{5em}\qedhere
    \end{align}

\section{Two simple modification techniques}
\label{appendix:modification}

For further improvement of Algorithm~\ref{alg}, we introduce two simple techniques in Algorithm~\ref{alg} to exploit adaptively the discrete information regarding $\mathfrak{A}^{N}$.

\noindent\textbf{(Iterative reweighting of cLiGME algorithm)}

The iterative reweighting technique (IW), e.g.,~\cite{candes_enhancing_2008}, has been used to enhance the effectiveness of the SOAV regularizer in~\eqref{eq:theta_SOAV} by updating the weights of the regularizer adaptively in an iterative algorithm.
IW is also used for Problem~\ref{prob:discrete}~\cite{hayakawa_convex_2017,hayakawa_reconstruction_2018}, where the SOAV regularizer in~\eqref{eq:theta_SOAV} is employed.
To utilize such a technique in Algorithm~\ref{alg}, we propose to set $\omega^{\ev{l}}_{n}$ $(l=1,2,\ldots,L;\,n=1,2,\ldots,N)$ in the seed functions $\norm{\cdot}_{\bm{\omega}^{\ev{l}};1}$ $(l=1,2,\ldots,L)$ adaptively by using the latest estimate ${\x}\coloneqq({x}_1,{x}_2,\ldots,{x}_{N})^{\top}$ as~\cite{hayakawa_reconstruction_2018}
\begin{equation}\label{eq:set_omega}
  \omega^{\ev{l}}_{n}=\frac{(\abs{{x}_{n}- a^{\ev{l}}}+\delta)^{-1}}{\sum_{l'=1}^{L} (\abs{{x}_{n}-a^{\ev{l'}}}+\delta)^{-1}},
\end{equation}
where $\delta\in\mathbb{R}_{++}$ is a sufficiently small number.
If $\abs{{x}_{n}- a^{\ev{l}}}$ is small, then
the corresponding $\omega^{\ev{l}}_{n}$ becomes large and ${x}_{n}$ will be close to $ a^{\ev{l}}$.
This IW can be realized by inserting
\begin{equation}\label{eq:iw}
  \begin{aligned}
     & \textbf{if~}k~\text{mod~}K=0\textbf{~then}                                                                                             \\
     & \hspace*{1em}\text{Update~}\bm{\omega}^{\ev{l}}=\left(\omega^{\ev{l}}_{1},\omega^{\ev{l}}_{2},\ldots,\omega^{\ev{l}}_{N}\right)^{\top} \\
     & \hspace*{1em}(l=1,2,\ldots,L)\text{~as~\eqref{eq:set_omega} with }\x=\x_k.                                                             \\
     & \textbf{end~if}
  \end{aligned}
\end{equation}
in line~4 of Algorithm~\ref{alg}, where $K\in\mathbb{N}$ controls the frequency of reweighting.

\noindent\textbf{(Generalized superiorization of cLiGME algorithm)}

\emph{Superiorization}~\cite{censor_perturbation_2010,fink_superiorized_2023} has been proposed as an idea to incorporate a favorable attribute into a given iterative algorithm, e.g., Krasnosel'ski\u{\i}-Mann iteration of Picard-type (see Fact~\ref{fct:picard}), without changing the inherent desired properties of the algorithm, by introducing perturbations into the iteration of the algorithm%
\footnote{
(Bounded perturbation resilience of Krasnosel'ski\u{\i}-Mann iteration~\cite{censor_perturbation_2010}).
Let $\mathcal{H}$ be a finite dimensional real Hilbert space, and $T:\mathcal{H}\to\mathcal{H}$ be an averaged nonexpansive operator.
Suppose $(\beta_{k})_{k=0}^{\infty}$ is a summable sequence in $\mathbb{R}_{+}$, and $(d_{k})_{k=0}^{\infty}$ is a bounded sequence in $\mathcal{H}$, where such a $(\beta_{k}d_{k})_{k=0}^{\infty}$ is said to be a sequence of bounded perturbations.
Then, with any initial point $u_{0}\subset\mathcal{H}$, $(u_{k})_{k=0}^{\infty}$ generated by
$(\forall k\in\mathbb{N})$ $u_{k+1}=T(u_{k}+\beta_{k} d_{k})$
converges to a point $\widebar{u}\in\Fix(T)$.
}.

We propose to incorporate a superiorization technique into Algorithm~\ref{alg} in order to move the estimate closer to $\mathfrak{A}^{N}$ at each iteration.
More precisely, we use a modification
\begin{equation}\label{eq:perturbation}
  \x_k\gets\x_k+\beta_k\underbrace{(P_{\mathfrak{A}^{N}}-\Id)(\x_{k})}_{\eqqcolon \d_{k}}
\end{equation}
in line~4 of Algorithm~\ref{alg}, where $(\beta_k)_{k=0}^{\infty}\subset\mathbb{R}_{+}$, and $(\d_{k})_{k=0}^{\infty}\subset\mathbb{R}^{N}$ is inspired by~\cite{fink_superiorized_2023} (see~\eqref{eq:projection} for $P_{\mathfrak{A}^{N}}$).
The global convergence to a fixed-point is guaranteed even by the  modification~\eqref{eq:perturbation} if $(\beta_{k})_{k=0}^{\infty}$ is summable and $(\d_{k})_{k=0}^{\infty}$ is bounded.

In this paper, we dare to propose to use more general $(\beta_{k})_{k=0}^{\infty}\subset\mathbb{R}_{+}$ which is not necessarily summable.
We call such a modification \emph{generalized superiorization}.
As shown in numerical experiments (see Section~\ref{sec:numerical_experiment}), the proposed generalized superiorization is effective to guide the sequence $(\x_{k})_{k=0}^{\infty}$ to the discrete set $\mathfrak{A}^{N}$.

\profile[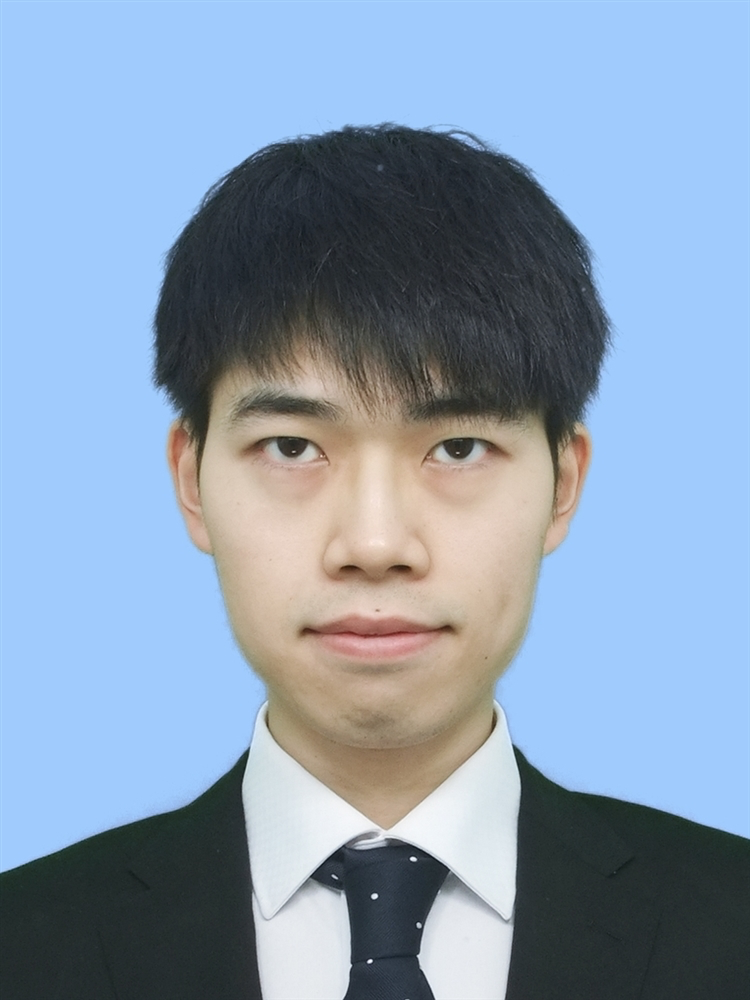]{Satoshi Shoji}{
  received the B.E. degree in information and communications engineering from Tokyo Institute of Technology in 2023.
  Currently, he has been working toward the M.E. degree with the Department of Information and Communications Engineering, Institute of Science Tokyo. 
  His current research interests are in signal processing and convex optimization.
}
\profile[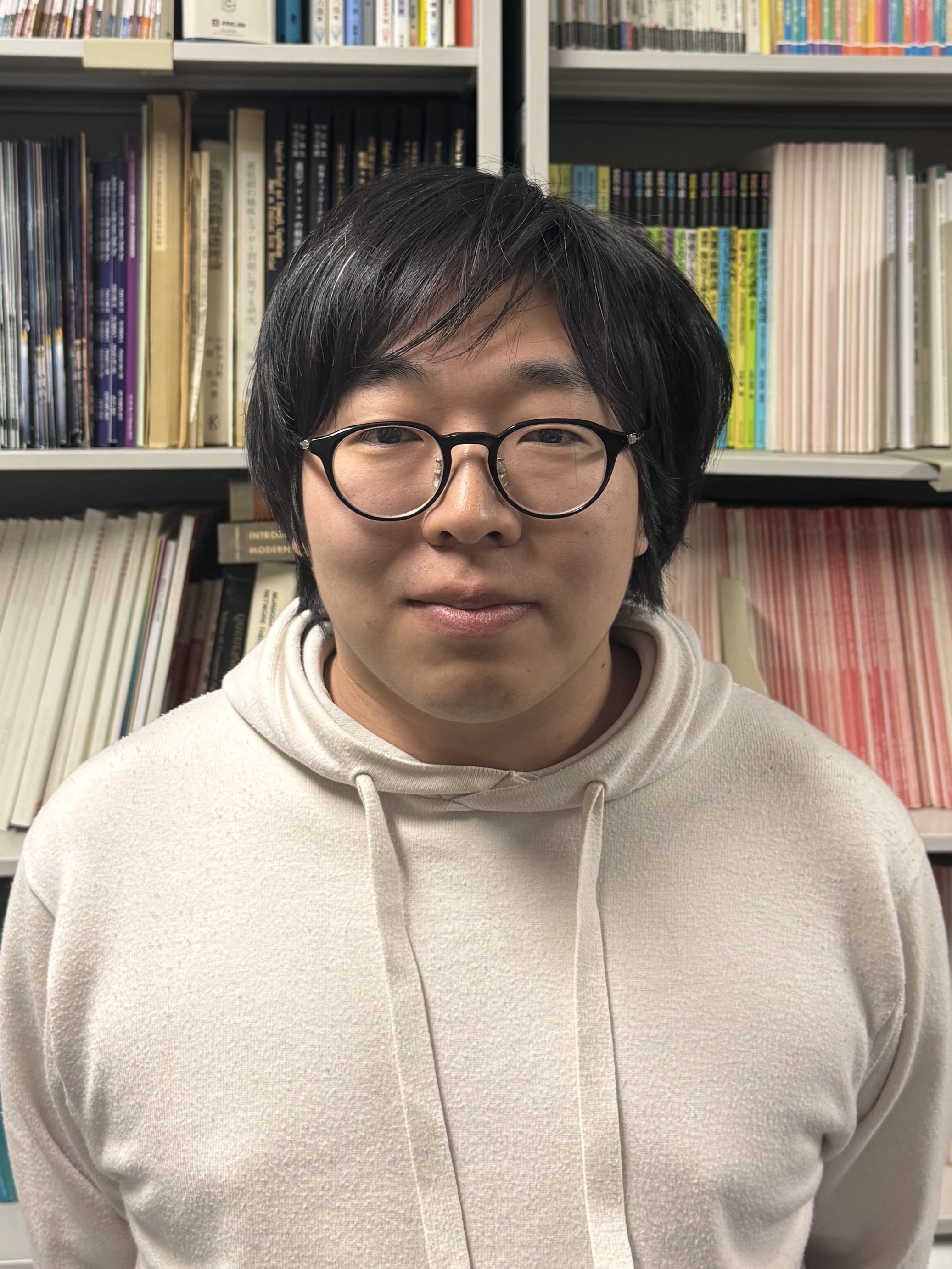]{Wataru Yata}{
  received the B.E. and M.E. degrees in information and communications engineering from the Tokyo Institute of Technology in 2021 and 2023 respectively. 
  Currently, he is a Ph.D. student in the Department of Information and Communications Engineering, Institute of Science Tokyo. 
  His current research interests are in signal processing and convex optimization.
}
\profile[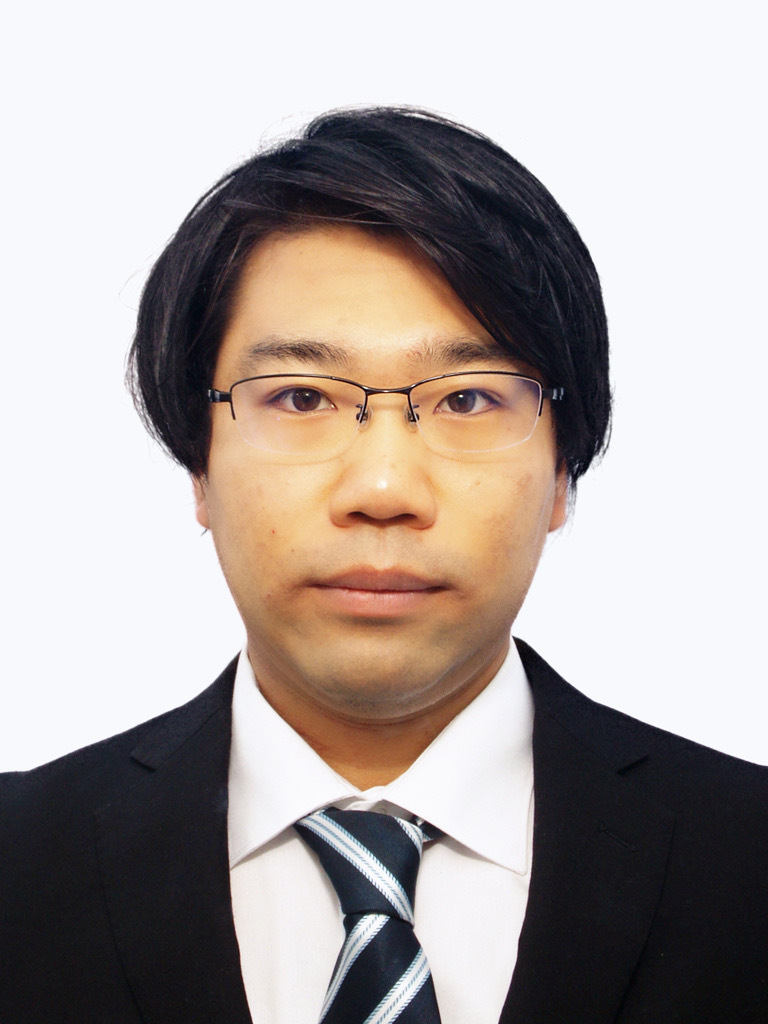]{Keita Kume}{
  received the B.E. degree in computer science and the M.E. and Ph.D. degrees in information and communications engineering from the Tokyo Institute of Technology in 2019, 2021, and 2024 respectively. 
  He is currently an assistant professor with the Dept. of Information and Communications Engineering, Institute of Science Tokyo. 
  He received Young Researcher Award from IEICE Technical Group on Signal Processing in 2021, and IEEE SPS Japan Student Conference Paper Award in 2024. 
}
\profile[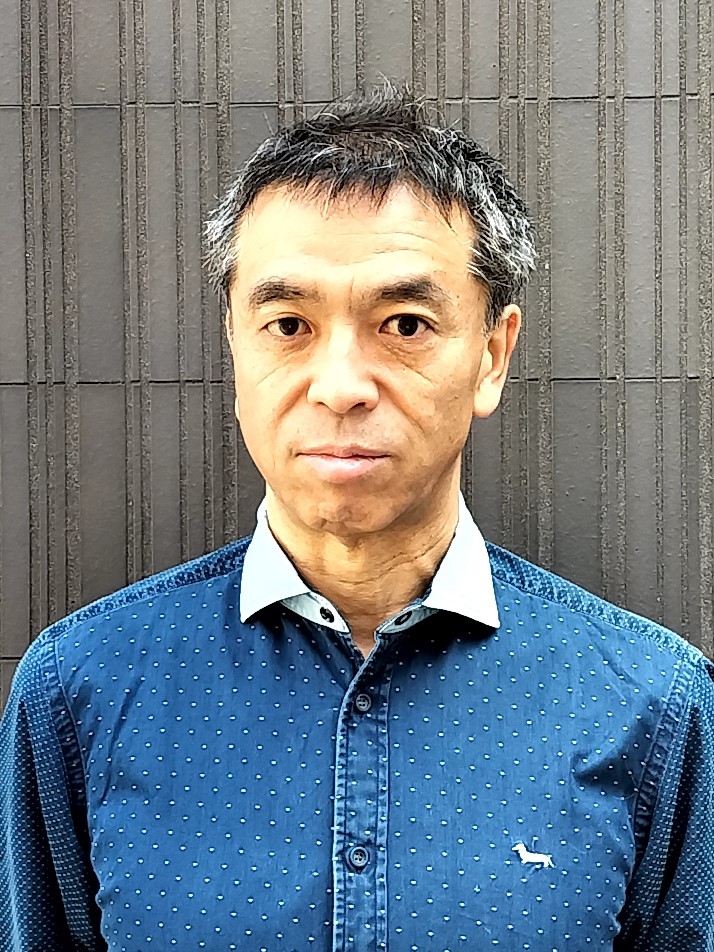]{Isao Yamada}{
  received the B.E. degree in computer science from the University of Tsukuba in 1985 and the M.E. and the Ph.D. 
  degrees in electrical and electronic engineering from the Tokyo Institute of Technology, in 1987 and 1990. 
  Currently, he is a professor with the Dept. of Information and Communications Engineering, Institute of Science Tokyo. 
  He has been the IEICE Fellow and IEEE Fellow since 2015.
}

\end{document}